\definecolor{myblue}{rgb}{0.153,0.322,0.706}
\newtheoremstyle{myplain}
{5pt}			
{5pt}			
{\normalsize}	
{}			
{\bfseries}		
{:}			
{.5em}		
{\thmname{#1}\thmnumber{ #2}\thmnote{~{(#3)}}}
\theoremstyle{plain}
\newtheorem{theorem}{Theorem}
\newtheorem{proposition}[theorem]{Proposition}
\newtheorem{lemma}[theorem]{Lemma}
\newtheorem{assumption}{Assumption}
\theoremstyle{myplain}
\newtheorem*{remark*}{Remark}
\newtheorem{example}{Example}
\newcommand{\be}{\begin{equation}}
\newcommand{\ee}{\end{equation}}
\newcommand{\ra}{\rightarrow}
\newcommand{\cM}{\mathcal{M}}
\newcommand{\cB}{\mathcal{B}}
\newcommand{\cA}{\mathcal{A}}
\newcommand{\reals}{\mathbb{R}}
\newcommand{\treals}{\bar\reals}
\newcommand{\p}{\partial}
\newcommand{\bX}{\mathbf{X}}
\newcommand{\tp}{\tilde p}
\newcommand{\hp}{\hat p}
\newcommand{\idf}{\mathbf{1}}
\newcommand{\EX}{\mathbb{E}}
\newcommand{\var}{\textrm{Var}}
\newcommand{\cN}{\mathcal{N}}
\newcommand{\cE}{\mathcal{E}}
\newcommand{\iid}{i.i.d.}
\newcommand{\eex}{\hfill$\blacksquare$}
\newcommand{\hm}{b}
\newcommand{\barm}{\bar m}
\newcommand{\bB}{\bar B}
\begin{document}

\title{Efficient large deviation estimation based on importance sampling}

\author{Arnaud Guyader}
\email{arnaud.guyader@upmc.fr}
\affiliation{\mbox{Laboratoire de Probabilit\'es, Statistique et Mod\'elisation, Sorbonne Universit\'e, Paris, France}}
\affiliation{CERMICS, \'Ecole des Ponts ParisTech, France}

\author{Hugo Touchette}
\email{htouchet@alum.mit.edu, htouchette@sun.ac.za}
\affiliation{Department of Mathematical Sciences, Stellenbosch University, Stellenbosch, South Africa}

\date{\today}

\begin{abstract}
We present a complete framework for determining the asymptotic (or logarithmic) efficiency of estimators of large deviation probabilities and rate functions based on importance sampling. The framework relies on the idea that importance sampling in that context is fully characterized by the joint large deviations of two random variables: the observable defining the large deviation probability of interest and the likelihood factor (or Radon--Nikodym derivative) connecting the original process and the modified process used in importance sampling. We recover with this framework known results about the asymptotic efficiency of the exponential tilting and obtain new necessary and sufficient conditions for a general change of process to be asymptotically efficient. This allows us to construct new examples of efficient estimators for sample means of random variables that do not have the exponential tilting form. Other examples involving Markov chains and diffusions are presented to illustrate our results.
\end{abstract}

\keywords{Rare events, importance sampling, large deviations, asymptotic efficiency}

\maketitle

\section{Introduction}
\label{secintro}

Estimating the probability of rare events or fluctuations in random systems is an important problem arising in many applied fields, including engineering \cite{shwartz1995}, where a rare event might represent a design failure, or chemistry, where changes between chemical species or polymer states arise from rare transitions in a free energy landscape \cite{wales2004,eijnden2004,lelievre2010}. In physical systems, the probability of rare fluctuations often has a large deviation form \cite{ellis1985,dembo1998,hollander2000,touchette2009}, owing to the interaction of many particles or the effect of thermal noise. In this case, the estimation of probabilities reduces to the estimation of \emph{rate functions}, which determine the rate of decay of  probabilities as a function of some parameter (e.g., volume, particle number, integration time or temperature) \cite{touchette2009}. 

Rate functions are also important on their own, as they determine for equilibrium and nonequilibrium systems the onset of static and dynamical phase transitions \cite{touchette2009,garrahan2007,garrahan2010,espigares2013,bunin2013,tsobgni2016b,lazarescu2017}, fluctuation symmetries \cite{gallavotti1995,kurchan1998,lebowitz1999,harris2007}, and in some cases the response to external perturbations \cite{baiesi2009}. As a result, they have been actively studied recently, especially for nonequilibrium systems describing particle transport processes \cite{derrida2007,bertini2007,harris2013,garrahan2018} and diffusing particles \cite{sekimoto2010,seifert2012,seifert2018,ciliberto2017}, among other physical systems.

Traditionally, two statistical methods have been used to numerically estimate or sample large deviation probabilities: 1) \emph{splitting} \cite{cerou2007,dean2009,cerou2011,cerou2019,cerou2019b,brehier2019}, also known as cloning in physics \cite{grassberger2002,giardina2006,lecomte2007a,angeli2019}, which works by replicating events that ``go in the direction'' of the rare event of interest, and 2) \emph{umbrella} or \emph{importance sampling} (IS) \cite{torrie1977,juneja2006,asmussen2007,bucklew2004}, which works by modifying the process simulated so as to increase the likelihood of the event of interest and, ideally, to render it typical. The probability of that event is then computed via the likelihood factor or Radon--Nikodym derivative, which is the bridge connecting probabilities in the original and the modified processes.

In this paper, we consider the latter method with the aim of providing a complete framework for understanding the efficiency of IS when used to estimate large deviation probabilities and rate functions. For this purpose, we first review in Sec.~\ref{secis} the basis of IS as applied to large deviation estimation, and then present the main results known about the efficiency of IS, which we illustrate with simple examples involving sums of random variables.

Most of these results were obtained by Bucklew and Sadowsky \cite{bucklew2004,sadowsky1989,sadowsky1990,bucklew1990b} (see also \cite{schlebusch1993,dieker2005,efron1968}) and are based on two basic but important observations. The first, found in any presentation of IS, is that, although it is necessary for an efficient change of process or ``measure'' in IS to render rare events typical, this is not sufficient, as we must also ensure that the IS probability estimator arising from the change of process has good variance properties \cite{asmussen2007}. The second observation, which is specific to large deviations, is that the notion of a ``good'' or an ``efficient'' change of process must be adapted to the exponentially decaying form of probabilities that we are trying to estimate. Thus, instead of seeking changes of process that achieve zero variance or a bounded relative error, which are too prohibitive, we must look for changes of process whose second moment decays exponentially with the largest rate possible \cite{bucklew2004}. This leads to the notion of \emph{logarithmic efficiency} or \emph{asymptotic efficiency}, defined in a precise way in the next section.

Following this review part of the paper, we present in Sec.~\ref{secresults} a new framework for determining and understanding whether a change of process is asymptotically efficient or not. The framework is itself based on large deviation theory and draws on the idea, recently put forward by one of us \cite{touchette2018}, that changes of processes and measures in general are completely characterized in the context of large deviation probabilities by the joint rate function of two random variables: 1) the random variable defining the rare event of interest, and 2) the Radon--Nikodym derivative, seen as a real random variable with respect to either the original or the modified process. 

The resulting framework recovers results previously known about the efficiency of IS for large deviation estimation \cite{bucklew2004,sadowsky1989,sadowsky1990,bucklew1990b,schlebusch1993}, but also extends them in two important ways. First, most of the results that have been derived in the past and that are now used in practice apply to a specific change of process known as the exponential tilting, the exponential family or the Esscher transform. By contrast, our formalism can be applied in principle to any change of process to determine whether that change is efficient and, if not, to understand in a clear way why this is so. Second, most works provide sufficient but not necessary conditions for asymptotic efficiency. For the exponential tilting, these conditions are based on the existence of so-called dominating points, related essentially to the convexity of rate functions and the convexity of the rare event set. They can be checked in many applications of interest, leading to efficient IS simulations, but they leave completely open the possibility that changes of process other than the exponential tilting can be asymptotically efficient. Indeed, the full characterization of such changes is still an open problem in IS as applied to large deviation estimation.

Here, we solve this problem by providing in Sec.~\ref{secresults} necessary and sufficient conditions for a change of process to be asymptotically efficient. We use these conditions in  Sec.~\ref{secexamples} to revisit the efficiency of the exponential tilting, and then illustrate them with explicit examples of large deviations involving independent random variables and discrete-time Markov chains. From these, we also construct two intriguing examples of IS estimators that do not have the exponential tilting form and yet are asymptotically efficient, opening the way for more to be discovered. Applications to stochastic differential equations are finally presented to illustrate how our results can be applied beyond discrete-time models to estimate the large deviations of continuous-time Markov processes, commonly used as models of nonequilibrium systems.

\section{Importance sampling of large deviations}
\label{secis}

We define in this section the rare event or large deviation probabilities that we are interested in estimating using importance sampling and define the notion of asymptotic efficiency, used classically in the context of large deviations. Most of the results reviewed can be found in Bucklew's book \cite{bucklew2004}, which follows the prior works~\cite{sadowsky1989,sadowsky1990,bucklew1990b}.

\subsection{Large deviation probabilities}

The rare events that we consider are defined in a general way by considering two ingredients:
\begin{itemize}
\item A sequence $\bX_n=(X_1,X_2,\ldots,X_n)$ of random variables taking values on some space $\Lambda_n$ with probability measure $P_n$;
\item A function $M_n: \Lambda_n \ra \cM$, referred to as an \emph{observable}.
\end{itemize}

Concretely, $\bX_n$ represents the state of some system or process, $P_n$ is the probabilistic model (the prior measure) that we have of that system, while $M_n$ is some function of that system that can be observed or measured in some way. For example, $\bX_n$ can be the microstate of an equilibrium system of $n$ classical particles, in which case $P_n$ is the ensemble (microcanonical, canonical, etc.) chosen to ``weight'' the microstates and $M_n$ can represent the particles' energy. The system can also be a stochastic process, e.g., a Markov chain in discrete time, with $\bX_n$ representing its path or trajectory over $n$ time steps, $P_n$ the probability measure over the trajectories, which defines the process, and $M_n$ a function of the trajectories.

For simplicity, we consider the case where $X_i\in \reals^d$, $d\geq 1$, so that $\Lambda_n= (\reals^d)^n$, and $M_n(\bX_n)\in\reals^D$, $D\geq 1$, so that $\cM\subset\reals^D$. More general spaces can be used for both the process and the observable at the expense of introducing more complicated notations. For example, it is common in large deviation theory to consider $\cM$ to be a Polish space to handle cases where $M_n$ takes values in a function space, e.g., if $M_n$ is an empirical distribution, as in Sanov's theorem \cite{dembo1998}. 

Here, we limit ourselves to a setting where both $\bX_n$ and $M_n$ are finite-dimensional random variables, so as to simplify the notations. In fact, most of our results will be illustrated by considering simple examples where $M_n$ is a sample mean of real random variables
\be
M_n = \frac{1}{n}\sum_{i=1}^n X_i,
\ee
so that both $X_i\in\reals$ and $M_n\in\reals$. From these, it is easy to generalize to other processes and observables, including observables defined for Markov chains or even continuous-time Markov processes, as shown in Sec.~\ref{secexamples}.

Given $\bX_n$, $P_n$ and $M_n$, we are interested in estimating the probability
\be
p_n\equiv P_n(M_n\in B),
\label{eqldprob1}
\ee
where $B$ is some measurable subset of $\cM$ and $P_n$ denotes, with a slight abuse of notation, the probability measure extended to $M_n$. As a particular case, we can set  $B=[m,m+d m]$ to obtain, as is common in physics, the probability distribution of $M_n$ with ``discretization'' $d m$. Our basic assumption is that this probability has a large deviation form with $n$, meaning that it decays exponentially with $n$ and so describes a rare event that becomes rarer as $n$ gets larger. 

This decay of probabilities is encountered in many applications and can be expressed mathematically in different ways, depending on the level of generality adopted. Here, we say that $P_n(M_n\in B)$ has a large deviation form or satisfies, more precisely, the \emph{large deviation principle} (LDP) if there exists a function $I_P:\cM\ra[0,\infty]$ such that
\be
\lim_{n\ra\infty} -\frac{1}{n}\log P_n(M_n\in B)= I_P(B),
\label{eqldp1}
\ee
where
\be
I_P(B)=\inf_{m\in B} I_P(m).
\label{eqldpmin1}
\ee
The function $I_P$ is called the \emph{rate function} of $M_n$ and is required to be lower semi-continuous, meaning that it has closed level sets. We assume, as is common in large deviation theory, that $I_P$ is in fact a \emph{good} rate function, meaning that it has compact level sets. This simplifies the analysis of large deviations, as it implies that the infimum in \eqref{eqldpmin1} is attained on at least one point in the closure $\bar B$ of $B$ \cite[Sec.~1.2]{dembo1998}. It also means for $M_n\in\reals^D$ that $I_P$ is \emph{coercive}, that is, $I_P(m)\ra\infty$ as $\|m\|\ra\infty$. The first assumption of our work is thus:

\begin{assumption}\label{apzicjicj}
\label{hyp1} The observable $M_n$ satisfies the LDP, in the sense of \eqref{eqldp1}, with good rate function $I_P$ such that $I_P(B)<\infty$.
\end{assumption}

The limit in \eqref{eqldp1} is actually a simplification of the standard definition of the LDP found in the large deviation literature involving upper and lower bounds (see, e.g., \cite[Sec.~1.2]{dembo1998}). In using the definition above, we assume that $B$ is a ``good'' set, called technically an $I$-continuity set \cite[Sec.~1.2]{dembo1998}, such that 
\be
\inf_{m\in B} I_P(m) = \inf_{m\in B^\circ} I_P(m) = \inf_{m\in \bar B} I_P(m),
\ee
where $B^\circ$ represents the interior of $B$. In this case, the upper and lower bounds appearing in the standard definition of the LDP are the same, yielding the simple limit \eqref{eqldp1}. This is a technical point, which is not important for physical or numerical applications. 

Concretely, the LDP means again that the leading behavior of the distribution of $M_n$ is a decaying exponential in $n$, with corrections in the exponential that are smaller than linear in $n$. This property is commonly summarized in large deviation theory by the asymptotic notation \cite{ellis1985,dembo1998,hollander2000,touchette2009} 
\be
P_n(M_n\in [m,m+dm])\asymp e^{-n I_P(m)},
\ee
and applies whenever $I_P(m)>0$. When $I_P(m)=0$, the distribution of $M_n$ either decays around $m$ slower than exponentially in $n$ or increases with $n$ around that point. In many applications, $I_P(m)$ has only one zero, denoted in the following by $m^*$, so the latter case applies, yielding the law of large numbers
\be
\lim_{n\ra\infty} P_n(M_n \in [m^*,m^*+dm])=1
\ee
or, more generally,
\be
\lim_{n\ra\infty} P_n(M_n\in B)=1
\ee
if $m^*\in B^\circ$. See \cite{touchette2009} and references therein for cases where more than one zeros occur.

Probabilities having the LDP form are encountered in many applications of interest, including queues \cite{shwartz1995}, hypothesis testing \cite{hollander2000}, and noisy detection systems \cite{dembo1998}. In physics, the LDP is the basis of thermodynamics and describes, more generally, the fluctuations of equilibrium systems in the thermodynamic limit of large systems, which is a large deviation limit (see \cite{touchette2009} for a review). The same exponential form of probabilities also arises in the context of nonequilibrium systems when considering systems perturbed by a small noise \cite{cottrell1983,freidlin1984,graham1989,luchinsky1998} as well as time-integrated functions or observables of Markov processes \cite{touchette2017} modelling, for example, the fluctuating dynamics of mesoscopic diffusive systems \cite{sekimoto2010,seifert2012,seifert2018} or many-particle transport processes \cite{derrida2007,bertini2007,harris2013}. In the latter case, the long-time limit is often combined with a low-noise limit describing the residual noise associated with a macroscopic (or hydrodynamic) limit where infinitely many interacting particles evolve in time over a substrate (e.g., a lattice) with boundary reservoirs \cite{bertini2015b}. 

\subsection{Importance sampling}

The simplest way to numerically estimate $p_n$ in (\ref{eqldprob1}) is to sample $M_n$ directly by generating multiple copies $\bX_n^{(i)}$, $i=1,2,\ldots,N$, of the state from the probability measure $P_n$ and by then counting the fraction of corresponding observable values $M_n^{(i)}=M_n(\bX_n^{(i)})$ that fall in $B$:
\be
\tp_n^N \equiv \frac{1}{N}\sum_{i=1}^N \idf_{M_n^{(i)}\in B}.
\ee
Since the random variables $\idf_{M_n^{(i)}\in B}$ are independent and identically distributed (\iid) Bernoulli with parameter $p_n$, it is easy to see that the estimator above, referred to as the direct or \emph{crude Monte Carlo} (CMC) estimator, is unbiased in the sense that
\be
\EX_P[\tp_n^N] = p_n,
\ee
where $\EX_P[\cdot]$ denotes the expectation with respect to $P_n$. Moreover, its variance is
\be
\var_P(\tp_n^N)= \EX_P[(\tp_n^N -p_n)^2] = \frac{p_n(1-p_n)}{N}
\ee
and so decreases with $N$. However, since $p_n$ becomes exponentially small as $n\ra\infty$, the actual number of samples needed to accurately estimate this probability should be determined from the estimator's error relative to $p_n$, which can be approximated by
\be
\frac{\sqrt{\var_P(\tp_n^N)}}{p_n}\approx \frac{1}{\sqrt{N p_n}}.
\ee
As a result, we see that $N$ must grow exponentially as $N\sim p_n^{-1} \asymp e^{nI_P(B)}$ in order for the relative error to be bounded in $n$, which is unachievable in practical simulations.

To overcome this problem, we resort to IS which works by sampling $M_n$ not according to $P_n$ but to a different probability measure $Q_n$, chosen to increase the likelihood that $M_n\in B$ \cite{juneja2006,asmussen2007,bucklew2004}. To be consistent, $Q_n$ must have support on all states that ``hit'' the event $\{M_n\in B\}$ with respect to $P_n$, which translates mathematically to requiring that $P_n\idf_{M_n\in B}$, the restriction of $P_n$ on $\{M_n\in B\}$, be absolutely continuous with respect to $Q_n\idf_{M_n\in B}$ \cite{asmussen2007}. Here, we assume for simplicity that $Q_n$ has the same support as $P_n$, so the two are equivalent in the sense of absolute continuity.

To estimate $p_n$, we now generate copies $\bX_n^{(i)}$, $i=1,2,\ldots,N$, of the states according to $Q_n$,\footnote{We could identify the new copies with a different symbol, say $\tilde\bX_n^{(i)}$, since they are generated from a different distribution and so represent a different random variable. Here, we keep $\bX_n^{(i)}$ but always specify the distribution, $P_n$ or $Q_n$, used. The same applies to the observable.} compute the associated observable values $M_n^{(i)}$, $i=1,2,\ldots,N$, and construct the IS estimator as
\be
\hp_n^N \equiv\frac{1}{N}\sum_{i=1}^N L_n^{(i)}\, \idf_{M_n^{(i)}\in B},
\label{eqisest1}
\ee
where $L_n^{(i)}=L_n(\bX_n^{(i)})$ and
\be
L_n \equiv \frac{dP_n}{dQ_n}
\ee
is the \emph{Radon--Nikodym derivative} of $P_n$ with respect to $Q_n$. This derivative, also known as the likelihood factor, is included to ensure that the IS estimator remains unbiased, that is,
\be
\EX_Q[\hp_n^N] = \int_{\Lambda_n} dQ_n(\bX_n)\, \frac{dP_n}{dQ_n}(\bX_n)\, \idf_{M_n(\bX_n)\in B}=\EX_P[\idf_{M_n\in B}] = p_n.
\label{equnb1}
\ee
The variance, however, is modified to
\be
\var_Q(\hp_n^N) = \EX_Q[(\hp_n^N -p_n)^2] =\frac{\EX_Q[L_n^2 \idf_{M_n\in B}]-p_n^2}{N},
\label{eqisvar1}
\ee
which obviously depends on $Q_n$, leading to the relative variance
\be
\frac{\var_Q(\hp_n^N)}{p_n^2}=\frac{\EX_Q[L_n^2 \idf_{M_n\in B}]}{Np_n^2}-\frac{1}{N}.
\label{eqisrelvar1}
\ee

The problem of IS is to determine which $Q_n$ is \emph{optimal}, that is, which choice achieves the smallest variance or relative variance, ideally smaller than the CMC variance obtained with $Q_n=P_n$, given some design or application-specific constraints on the class of $Q_n$ that can be simulated. 

If no constraints are imposed, then it is known that there is a zero-variance change of measure given by the restriction of $P_n$ on the event of interest, that is, $Q_n\propto P_n\idf_{M_n\in B}$. This measure, known in physics as the microcanonical ensemble \cite{touchette2015}, cannot be simulated in practice, however, because it involves a hard-to-implement constraint and, more importantly, because its normalization involves the probability that we are trying to estimate. As a result, other choices must be considered that are either approximations of the zero-variance solution (following, e.g., cross-entropy methods \cite{rubinstein2004}) or that are optimal or efficient according to some bounding criterion on the variance or relative variance. For the purpose of estimating large deviations, a common criterion used is the \emph{asymptotic (or logarithmic) efficiency} \cite{bucklew2004,juneja2006,asmussen2007}, discussed next. 

\subsection{Asymptotic efficiency}

The notion of asymptotic efficiency is based on the relative variance of the IS estimator, as given by \eqref{eqisrelvar1}. Since $p_n$ scales exponentially with $n$, so does generally the second moment $\EX_Q[L_n^2 \idf_{M_n\in B}]$ with a scaling exponent given by
\be
R_Q(B) \equiv\lim_{n\ra\infty} -\frac{1}{n} \log \EX_Q[L_n^2 \idf_{M_n\in B}].
\label{eqde1}
\ee
We will provide in Section \ref{lkzx} specific assumptions that ensure the existence of this limit (see Lemma \ref{lem1}). For now, we note that the LDP for $p_n$, combined with the positivity of the variance in \eqref{eqisvar1}, implies
\be
R_Q(B)\leq 2I_P(B).
\label{eqeff1}
\ee
When equality is achieved, we say that the IS measure $Q_n$ or, more precisely, the sequence $(Q_n)_{n>0}$ of IS measures, is \emph{asymptotically efficient}. This criterion is also referred to in the literature as \emph{logarithmic efficiency} or \emph{asymptotic optimality}.

It can be checked that CMC achieves $R_Q(B) = I_P(B)$ and so it is not asymptotically efficient, as expected, while the zero-variance choice $Q_n\propto P_n\idf_{M_n\in B}$ is asymptotically efficient, since it has zero variance for all $n$. By comparison, an asymptotically efficient $Q_n$ does not necessarily have zero variance -- this is again too restrictive for our purpose -- but is such that the term $\EX_Q[L_n^2 \idf_{M_n\in B}]$ in the variance decays with the fastest exponential rate equal to $2I_P(B)$. When this happens, the ratio $\EX_Q[L_n^2 \idf_{M_n\in B}]/p_n^2$ in \eqref{eqisrelvar1} does not grow exponentially with $n$, which means that the number $N_n$ of samples needed to have a fixed relative variance grows sub-exponentially in $n$. Hence, if $Q_n$ is asymptotically efficient, we have
\be
\lim_{n\ra\infty}\frac{1}{n}\log N_n =0.
\label{eqaelim1}
\ee
This is often taken as a definition of asymptotic efficiency.

The asymptotic efficiency of $Q_n$ is studied in most works \cite{bucklew2004,sadowsky1989,sadowsky1990,bucklew1990b,schlebusch1993,dieker2005} for the \emph{exponential tilting} or \emph{exponential family}, defined by
\be
Q_{n}(d\bX_n) = \frac{e^{n\langle k, M_n\rangle} P_n(d\bX_n)}{\EX_P[e^{n\langle k,M_n\rangle}]},
\label{eqexptilt1}
\ee
where $k\in\reals^D$ is a vector having the same dimension as $M_n$ and $\langle \cdot,\cdot\rangle$ is the standard scalar product in $\reals^D$. We will not review all the results known about this change of measure, which also corresponds in physics to the canonical ensemble \cite{touchette2015}. For our purposes, two results are worth noting. The first, proved in \cite{sadowsky1990}, states that the exponential tilting is asymptotically efficient if $B$ has a dominating point (see \cite[Sec.~5.2]{bucklew2004} for a definition of this concept), which holds essentially when $I_P(m)$ is a convex function and $B$ is a convex set. In that case, the value $k\in\reals^D$ that must be chosen in \eqref{eqexptilt1} to achieve efficiency satisfies
\be
\nabla \lambda_P(k)=\mu,
\label{eqdompt1}
\ee 
where $\mu\in\reals^D$ is the dominating point of $B$ and $\lambda_P(k)$ is the \emph{scaled cumulant generating function} (SCGF), defined by
\be
\lambda_P(k)=\lim_{n\ra\infty} \frac{1}{n} \log \EX_P[e^{n\langle k, M_n\rangle}],\qquad k\in\reals^D.
\label{eqscgf1}
\ee

We refer to \cite[Thm.~2]{sadowsky1990} for the complete statement of this result, including the conditions underlying it, and \cite[Thm.~3]{bucklew1990b} for its application to Markov chains. We give some examples next to illustrate the relation \eqref{eqdompt1}, which comes from the application of the G\"artner--Ellis theorem and the fact, more precisely, that the rate function is given, according to this theorem, by the Legendre--Fenchel transform of the SCGF when the latter is differentiable; see \cite[Sec.~4.4]{touchette2009} for more details. In fact, the conditions underlying the efficiency of the exponential tilting are overall nothing but the conditions of the G\"artner--Ellis theorem. 

The second result worth noting, also found in \cite{sadowsky1990}, is that the sample size $N_n$ required for the IS estimator $\hp_n^N$ to have a bounded relative variance  grows according to
\be
N_n\asymp e^{n[2I_P(B)-R_Q(B)]}
\label{eqsamplesize1}
\ee
in the limit $n\ra\infty$. This essentially follows from the result \eqref{eqisrelvar1} for the relative variance of the IS estimator, in which the term $1/N$ can be neglected. In particular, $N_n\asymp e^{nI_P(B)}$ for CMC, as seen before, while $N_n\asymp e^{n0}$ if $Q_n$ is asymptotically efficient, consistently with the limit \eqref{eqaelim1} above and the fact again that constant relative variance is achieved in this case by increasing $N_n$ sub-exponentially with $n$. In some cases, it turns out in fact that bounded relative variance is achieved with $N_n=O(\sqrt{n})$ when $Q_n$ is asymptotically efficient \cite[Sec.~5.4]{bucklew1990b}, leading to a drastic increase in simulation efficiency.

\subsection{Examples}

We close this section by illustrating the theory developed so far with simple examples involving the sample mean
\be
M_n=\frac{1}{n}\sum_{i=1}^n X_i
\ee
of a sequence $\bX_n=(X_1,X_2,\ldots, X_n)$ of \iid\ random variables. The examples are presented briefly, since they appear in  other works (see, e.g., \cite{bucklew2004}), and will be used again in the next sections to illustrate our new framework. More involved applications of IS related to large deviations have been considered in the context of random graphs \cite{engel2004,hartmann2011b,dewenter2015}, finance \cite{guasoni2008}, escape pathways \cite{eijnden2012}, and nonequilibrium systems \cite{kundu2011,klymko2018,whitelam2018c,jacobson2019}, among other topics.

\begin{example}
\label{exgauss1}
We first consider a sample mean of standard Gaussian random variables, so that $P_n$ is the product measure $\cN(0,1)^{\otimes n}$, and look for the probability that $p_n=P_n(M_n\geq 1)$ by choosing $B=[1,\infty)$. This probability can be found directly from the fact that $M_n\sim\cN(0,1/n)$, leading to $p_n\asymp e^{-n/2}$. Alternatively, we can use Cram\'er's theorem \cite[Sec.~2.2]{dembo1998} to find that the SCGF is $\lambda_P(k)=k^2/2$, yielding $I_P(m) = m^2/2$ by Legendre--Fenchel transform and, therefore,
\be
I_P(B) = \inf_{m\geq 1} I_P(m) = \frac{1}{2}.
\label{eqgaussex1}
\ee
This shows that the probability $P_n(M_n\geq 1)$ is dominated exponentially by the probability that $M_n$ is close to $1$, so only the boundary of $B$ plays a role, as is common with large deviations.

With this result, it is natural to choose the IS measure to be a sequence of \iid\ Gaussian random variables centered at $1$, so that $Q_n=\cN(1,1)^{\otimes n}$. It is clear that this change of measure makes $M_n=1$ typical. Moreover, it can be checked by calculating $R_Q(B)$ directly from its definition \eqref{eqde1} that this measure is asymptotically efficient with
\be
R_Q(B)=1=2I_P(B).
\ee
Alternatively, one can notice, following \cite[Ex.~5.2.1]{bucklew2004}, that the dominating point of $B=[1,\infty)$ is $\mu=1$, which leads, with the equation $\lambda_P'(k)=\mu$, to $k=1$. From \eqref{eqexptilt1}, the exponentially-tilted measure that is asymptotically efficient is then
\be
Q_n(d\mathbf{X}_n)=\frac{e^{nM_n}P_n(d\mathbf{X}_n)}{\EX_P[e^{nM_n}]}
=\left\{\prod_{i=1}^{n}\frac{e^{-(X_i-1)^2/2}}{\sqrt{2\pi}}dX_i\right\},
\ee
which is indeed the product measure $\mathcal{N}(1,1)^{\otimes n}$. Note that the Radon--Nikodym derivative of $P_n$ with respect to $Q_n$ is
\be
L_n=L_n(\mathbf{X}_n)=\frac{dP_n}{dQ_n}(\mathbf{X}_n)=e^{-n(M_n-1/2)}.
\ee
Therefore, in the end, the IS estimator that is asymptotically efficient is
\be
\hat{p}_n^N=\frac{1}{N}\sum_{i=1}^{N}e^{-n(M_n^{(i)}-1/2)}\mathbf{1}_{M_n^{(i)}\geq 1},
\label{eqisg1}
\ee
where $\{M_n^{(i)}\}_{i=1}^N$ is an \iid\ sample generated with $Q_n$.\eex
\end{example}

This example can be generalized, obviously, to any $B=[b,\infty)$, $b>0$, by choosing $k$ in the exponential tilting such that $\lambda_P'(k)=b$ or, equivalently, $k=I'(b)$ by Legendre duality (see Sec.~3.5 of \cite{touchette2009}). This gives $Q_n=\cN(b,1)^{\otimes n}$ as the modified measure that changes the event $\{M_n\geq b\}$ from being rare to being typical. This is asymptotically efficient, as $b$ is the dominating point of $B$. Choosing $Q_n$ to concentrate \emph{inside} $B$ rather than at its boundary, that is, $Q_n=\cN(b',1)^{\otimes n}$ with $b'>b$, is not asymptotically efficient, although it does make $\{M_n\geq b\}$ typical.

As a variation of this example, we change the distribution of the $X_i$'s to an exponential distribution. Other distributions, such as Bernoulli, uniform or Laplace, are treated in \cite{bucklew2004}.

\begin{example}
\label{exexpsm1}
Let the sequence $X_1,X_2,\ldots,X_n$ of \iid\ random variables be distributed according to the exponential distribution $\cE(1)$ with parameter $1$, so that $P_n=\cE(1)^{\otimes n}$. We consider again the sample mean $M_n$ as an observable and $B=[b,\infty)$ with $b>1$, so that $p_n=P_n(M_n\in B)$ is a rare event such that \cite{touchette2009}
\be
I_P(B)= b-1-\log b.
\label{exrateexp1}
\ee
As shown in \cite[Ex.~5.2.6]{bucklew2004}, the asymptotically efficient exponential tilting associated with this problem is the product measure $Q_n=\mathcal{E}(1/b)^{\otimes n}$ of exponential distributions with mean $\EX_Q[X_i] = \EX_Q[M_n]=b$. This follows by noting that $\lambda_P(k)=-\log(1-k)$ for $k<1$, from which we find $k=1-1/b$ by solving $\lambda_P'(k)=b$. Equivalently, $k=I'(b) = 1-1/b$.\eex
\end{example}

The last example is a classic one in IS showing that the exponential tilting is not always asymptotically efficient, in particular, when dealing with nonconvex sets $B$.

\begin{example}
\label{exnonconv1}
Consider, as in the first example, a sequence of \iid\ normal random variables with the same $P_n$ and $Q_n$, but now take $B$ to be the union of two disjoint sets, namely, $B=(-\infty,-b]\cup [1,\infty)$ with $b>1$, so that the probability to estimate is
\be
p_n = P_n(M_n\leq -b \textrm{ or } M_n\geq 1).
\ee
From Example~\ref{exgauss1}, we still have $I_P(B)=1/2$, since $M_n\leq -b$ is rarer than $M_n\geq 1$ for $b>1$. The calculation of $R_Q(B)$ for this case can be found in~\cite[Ex.~5.2.13]{bucklew2004}. The result is $R_Q(B)=1$ if $b\geq 3$ and $R_Q(B) <1$ otherwise, implying that $Q_n=\cN(1,1)^{\otimes n}$ is not asymptotically efficient if $b\in (1,3)$. Note that this cannot be inferred from the dominating point result, since $B$ is nonconvex and, as such, has no dominating point for any $b$. \eex
\end{example}

The non-efficiency of $Q_n$ in the last example is due to the fact that, although the probability that $M_n \leq -b$ is exponentially small, this rare event leads to exponentially large values of the likelihood factor in \eqref{eqisg1} that dramatically increase the variance of the IS estimator. In fact, it can be checked (see again \cite[Ex.~5.2.13]{bucklew2004}) that for values of $b$ close to $-1$, $R_Q(B)$ becomes negative, so that the second moment of the IS estimator can diverge with $n$ as a result of the accumulation of many different likelihood factors that are exponentially large with $n$.

Other examples involving nonconvex sets $B$ have been studied, in particular, by Glasserman and Wang \cite{glasserman1997}, who show that IS based on the exponential tilting can perform worse than CMC and can even lead to $R_Q(B)=-\infty$, so one should be cautious about the generally-accepted idea that a good choice of IS measure is one that makes a rare event typical. 

To avoid the case where $R_Q(B)=-\infty$, which is clearly not efficient, we assume here that $R_Q(B)>-\infty$. In fact, for the results to come, we need a slightly stronger assumption:

\begin{assumption}
\label{hyp2} For some $\delta>0$,
\be
\limsup_{n\ra\infty} \frac{1}{n} \log \EX_Q[L_n^{2+\delta} \idf_{M_n\in \bar B}] < \infty.
\label{eqass2}
\ee
\end{assumption}

This condition implies with H\"older's inequality that, if $R_Q(B)$ exists, then $R_Q(B)>-\infty$ and, therefore, with Assumption~\ref{hyp1} and \eqref{eqeff1}, that $R_Q(B)$ is finite. It also ensures overall that we are not in a situation where the second moment of $L_n$ has the correct exponential scaling in $n$, but its moment of order $2+\delta$ behaves super-exponentially. 

This type of ``Lyapunov'' condition often appears in large deviation theory in the context of Varadhan's integral lemma (see \cite[Thm.~4.3.1]{dembo1998}). Other weaker conditions can be defined (see \cite[Thm.~4.3.1]{dembo1998} and \cite{dieker2005} in the context of IS), although they might be more difficult to check.  In our case, we will see in the next section that the limit above can be re-expressed more naturally in terms of the steepness or coercivity of a rate function involving $M_n$ and $L_n$ (see Assumption $2'$).

\section{Joint LDP approach to asymptotic efficiency}
\label{secresults}

The theory presented in the previous section can be applied to sample large deviations in an efficient way not just for \iid\ sample means, as illustrated, but also for functionals of Markov chains, jump processes, and diffusions. One problem of the theory, however, is that it focuses  almost exclusively on the exponential tilting, leaving aside the possibility that other changes of measure might also be asymptotically efficient. Moreover, the efficiency conditions that we have for the exponential tilting, based on the existence of a dominating point, are only sufficient conditions that cannot be applied to nonconvex problems, as illustrated in Example~\ref{exnonconv1}. In principle, one can determine the efficiency of an arbitrary $Q_n$ by calculating the exponent $R_Q(B)$ \cite{bucklew2004}, but this is very difficult to carry out in practice beyond the case of \iid\ sample means and convex $B$.

We address these issues in this section by providing new conditions for a general change of measure $Q_n$ to be asymptotically efficient. These conditions follow from two basic observations about the second moment $\EX_Q[L_n^2 \idf_{M_n\in B}]$ that determines the efficiency of $Q_n$ via \eqref{eqeff1}. The first is that this moment involves both $L_n$ and $M_n$, which means that it can be computed knowing the joint distribution of these two (generally correlated) random variables. The second is that, in many cases of interest, the Radon--Nikodym derivative $L_n$ scales exponentially in $n$ and has a distribution that satisfies the LDP \cite{puhalskii1998}. Therefore, it is natural to study the efficiency of $Q_n$ based on the joint large deviations of $M_n$ and $L_n$, which is what we do in this section. 

By reformulating the asymptotic efficiency criterion in terms of a joint LDP involving $M_n$ and $L_n$, we derive necessary and sufficient conditions for a general $Q_n$ to be asymptotically efficient. These conditions provide new insights into what makes a change of measure efficient. They show, in particular, that $L_n$ does not have to be deterministic conditionally on $M_n$, which is the essential property of the exponential tilting that makes it asymptotically efficient. The fluctuations of $L_n$ only need to be ``bounded'' or ``controlled'' in a precise way, suggesting new changes of measure, different from the exponential tilting, that are asymptotically efficient.

\subsection{Joint large deviations}\label{lkzx}

The idea of formulating a joint LDP for $M_n$ and $L_n$ follows the recent work of one of us \cite{touchette2018}. As in that work, we consider $L_n$ via the scaled log-likelihood or \emph{action}, defined as
\be
W_n \equiv -\frac{1}{n} \log L_n,
\ee
to account for the fact that $L_n$ is expected to scale exponentially with $n$. The action is obviously a real random variable whose distribution can be determined in principle with respect to either $P_n$ or $Q_n$. The couple $(M_n,W_n)$ is thus a random variable taking values in the product space $\cM\times\reals$, where $\cM$, the space of $M_n$, is again a subset of $\reals^D$. 

From now on, we assume the following:
\begin{assumption}~
\label{hyp3}
\begin{enumerate}
\item[\emph{(a)}] $(M_n,W_n)$ satisfies the LDP relative to $P_n$ on $\cM\times\reals$ with good rate function $J_P$;
\item[\emph{(b)}] $(M_n,W_n)$ satisfies the LDP relative to $Q_n$ on $\cM\times\reals$ with good rate function $J_Q$;
\item[\emph{(c)}] $J_P$ and $J_Q$ have the same non-empty domain on $\cM\times\reals$, that is, the same set of values on which these functions are finite.
\item[\emph{(d)}] $B\times\reals$ is a good set for $(m,w)\mapsto 2w+J_Q(m,w)$, meaning that
\be
\inf_{(m,w)\in B\times\reals}\left\{2w+J_Q(m,w)\right\} = \inf_{(m,w)\in B^\circ\times\reals} \left\{2w+J_Q(m,w)\right\} = \inf_{(m,w)\in \bar B\times\reals} \left\{2w+J_Q(m,w)\right\}.
\ee
\end{enumerate}
\end{assumption}

In the last assumption, there is an abuse of language, since the function $2w+J_Q(m,w)$ is not necessarily a rate function. 

The LDPs in Conditions (a)-(b) follow the rigorous definition given in Sec.~\ref{secis} and mean in terms of the asymptotic notation that
\be
P_n(M_n\in B, W_n\in C) \asymp \exp\left\{-n \inf_{m\in B,w\in C} J_P(m,w)\right\}
\ee
and
\be
Q_n(M_n\in B, W_n\in C) \asymp \exp\left\{-n \inf_{m\in B,w\in C} J_Q(m,w)\right\}
\ee
for ``good'' sets $B\times C$. More concretely, we can also write
\be
P_n(M_n\in [m,m+dm], W_n\in [w,w+dw]) \asymp e^{-n J_P(m,w)}
\ee
and
\be
Q_n(M_n\in [m,m+dm], W_n\in [w,w+dw]) \asymp e^{-n J_Q(m,w)}.
\ee
As for Condition (c), it follows from our previous assumption that $P_n$ and $Q_n$ have the same support on $\Lambda_n$, so they also have the same support when pushed forward to $\cM\times\reals$ with the random variables $(M_n(\bX_n), W_n(\bX_n))$. This property is again not essential, but simplifies the derivation and analysis of our results.

We will show in Sec.~\ref{secexamples} how the two joint LDPs can be derived in practice using techniques from large deviation theory. The existence of these LDPs can be viewed as a strong assumption of our theory, but they are necessary, as will become clear, to fully understand the asymptotic efficiency of $Q_n$.

For now, we assume that two rate functions $J_P$ and $J_Q$ for $(M_n,W_n)$ are given and proceed to relate them to $I_P(B)$ and $R_Q(B)$. To this end, it is important to note that, although $J_P$ and $J_Q$ are defined with respect to $P_n$ and $Q_n$, respectively, both rate functions actually depend on $Q_n$, since they both involve the action $W_n$ defined from $L_n$. The joint rate functions are also linked, since expectations with respect to $Q_n$ are related to expectations with respect to $P_n$, and vice versa, via the identity
\be
\EX_Q[L_n(\bX_n)\, f(\bX_n)] = \EX_P[f(\bX_n)],
\label{eqrnd2}
\ee
where $f$ is any test function. This result was already used in \eqref{equnb1} to show that the IS estimator $\hp_n^N$ is unbiased, and implies the following large deviation result, referred to in physics as a \emph{fluctuation relation} \cite{harris2007}:

\begin{proposition}[{\cite[Prop.~2]{touchette2018}}] Under Assumption~\ref{hyp3}, the two rate functions $J_P$ and $J_Q$ are such that 
\label{prop1}
\be
J_P(m,w) = J_Q(m,w)+w
\label{eqfr1}
\ee
for all $(m,w)$ in their domain.
\end{proposition}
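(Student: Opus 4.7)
The plan is to exploit the Radon--Nikodym identity $\EX_Q[L_n f(\bX_n)] = \EX_P[f(\bX_n)]$ together with the observation that, on events where $W_n$ is pinned to a narrow window around $w$, the likelihood factor $L_n = e^{-nW_n}$ is trapped between $e^{-n(w+\epsilon)}$ and $e^{-n(w-\epsilon)}$. Applying this identity to an indicator function localizing $(M_n,W_n)$ near $(m,w)$ will then convert a $P_n$-probability into a $Q_n$-probability times a purely exponential prefactor, and the two LDPs of Assumption~\ref{hyp3} will read off the rate functions on both sides.

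First I would fix a point $(m,w)$ in the (common, by (c)) finite domain of $J_P$ and $J_Q$ and introduce a shrinking family of open product neighborhoods $V_{\delta,\epsilon} = U_\delta(m) \times (w-\epsilon,w+\epsilon)$, where $U_\delta(m)$ is an open ball of radius $\delta$ around $m$. Outside an at most countable set of radii one can arrange that $V_{\delta,\epsilon}$ is simultaneously an $I$-continuity set for $J_P$ and $J_Q$, so the LDPs give the full limits
\be
\lim_n -\tfrac{1}{n}\log P_n(V_{\delta,\epsilon}) = \inf_{V_{\delta,\epsilon}} J_P, \qquad \lim_n -\tfrac{1}{n}\log Q_n(V_{\delta,\epsilon}) = \inf_{V_{\delta,\epsilon}} J_Q.
\ee
Applying \eqref{eqrnd2} with $f=\idf_{V_{\delta,\epsilon}}$ and inserting the pointwise bound on $L_n$ valid on $\{W_n\in(w-\epsilon,w+\epsilon)\}$ yields
\be
e^{-n(w+\epsilon)} Q_n(V_{\delta,\epsilon}) \leq P_n(V_{\delta,\epsilon}) \leq e^{-n(w-\epsilon)} Q_n(V_{\delta,\epsilon}).
\ee
Taking $-\tfrac{1}{n}\log$, sending $n\to\infty$, and substituting the LDP limits rearranges to the sandwich
\be
w-\epsilon \leq \inf_{V_{\delta,\epsilon}} J_P - \inf_{V_{\delta,\epsilon}} J_Q \leq w+\epsilon.
\ee

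To close the argument I would send $\delta\to0$ and then $\epsilon\to0$. Because $J_P$ and $J_Q$ are good, hence lower semicontinuous, rate functions, $\inf_{V_{\delta,\epsilon}} J$ converges to $J(m,w)$ as the neighborhoods shrink to $(m,w)$, and the sandwich collapses to $J_P(m,w) - J_Q(m,w) = w$, which is the claim. The main obstacle is the technical bookkeeping around continuity sets: if one wishes to avoid the exceptional-radii argument, one can sandwich an open box inside a closed box, invoke the separate upper and lower LDP bounds on each, and then verify via lower semicontinuity that the gap between the two infima vanishes as the neighborhoods shrink; this is essentially a formality but is where most of the care needs to go.
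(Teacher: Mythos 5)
Your proof is correct and follows essentially the same route as the paper: the paper applies the Radon--Nikodym identity \eqref{eqrnd2} to indicator functions localizing $(M_n,W_n)$ near $(m,w)$, replaces $L_n=e^{-nW_n}$ by $e^{-nw}$ on that small window, and ``takes the large deviation limit,'' deferring the rigorous sandwich-and-shrink argument to \cite{touchette2018}. Your version simply supplies those $\delta,\epsilon$ details (the $e^{-n(w\pm\epsilon)}$ bounds, the continuity-set bookkeeping, and the lower-semicontinuity passage $\inf_{V_{\delta,\epsilon}}J\to J(m,w)$), which is exactly the formalization the paper gestures at.
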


This result simply follows by applying \eqref{eqrnd2} with indicator functions to transform joint probability distributions as follows:
\begin{eqnarray}
P_n(M_n\in dm, W_n\in dw) &=&\EX_P[\idf_{M_n\in dm}\idf_{W_n\in dw}]\nonumber\\
&=&\EX_Q[e^{-nW_n}\idf_{M_n\in dm}\idf_{W_n\in dw}]\nonumber\\
&=& e^{-nw} Q_n(M_n\in dm,W_n\in dw).
\end{eqnarray}
Here, we have used $L_n=e^{-nW_n}$ and the shorthand $M_n\in dm$ to mean $M_n\in [m,m+dm]$ (similarly for $W_n\in dw$). Taking the large deviation limit then yields \eqref{eqfr1}. We refer to \cite{touchette2018} for a rigorous presentation of this argument, based on the definition of the LDP and Assumption~\ref{hyp3}.

The existence of a joint LDP for $(M_n,W_n)$ also implies that $M_n$ and $W_n$ satisfy the LDP individually. This ``marginalization'' of joint LDPs is covered in \cite[Prop.~1]{touchette2018} and follows in large deviation theory from the contraction principle \cite{ellis1985,dembo1998,hollander2000,touchette2009}, stated for convenience in Appendix~\ref{appcont}. The application of this principle to marginalize (viz., trace out) $W_n$, for example, gives the following representation of the rate function of $M_n$ with respect to $P_n$:
\be
I_P(m) = \inf_{w\in\reals} J_P(m,w).
\ee
Therefore,
\be
I_P(B) = \inf_{m\in B,w\in\reals} J_P(m,w) =\inf_{m\in B,w\in\reals} \{ w+J_Q(m,w)\},
\label{eqipb1}
\ee
where we have used Proposition~\ref{prop1} to obtain the second equality. Similar formulas apply with respect to $Q_n$, including
\be
I_Q(m)=\inf_{w\in\reals} J_Q(m,w),
\label{eqmargq1}
\ee
which is the rate function of $M_n$ associated with its LDP with respect to $Q_n$.

At this point, we formulate one more assumption needed to derive our main result:
\begin{assumption}
\label{hyp4}
There exists a unique, finite pair $(m^*,w^*)$ such that $J_Q(m^*,w^*)=0$.
\end{assumption}

This assumption means concretely that the pair $(M_n,W_n)$ satisfies the weak law of large numbers with respect to $Q_n$ (see \cite[Thm.~2.5]{ellis2000}), that is,
\be
\lim_{n\ra\infty} Q_n\left(M_n\in [m^*,m^*+dm],W_n\in [w^*,w^*+dw]\right)=1.
\ee
In this case, we say that $(m^*,w^*)$ is the \emph{typical value} or \emph{concentration point} of $(M_n,W_n)$ under $Q_n$. Since rate functions are positive, Assumption~\ref{hyp4} and $(\ref{eqmargq1})$ imply $I_Q(m^*)=0$, so that $m^*$ is also the typical value of $M_n$ with respect to $Q_n$. A good change of measure, as we have seen, should be such that $m^*\in \bB$ to transform the event $\{M_n\in B\}$ from being rare under $P_n$ to being typical under $Q_n$. In large deviation terms, this means 
\be
I_Q(B)\equiv \inf_{m\in B} I_Q(m)=I_Q(m^*)=0.
\ee
This is the first step for constructing a good change of measure for IS -- to make $B$ typical. The next step is to ensure that $Q_n$ is asymptotically efficient.

\subsection{Efficiency results}

We study the efficiency of $Q_n$ from the result in \eqref{eqeff1} by expressing $R_Q(B)$ as a variational formula involving $J_Q(m,w)$, similarly to the formula \eqref{eqipb1} that we have for $I_P(B)$, and by then comparing these two formulas to infer conditions on $J_Q(m,w)$ that guarantee that $R_Q(B)=2I_P(B)$. The first part is the subject of the next result. 

\begin{lemma} 
\label{lem1}
Under Assumptions~\ref{hyp1}, \ref{hyp2} and \ref{hyp3}, the second moment rate $R_Q(B)$ defined in \eqref{eqde1} exists, is finite, and is given in terms of $J_P$ and $J_Q$ by
\be
R_Q(B) = \inf_{m\in B,w\in\reals} \{w+J_P(m,w)\} = \inf_{m\in B,w\in\reals} \{2w+J_Q(m,w)\}. 
\label{eqmomres1}
\ee
\end{lemma}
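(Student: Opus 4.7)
The plan is to recognize $\EX_Q[L_n^2 \idf_{M_n\in B}]$ as a Laplace-type integral of the continuous functional $\phi(m,w)=-2w$ against the joint law of $(M_n,W_n)$ under $Q_n$, and then to apply Varadhan's integral lemma. Writing $L_n=e^{-nW_n}$, one has
\be
\EX_Q[L_n^2 \idf_{M_n\in B}] = \EX_Q\bigl[e^{-2nW_n}\idf_{(M_n,W_n)\in B\times\reals}\bigr],
\ee
so combining the joint LDP of Assumption~\ref{hyp3}(b) with a Laplace-type asymptotic should directly yield the second equality in \eqref{eqmomres1}. The first equality will then follow mechanically by substituting the fluctuation relation $J_Q=J_P-w$ from Proposition~\ref{prop1}.

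The core task is to establish matching upper and lower Laplace bounds. For the upper bound, I would localize to $\bar B\times\reals$, truncate the $w$-axis to a slab $[-W,W]$, apply the joint LDP on the resulting compact slab, and control the tails $\{|W_n|>W\}$ via Assumption~\ref{hyp2}. That assumption supplies $\EX_Q[e^{-(2+\delta)nW_n}\idf_{M_n\in\bar B}]\le e^{Cn}$ uniformly in $n$, and a H\"older argument then shows that the $W_n<-W$ contribution is suppressed at an arbitrarily large exponential rate as $W\to\infty$; the $W_n>W$ side is harmless because $\phi=-2w$ is strongly negative there. Sending $W\to\infty$ yields
\be
\limsup_{n\to\infty}\tfrac{1}{n}\log \EX_Q[L_n^2\idf_{M_n\in B}] \le -\inf_{(m,w)\in\bar B\times\reals}\{2w+J_Q(m,w)\}.
\ee
For the lower bound, I would fix an interior point $(m_0,w_0)\in B^\circ\times\reals$, restrict the expectation to a small product neighborhood, use the LDP lower bound together with continuity of $w\mapsto -2w$ to get $\liminf_n \tfrac1n\log \EX_Q[L_n^2\idf_{M_n\in B}]\ge -\{2w_0+J_Q(m_0,w_0)\}$, and then optimize over $(m_0,w_0)$. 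Assumption~\ref{hyp3}(d) is exactly what forces the two infima, over $\bar B\times\reals$ and $B^\circ\times\reals$, to coincide with the one over $B\times\reals$, so the two bounds match and $R_Q(B)$ exists and equals the claimed variational formula.

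Finiteness is immediate on both sides. From above, the bound \eqref{eqeff1} combined with Assumption~\ref{hyp1} gives $R_Q(B)\le 2I_P(B)<\infty$; from below, Assumption~\ref{hyp2} combined with the H\"older estimate $\EX_Q[L_n^2\idf_{M_n\in B}]\le \EX_Q[L_n^{2+\delta}\idf_{M_n\in\bar B}]^{2/(2+\delta)}$ gives $R_Q(B)>-\infty$. The step I expect to be the main obstacle is the upper Laplace bound, since $\phi=-2w$ is unbounded above in $w$: without the integrability control of Assumption~\ref{hyp2}, $Q_n$-mass escaping to $W_n\to-\infty$ could in principle blow up the second moment faster than any exponential and destroy the variational formula. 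The care therefore lies in organizing the truncation argument so that, after sending $W\to\infty$, the resulting infimum genuinely covers all of $\bar B\times\reals$ rather than only a compact slab; everything else is a routine combination of the contraction principle with Proposition~\ref{prop1} and Assumption~\ref{hyp3}.
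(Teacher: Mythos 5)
Your proof takes essentially the same route as the paper's: both invoke Varadhan's integral lemma (the Laplace principle) with matching upper and lower exponential bounds, use Assumption~\ref{hyp2} to control the unbounded factor $e^{-2nW_n}$, use Assumption~\ref{hyp3}(d) to reconcile the infima over $B$, $\bar B$, and $B^\circ$, and obtain the $J_P$ form via Proposition~\ref{prop1}. The truncation-plus-H\"older argument you sketch for the upper bound is precisely what the paper delegates to the proof of \cite[Thm.~II.7.2]{ellis1985}.
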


\begin{proof}
These variational representations of $R_Q(B)$ are a direct consequence of the Laplace principle for approximating exponential integrals, which is formulated in a rigorous way in large deviation theory via Varadhan's integral lemma \cite{varadhan1966}. For our purpose, we apply a version of that theorem found in \cite[Thm.~II.7.2]{ellis1985} to $R_Q(B)$ as defined by \eqref{eqde1}. Given Assumption~\ref{hyp3}(d), to show that
\be
\lim_{n\ra\infty} -\frac{1}{n} \log \EX_Q[L_n^2 \idf_{M_n\in B}]=\inf_{m\in B,w\in\reals} \{2w+J_Q(m,w)\},
\ee
it suffices to prove that
\be
\liminf_{n\ra\infty} -\frac{1}{n} \log \EX_Q[L_n^2 \idf_{M_n\in \bar B}]\geq \inf_{m\in \bar B,w\in\reals} \{2w+J_Q(m,w)\},
\label{capzjd}
\ee
and
\be
\limsup_{n\ra\infty} -\frac{1}{n} \log \EX_Q[L_n^2 \idf_{M_n\in B^\circ}]\leq \inf_{m\in B^\circ,w\in\reals} \{2w+J_Q(m,w)\}.
\label{capzjdde}
\ee
Under Assumptions~\ref{hyp2} and \ref{hyp3}(b), to establish \eqref{capzjd} (respectively \eqref{capzjdde}), one may adapt the proof of \cite[Thm.~II.7.2]{ellis1985} detailed in Appendix B.2, item (a) (respectively (b)), replacing $K$ (respectively $G$) with $\bar B\times\reals$ (respectively $B^\circ\times\reals$). Hence, $R_Q(B)$ defined as the limit in \eqref{eqde1} exists. Moreover, Assumption \ref{apzicjicj} and \eqref{eqde1} ensure that $R_Q(B)<\infty$, and Assumption \ref{hyp2} that $R_Q(B)>-\infty$, so that $R_Q(B)$ is finite. Finally, \eqref{eqfr1} shows the equivalence between both relations for $R_Q(B)$ in \eqref{eqmomres1}.
\end{proof}

The result of Lemma~\ref{lem1} complements the methods developed by Bucklew \cite{bucklew2004} for calculating $R_Q(B)$, which are based on generating functions rather than the joint large deviations of $M_n$ and $W_n$.  The advantage of our result is that it can be used with \eqref{eqipb1} to express the efficiency bound $R_Q(B)\leq 2I_P(B)$ as a variational inequality involving the rate function $J_Q(m,w)$:
\be
\inf_{m\in B,w\in\reals} \{2w+J_Q(m,w)\}\leq 2 \inf_{m\in B,w\in\reals} \{w+J_Q(m,w)\}.
\label{eqeff2}
\ee
Therefore, \emph{$Q_n$ is asymptotically efficient if and only if $J_Q$ is such that the inequality above is an equality}. The same inequality can be expressed in terms of $J_P$ using \eqref{eqfr1}, but this is not useful, since we want to characterize the efficiency of $Q_n$. Note, however, that the right-hand side of \eqref{eqeff2}, although written with $J_Q$, does not actually depend on $Q_n$, since it is equal to $2I_P(B)$. 

Our aim now is to find conditions on $Q_n$, and therefore on $J_Q(m,w)$, to have equality in \eqref{eqeff2}. This is a non-trivial task, despite the simple form of this inequality, because the minimizers on either side need not be the same. Moreover, although $J_Q$ is positive, $w$ is not, so bounds based on the minimizer $(m^*,w^*)$ of $J_Q$ do not yield any useful conditions. Rather, such conditions are found by observing that $w$ is the relevant variable in \eqref{eqeff2}, since the minimizer over $m$ is common to both sides of this inequality, and that the unconstrained minimization over $w\in\reals$ has the form of a Legendre--Fenchel transform. 

Based on these observations, we define
\be
I_Q^B(w)\equiv \inf_{m\in \bar B} J_Q(m,w).
\label{eqdefiqb}
\ee
This function of $w\in\reals$ is positive, since $J_Q(m,w)\geq 0$, although it is not, as such, a rate function, since it does not necessarily have a zero. To understand this point, let us assume for simplicity that $B$ is closed. In that case, note that the joint LDP for $M_n$ and $W_n$ with respect to $Q_n$ implies the following LDP for the distribution of $W_n$ conditioned on $M_n\in B$:
\be
Q_n(W_n\in [w,w+dw]| M_n \in B) \asymp e^{-n I_Q(w|B)},
\ee
where
\be
I_Q(w|B) = \inf_{m\in B}J_Q(m,w) - \inf_{m\in B,w\in\reals} J_Q(m,w)=I_Q^B(w) - I_Q(B)
\label{eqdiffrf1}
\ee
The conditional distribution of $W_n$ is normalized, so its rate function $I_Q(w|B)$ is a true rate function, in the sense that
\be
\inf_{w\in\reals} I_Q(w|B) = 0.
\label{eqdiffrf2}
\ee
However, we see from \eqref{eqdiffrf1} that, unless $I_Q(B)=0$, we have $I_Q^B(w)>0$, so the latter function is indeed not a true rate function in general.

The case that interests us is precisely the case where $I_Q(B)=0$. That is, if $m^*\in \bar B$, then $B$ is typical under $Q_n$, so that $I_Q(B)=0$ and thus $I_Q(w|B)=I_Q^B(w)$. In that case, $I_Q^B(w)$ is interpreted as a conditional rate function having a zero (at $w^*$ from Assumption~\ref{hyp4}). The converse is also true, leading us to the following result:
\begin{lemma} 
\label{lem2}
$I_Q^B(w^*)\geq 0$ with equality if and only if $m^*\in \bB$.
\end{lemma}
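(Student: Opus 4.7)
The plan is to split the statement into the easy inequality, the easy implication, and the nontrivial converse, and to exploit the goodness of $J_Q$ together with Assumption~\ref{hyp4} for the hard direction.

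First, since $J_Q$ is a rate function, it is non-negative everywhere on its domain, so taking the infimum over $m\in\bar B$ immediately gives
\be
I_Q^B(w^*)=\inf_{m\in\bar B} J_Q(m,w^*)\geq 0.
\ee
This disposes of the inequality part.

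For the direction ``$m^*\in\bar B \Rightarrow I_Q^B(w^*)=0$'', I would simply evaluate the infimum at the candidate point $m^*$: since $m^*\in\bar B$, we have
\be
I_Q^B(w^*)\leq J_Q(m^*,w^*)=0
\ee
by Assumption~\ref{hyp4}, and combining with the first inequality yields equality.

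The only non-routine step is the converse ``$I_Q^B(w^*)=0 \Rightarrow m^*\in\bar B$'', and here the main obstacle is to actually attain the infimum: I need to know that some point in $\bar B$ realises the value $0$. The plan is to pick a minimising sequence $m_k\in\bar B$ with $J_Q(m_k,w^*)\to 0$, so that eventually $(m_k,w^*)$ lies in the sublevel set $\{J_Q\leq 1\}$. Because $J_Q$ is a \emph{good} rate function (Assumption~\ref{hyp3}(a)-(b)), this sublevel set is compact in $\cM\times\reals$, so $\{m_k\}$ admits a convergent subsequence $m_{k_j}\to \hat m$, and $\hat m\in\bar B$ because $\bar B$ is closed. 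Lower semicontinuity of $J_Q$ then gives
\be
J_Q(\hat m,w^*)\leq \liminf_{j\to\infty} J_Q(m_{k_j},w^*)=0,
\ee
so $J_Q(\hat m,w^*)=0$. By the uniqueness part of Assumption~\ref{hyp4}, $(\hat m,w^*)=(m^*,w^*)$, hence $m^*=\hat m\in\bar B$. This closes the equivalence. I expect the compactness/lower semicontinuity step to be the only place where one has to invoke structural properties of $J_Q$; everything else is bookkeeping.
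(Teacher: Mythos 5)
Your proof is correct and follows essentially the same route as the paper's: non-negativity of $J_Q$, evaluation at $m^*$ for one direction, and attainment of the infimum on $\bar B$ (via goodness of $J_Q$) plus uniqueness from Assumption~\ref{hyp4} for the other. The only difference is that you spell out the minimizing-sequence/compact-sublevel-set/lower-semicontinuity argument for attainment, whereas the paper simply cites the standard fact that a good rate function achieves its infimum on a closed set.
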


\begin{proof} We have $I_Q^B(w^*)\geq 0$ by definition of rate functions. For the direct part, suppose that 
\be
0=I_Q^B(w^*)=\inf_{m\in \bar B} J_Q(m,w^*).
\ee 
As mentioned in the discussion before Assumption \ref{apzicjicj}, since $J_Q$ is a good rate function, the infimum is achieved on $\bar B$. By Assumption~\ref{hyp4}, this ensures that $m^*\in \bar B$.

For the converse part, simply note that $m^*\in \bar B$ implies 
\be
I_Q^B(w^*)=\inf_{m\in \bar B} J_Q(m,w^*)=J_Q(m^*,w^*)=0
\ee
by Assumption~\ref{hyp4}.
\end{proof}

We are now ready to state our main result for the asymptotic efficiency of $Q_n$ based on $I_Q^B$. The statement of the result uses the \emph{subdifferential} $\p I_Q^B(w^*)$ of $I_Q^B$ at the point $w^*$, which is the set of values $k\in\reals$ such that
\be
I_Q^B(w)\geq I_Q^B(w^*)+k(w-w^*)
\label{eqsubdiff1}
\ee
for all $w\in\reals$. More information about subdifferentials can be found in Appendix~\ref{appconvex}. For the proof and the interpretation of the result, we also need the \emph{Legendre--Fenchel transform} of $I_Q^B$, defined by\footnote{We use the same letter $\lambda$ for the Legendre--Fenchel transform and for the SCGF in \eqref{eqscgf1}, since, as already mentioned, the G\"artner--Ellis theorem ensures that, under appropriate conditions, both functions coincide.}
\be
\lambda_Q^B(k)\equiv \sup_{w\in\reals} \{kw-I_Q^B(w)\},\qquad k\in\reals.
\label{eqdefscgfiqb1}
\ee
This is a convex function of $k$, as also explained in Appendix~\ref{appconvex}, such that
\be
\lambda_Q^B(0)=-\inf_{w\in\reals} I_Q^B(w)\leq 0.
\label{eqlambdaineq1}
\ee

\begin{theorem} 
\label{thm1}
Under Assumptions~\ref{hyp1}-\ref{hyp4}, $Q_n$ is asymptotically efficient if and only if $I_Q^B(w^*)=0$ (typicality condition) and $-2\in\p I_Q^B(w^*)$ (steepness condition).
\end{theorem}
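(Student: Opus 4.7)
My approach is to reduce the efficiency criterion to a one-dimensional comparison in the action variable $w$ by marginalizing the joint rate function $J_Q(m,w)$ over $m\in\bar B$, working entirely with $I_Q^B(w)=\inf_{m\in\bar B}J_Q(m,w)$. Lemma~\ref{lem1} combined with this marginalization immediately yields $R_Q(B)=\inf_w\{2w+I_Q^B(w)\}$, and the fluctuation relation of Proposition~\ref{prop1} together with the good-set property in Assumption~\ref{hyp1} gives $I_P(B)=\inf_{m\in\bar B,w}\{w+J_Q(m,w)\}=\inf_w\{w+I_Q^B(w)\}$. Efficiency $R_Q(B)=2I_P(B)$ therefore becomes
\begin{equation*}
\inf_w\{2w+I_Q^B(w)\}\;=\;2\inf_w\{w+I_Q^B(w)\},
\end{equation*}
which I will analyze via Fenchel-type subgradient bookkeeping at the candidate minimizer $w^*$.

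For sufficiency, suppose $I_Q^B(w^*)=0$ and $-2\in\partial I_Q^B(w^*)$. The subgradient inequality reads $2w+I_Q^B(w)\geq 2w^*+I_Q^B(w^*)=2w^*$ for all $w$, with equality at $w=w^*$, so $R_Q(B)=2w^*$. Typicality places $m^*\in\bar B$ by Lemma~\ref{lem2}, so inserting $(m^*,w^*)$ as a test point in the variational formula for $I_P(B)$ yields $I_P(B)\leq w^*$. Thus $R_Q(B)=2w^*\geq 2I_P(B)$, and combined with the universal upper bound $R_Q(B)\leq 2I_P(B)$ from \eqref{eqeff1}, efficiency holds.

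For necessity, assume $R_Q(B)=2I_P(B)$. Goodness of $J_P$ together with $I_P(B)<\infty$ ensures that the infimum defining $I_P(B)$ is attained at some $(m_0,w_0)\in\bar B\times\reals$. Using the fluctuation relation $J_Q(m_0,w_0)=I_P(B)-w_0$, inserting $(m_0,w_0)$ as a test point in $R_Q(B)$ produces
\begin{equation*}
R_Q(B)\;\leq\;2w_0+J_Q(m_0,w_0)\;=\;w_0+I_P(B).
\end{equation*}
Efficiency forces $w_0\geq I_P(B)$, while positivity $J_Q(m_0,w_0)\geq 0$ forces $w_0\leq I_P(B)$. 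Hence $w_0=I_P(B)$ and $J_Q(m_0,w_0)=0$, and Assumption~\ref{hyp4} pins down $(m_0,w_0)=(m^*,w^*)$. This gives $m^*\in\bar B$ (so $I_Q^B(w^*)=0$ by Lemma~\ref{lem2}) and $I_P(B)=w^*$, whereupon the bound $2w+I_Q^B(w)\geq R_Q(B)=2w^*=2w^*+I_Q^B(w^*)$ valid for all $w$ is exactly the subgradient condition $-2\in\partial I_Q^B(w^*)$.

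The most delicate step is the uniqueness invocation of Assumption~\ref{hyp4}: it converts the sandwich $I_P(B)\leq w_0\leq I_P(B)$ with $J_Q(m_0,w_0)=0$ into the identification $(m_0,w_0)=(m^*,w^*)$, which is what makes the pointwise subgradient condition at $w^*$ (rather than at some other minimizer) exactly the right efficiency criterion. I do not anticipate independent difficulty from attainment of $R_Q(B)$, since the subgradient inequality itself encodes attainment of the relevant infimum at $w^*$; the passage between $B$, $B^\circ$ and $\bar B$ in the various infima is handled by Assumption~\ref{hyp3}(d) and the good-set property baked into Assumption~\ref{hyp1}.
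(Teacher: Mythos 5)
Your proof is correct and takes a genuinely different route from the paper's. The paper proves both directions by passing to the Legendre--Fenchel conjugate $\lambda_Q^B(k)$: efficiency is rewritten as the chord identity $\lambda_Q^B(-2)=2\lambda_Q^B(-1)$, and combined with convexity and $\lambda_Q^B(0)\leq 0$ this forces $\lambda_Q^B$ to be affine on $[-2,0]$ with $\lambda_Q^B(0)=0$; the steepness condition is then read off via Fenchel duality $k\in\partial I_Q^B(w)\iff w\in\partial\lambda_Q^B(k)$, with an extra detour through the biconjugate $(I_Q^B)^{**}$ when $I_Q^B$ is nonconvex. You instead stay entirely in the primal variable $w$, using the variational formulas $I_P(B)=\inf_w\{w+I_Q^B(w)\}$ and $R_Q(B)=\inf_w\{2w+I_Q^B(w)\}$. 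Your sufficiency direction is a one-line subgradient estimate plus a test point, closed off by the a priori bound $R_Q(B)\leq 2I_P(B)$ from \eqref{eqeff1} --- the paper instead derives $-1\in\partial I_Q^B(w^*)$ and computes both infima explicitly, which is equivalent. Your necessity direction replaces the chord-and-convexity argument for $\lambda_Q^B$ with a clean sandwich at an attained minimizer: goodness of $J_P$ supplies $(m_0,w_0)$, efficiency forces $w_0\geq I_P(B)$, positivity of $J_Q$ forces $w_0\leq I_P(B)$, and Assumption~\ref{hyp4} identifies $(m_0,w_0)=(m^*,w^*)$, after which $2w+I_Q^B(w)\geq R_Q(B)=2w^*+I_Q^B(w^*)$ is verbatim the subgradient condition. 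The trade-off: the paper's dual picture is more geometric and yields the stronger conclusion $[-2,0]\subset\partial I_Q^B(w^*)$ in passing, whereas your argument is more elementary --- it never invokes Fenchel duality or convex envelopes, so the nonconvex case requires no special handling since the subgradient inequality is a primal notion.
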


\begin{proof}
Suppose that $Q_n$ is asymptotically efficient. Then the efficiency criterion $R_Q(B)=2I_P(B)$ leads to equality in \eqref{eqeff2}, which can be re-expressed as
\be
\lambda_Q^B(-2) = 2\lambda_Q^B(-1)
\label{eqaffrel1}
\ee
with the definition of $I_Q^B$ and its Legendre--Fenchel transform. 

This relation constrains the graph of $\lambda_Q^B(k)$, as illustrated in Fig.~\ref{figlambda1}: $P_1$ and $P_2$ show the points of $\lambda_Q^B$ at $k=-1$ and $k=-2$, respectively, which are related in an affine way according to \eqref{eqaffrel1}. Moreover, we know that $\lambda_Q^B(0)\leq 0$ from \eqref{eqlambdaineq1}. From these two results, and the fact that $\lambda_Q^B(k)$ is convex, we conclude that $\lambda_Q^B(k)$ must be linear over $k\in [-2,0]$, as no other convex function can pass through both $P_1$ and $P_2$ while intersecting the ordinate axis below $0$. Hence,
\be
\lambda_Q^B(k) = -\lambda_Q^B(-1) k,\qquad k\in [-2,0].
\label{eqaffine1}
\ee 
In particular, $\lambda_Q^B(0)=0$, shown in Fig.~\ref{figlambda1}b as the point $P_0$ at the origin. By \eqref{eqlambdaineq1}, this implies that 
\be
0=\inf_{w\in\reals} I_Q^B(w)=\inf_{(m,w)\in\bar B\times\reals} J_Q(m,w).
\ee
Since $\bar B\times\reals$ is closed, the infimum is reached on the latter and, by Assumption~\ref{hyp4}, this is possible only at point $(m^*,w^*)$, so that $m^*\in \bB$ and $I_Q^B(w^*)=0$ by Lemma~\ref{lem2}.

The typicality condition is obtained from this result by noting, following the proof of Lemma~\ref{lem2}, that the existence of $P_0$ implies with \eqref{eqlambdaineq1} that $I_Q^B(w^*)=0$. The minimum $w^*$ is unique by Assumption~\ref{hyp4}. The steepness condition, on the other hand, is obtained by using standard results of convex analysis, stated with references in Appendix~\ref{appconvex}, to show that the linear part of $\lambda_Q^B$ leads to $I_Q^B$ having a cusp at its global minimum, characterized by more than one supporting lines, including one with slope $0$ and one with slope $-2$. To simplify the proof, we will first assume that $I_Q^B(w)$ is a convex function of $w$ and will then explain why the result also holds when $I_Q^B$ is not convex.

Note first that $w^*$ being a global minimum of $I_Q^B$ means that $0\in \p I_Q^B(w^*)$. Using the duality result expressed in \eqref{eqdual1}, we then obtain $w^*\in\p \lambda_Q^B(0)$. However, since $\lambda_Q^B(k)$ is linear for $k\in [-2,0]$, we also have $w^*\in\p \lambda_Q^B(k)$ for $k\in [-2,0]$, which implies by applying \eqref{eqdual1} again that
\be
[-2,0]\subset\p I_Q^B(w^*).
\label{eqsteeprel1}
\ee
Therefore, $-2\in I_Q^B(w^*)$, which is the steepness condition.

If $I_Q^B$ is nonconvex, then the same argument applies by replacing $I_Q^B$ in the duality result by its convex envelope $(I_Q^B)^{**}$, given by the Legendre--Fenchel transform of $\lambda_Q^B$ or, equivalently, by the double Legendre--Fenchel transform of $I_Q^B$ itself. We also have to note that a function and its convex envelope necessarily have the same global minima, if there are any, which means here that $I_Q^B(w^*)^{**}=I_Q^B(w^*)=0$ and $0\in\p I_Q^B(w^*)^{**}$. Finally, where a function coincides with its convex envelope, the subdifferentials are the same, so that $\p I_Q^B(w^*)^{**}=\p I_Q^B(w^*)$. All these results are presented with references in Appendix~\ref{appconvex} and imply, in the end, that the relation \eqref{eqsteeprel1} holds at $w^*$ even if $I_Q^B$ is not convex. 

\begin{figure}[t]
\centering
\includegraphics{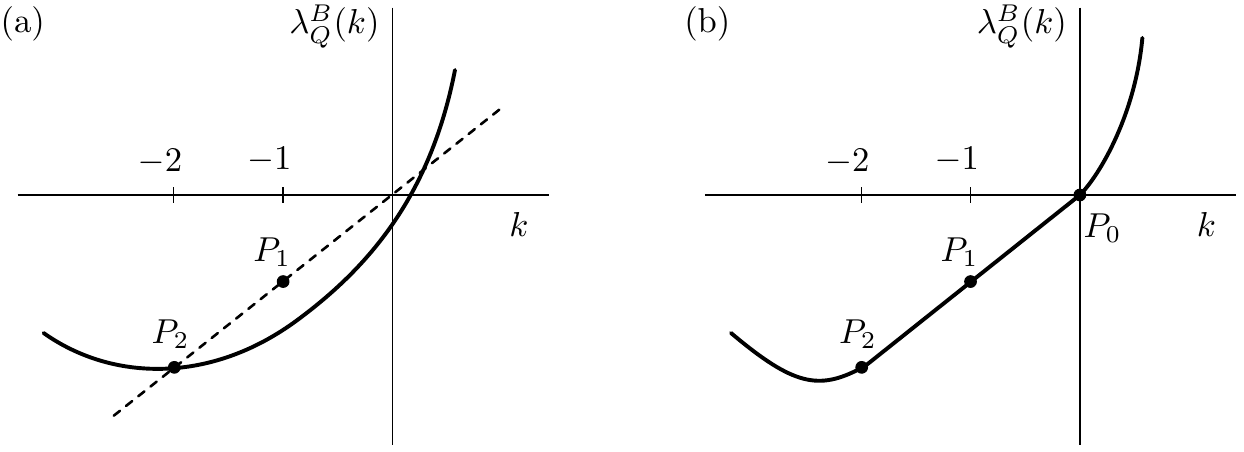}
\caption{(a) General $\lambda_Q^B(k)$. (b) Imposing $\lambda_Q^B(-2) = 2\lambda_Q^B(-1)$ implies, by convexity of $\lambda_Q^B(k)$, that this function passes through the origin $P_0$ and is linear for $k\in [-2,0]$.}
\label{figlambda1}
\end{figure}

To complete the proof, we consider the converse statement. We have again that, since $I_Q^B$ has a global minimum at $w^*$, $0\in \partial I_Q^B(w^*)$. By further assuming that $-2\in  \partial I_Q^B(w^*)$, we then have $[-2,0]\subset \partial I_Q^B(w^*)$, since subdifferentials are closed convex sets. Hence, $-1$ also belongs to the subdifferential of $I_Q^B(w^*)$, which means with \eqref{eqsubdiff1} that
\be
I_Q^B(w) \geq I_Q^B(w^*) -(w-w^*)
\ee
and, therefore, 
\be
\inf_{w\in\reals} \{w+I_Q^B(w)\}= w^*+I_Q^B(w^*) = w^*.
\label{eqmin1}
\ee
The same argument for $-2$ gives
\be
\inf_{w\in\reals} \{2w+I_Q^B(w)\}= 2w^*+I_Q^B(w^*) = 2w^*.
\label{eqmin2}
\ee
Consequently,
\be
\inf_{w\in\reals} \{2w+I_Q^B(w)\}=2\inf_{w\in\reals} \{w+I_Q^B(w)\}= 2 w^*,
\ee
which implies from \eqref{eqeff2} that $Q_n$ is asymptotically efficient.
\end{proof}

\subsection{Interpretation and special cases}

We will see in the next section that our main result in Theorem~\ref{thm1} covers the efficiency of the exponential tilting as a special case. The important contribution of this theorem, compared to previous results, is the subdifferential condition, which guarantees that the second moment
\be
F_n(B)\equiv \EX_Q[L_n^2 \idf_{M_n\in B}] =\int_{\cM\times\reals} e^{-2nw} \idf_{m\in B}\, Q_n(dm,dw).
\label{eq2m1}
\ee
entering in the definition of $R_Q(B)$ is dominated by $w^*$ and not by another rare value of the action smaller than $w^*$ that would lead to an exponentially larger value of the likelihood factor $L_n$. If this condition is satisfied, in addition to the obvious condition that $B$ be typical under $Q_n$, then $Q_n$ is asymptotically efficient, which means that it can be used to sample $P_n(M_n\in B)$ with a sample size $N_n$ according to \eqref{eqsamplesize1} that is not exponentially large with $n$.

Comparing \eqref{eqmin1} and \eqref{eqmin2}, we can also say that $Q_n$ is asymptotically efficient if and only if $w^*$ is the minimizer on both sides of the efficiency criterion
\be
\inf_{w\in\reals} \{2w +I_Q^B(w)\} = 2\inf_{w\in\reals} \{w+I_Q^B(w)\},
\ee
which follows from \eqref{eqeff2}. In other words, $Q_n$ is asymptotically efficient if and only if the IS estimator $\hp_n^N$ and its second moment are dominated by the same typical value $w^*$ of the action, yielding $I_P(B)=w^*$ and $R_Q(B)=2w^*$. 

This interpretation of the efficiency in terms of the typical value $w^*$ does not mean altogether that the likelihood $L_n$ or its action $W_n$ does not fluctuate. This is a very important point. The subdifferential condition is only a condition about the ``shape'' of $I_Q^B(w)$ \emph{below} the typical value $w^*$, which means that the fluctuations of $W_n$ \emph{above} $w^*$ are not constrained in any way. 

In many cases, we find that $I_Q^B(w)$ is a convex function and is left-differentiable at $w^*$. Then the subdifferential condition reduces to
\be
I_Q^B(w^{*-})'\leq - 2,
\ee
where ${I_Q^B}(w^{*-})'$ is the left-derivative of $I_Q^B$ at $w^*$. This result is illustrated in Fig.~\ref{figconvex1}a and explains why we refer to the subdifferential condition as a ``steepness'' condition. Obviously, if $I_Q^B(w)$ is convex and is differentiable at its minimum, then
\be
I_Q^B(w^{*-})' = I_Q^B(w^*)'=0
\label{eqzerotest1}
\ee 
and so $Q_n$ is not asymptotically efficient. This offers a simple test that can be used in practice to identify non-efficient IS measures.

\begin{figure}[t]
\centering
\includegraphics[width=\textwidth]{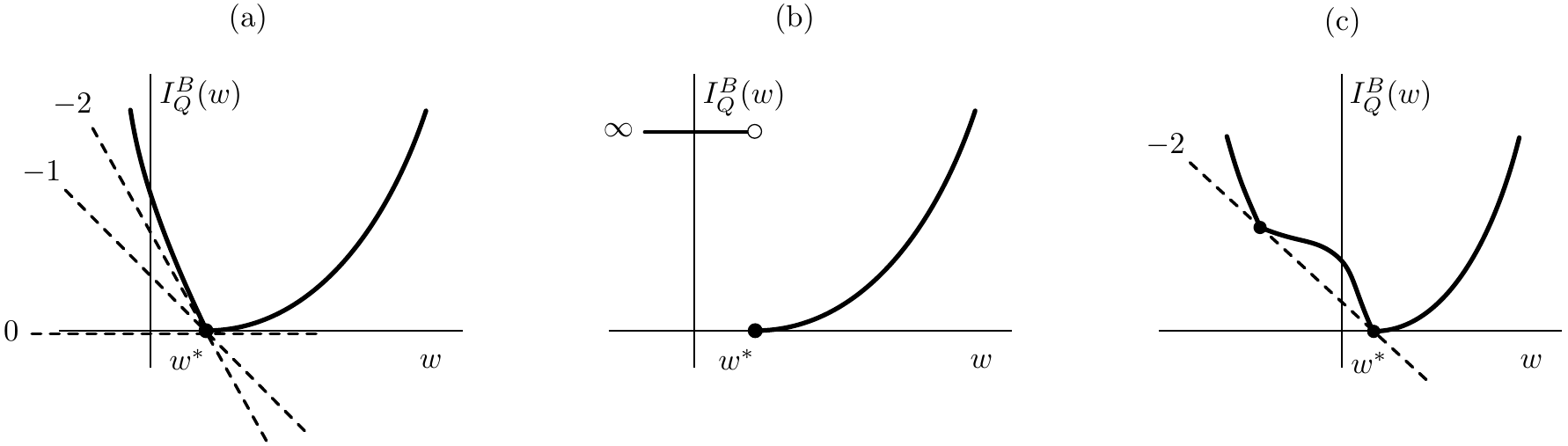}
\caption{(a) Efficiency condition for convex and left-differentiable $I_Q^B(w)$. (b) Asymptotically efficient $I_Q^B(w)$ diverging at the left of $w^*$. (c) Nonconvex $I_Q^B(w)$ that is also asymptotically efficient.}
\label{figconvex1}
\end{figure}

In general, $I_Q^B$ might not be left-differentiable at its minimum or be convex, contrary to $\lambda_Q^B$ which is convex by definition. This is why we need to state the steepness condition using the concept of subdifferentials. For instance, if $I_Q^B$ diverges for $w<w^*$, as illustrated in Fig.~\ref{figconvex1}b, then we have efficiency, since $ (-\infty, 0]\subset \p I_Q^B(w^*)$, so that $-2\in \p I_Q^B(w^*)$. This case arises when $W_n$ has no possible values below $w^*$ and so $W_n\geq w^*$. On the other hand, if $I_Q^B$ is not convex, then we have efficiency if $I_Q^B$ has a supporting line at $w^*$ with slope smaller than or equal to $-2$, as shown in Fig.~\ref{figconvex1}c. This follows from the interpretation of subdifferentials, explained in Appendix~\ref{appconvex}. Equivalently, we have efficiency if the left-derivative of the convex envelope of $I_Q^B$ at $w^*$ is smaller than or equal to $-2$. This case will be illustrated in the next section by revisiting the Gaussian sample mean and nonconvex set $B$ studied before.

With this geometric interpretation of efficiency, it is now clear that the likelihood factor $L_n$ does not have to be a deterministic function of $M_n$ or become so in the limit $n\ra\infty$ to efficiently estimate $P_n(M_n\in B)$, as often stated in studies of IS. The likelihood can fluctuate jointly with $M_n$ so long as the fluctuations of the action $W_n$ conditioned on $M_n\in B$ are sufficiently suppressed below the typical value $W_n=w^*$, that is, so long as $I_Q^B(w)$ is steep enough below $w^*$. The right steepness of $I_Q^B$ is not constrained in any way because values $W_n>w^*$ are exponentially suppressed in the integral \eqref{eq2m1}, which determines the efficiency of $Q_n$. Only values $W_n<w^*$ can affect the efficiency, as $L_n$ is then exponentially larger than the typical value $L_n^*=e^{-n w^*}$. In other words, accumulating likelihood factors that are exponentially \emph{smaller} than $L_n^*$ does not influence the efficiency of IS, but accumulating factors that are exponentially \emph{larger} than $L_n^*$ \emph{too frequently} does.

To close this section, we note that if there is a finite $w$ such that $-2-\delta \in \p I_Q^B(w)$ for some $\delta>0$, then under Assumption~\ref{hyp3}(b) the limit \eqref{eqass2} in our Assumption~\ref{hyp2} must be finite. As a result, we can rephrase that assumption in a more geometric and practical way as follows:

\medskip

\noindent\textbf{Assumption $\mathbf{2'}$.} \emph{$I_Q^B(w)$ must be coercive enough so that it has a point whose subdifferential contains a value strictly smaller than $-2$.}

\medskip

If $I_Q^B(w)$ is a convex and differentiable function, then this amounts to saying that there is a point whose slope is strictly smaller than $-2$.

\section{Examples}
\label{secexamples}

We illustrate in this section our results with simple examples to show how $I_Q^B(w)$ is calculated in practice and how its steepness determines the efficiency of IS estimators. We begin by considering the exponential tilting as a general change of measure, and then revisit the Gaussian and exponential \iid\ sample means, introduced in Sec.~\ref{secis}, as particular cases of that change of measure for which $I_Q^B(w)$ can be computed exactly. We also construct a variation of the exponential sample mean that shows that an IS estimator can be asymptotically efficient without having the exponential tilting form. This is an important result of this section.

We close the section with two examples related to Markov chains and stochastic differential equations to illustrate the generality of our formalism, to explain how the likelihood factor is defined for Markov processes, and to point out minor changes of notation when dealing with continuous-time processes. These examples should serve as a template to study the large deviations of more physical systems modelled by Markov processes in the context, for example, of nonequilibrium systems driven in steady-states and stochastic thermodynamics.

\subsection{Exponential tilting}
\label{secexamplesA}

We consider for simplicity the case where $M_n\in\reals$, since our goal is not to prove the efficiency of the exponential tilting in the most general setting but to illustrate our formalism based on $I_Q^B (w)$. The change of measure that we consider, as already defined in \eqref{eqexptilt1}, is thus
\be
Q_{n}(d\bX_n) = \frac{e^{n k M_n} P_n(d\bX_n)}{\EX_P[e^{n kM_n}]},
\ee
where $P_n(d\bX_n)$ is, as before, our prior measure or model of $\bX_n$ and $k$ is now a real parameter. 

The use of this exponential change of measure to study the large deviations of $M_n$ requires, as mentioned in Sec.~\ref{secis}, that the SCGF $\lambda_P(k)$ of $M_n$, as defined in \eqref{eqscgf1}, exists and is differentiable in $k$. In this case, it follows from the G\"artner-Ellis theorem \cite[Thm.~2.3.6]{dembo1998} that $I_P(m)$ is a strictly convex function, given by the Legendre--Fenchel transform of $\lambda_P(k)$, which means that it has a unique minimum and zero, denoted by $\barm$, corresponding to the typical value of $M_n$ under $P_n$. Thus, $I_P(\barm)=0$, which translates by Legendre duality into $\lambda_P'(0)=\barm$. 

In most applications, $I_P(m)$ is found to be a smooth (differentiable) function of $m$, so we will assume this property in this section to simplify the analysis. The particular form of the exponential change of measure then implies (see \cite[Thm.~2.4]{ellis2000}) that
\be
I_Q(m)=I_P(m)-km+\lambda_P(k),
\label{eqiqfct1}
\ee
so $I_Q(m)$ is smooth by assumption. It is also a good rate function, since $I_P(m)$ itself, as obtained from the G\"artner--Ellis theorem, is a good rate function.

With these results, we now apply our formalism by noting that the action $W_n$ of the exponential tilting is
\be
W_n=kM_n-c_n(k),
\label{eqexptiltf1}
\ee 
where
\be
c_n(k) = \frac{1}{n}\log \EX_P[e^{nkM_n}].
\ee
This is a deterministic function of $M_n$ that we write as $W_n=f_n(M_n)$, which implies that $J_Q(m,w)$ is defined only on the line $w=f_n(m)$ and is equal to $I_Q(m)$ on that line. The appearance of $n$ in this contraction appears a priori to be a problem; however, we show in Appendix~\ref{appcont} that, since $c_n(k)\ra\lambda_P(k)$ and $J_Q$ is a good rate function, $f_n$ can actually be replaced by the limit function $f(m) = km -c(k)$, where $c(k)=\lambda_P(k)$, consistently with \eqref{eqiqfct1} and \eqref{eqfr1}. As a result,
\be
I_Q^B(w) = 
\left\{
\begin{array}{lll}
\displaystyle\inf_{m\in \bar B}  I_Q(m) & & w=f(m)\\
\infty & & \text{otherwise.}
\end{array}
\right.
\ee

Having found $I_Q^B$, we now consider the rare event $M_n\in B\equiv[\hm,\infty)=\bar B$ with $\hm>\barm$. To make this event typical under $Q_n$, we fix $k$ so that the typical value $m^*$ of $M_n$ ``hits'' the boundary $\hm$. This is achieved by setting $k>0$ such that $\lambda_P'(k)=\hm$, leading to $I_Q(\hm)=0$ \cite{touchette2015} and
\be
\inf_w J_Q(\hm,w)=J_Q(\hm,f(\hm))=I_Q(\hm)=0.
\ee
This shows that we have a unique, typical pair $(m^*,w^*)=(\hm,f(\hm))$ under $Q_n$. Since $m^*=\hm\in \bar B$, we then have $I_Q^B(w^*)=0$ by Lemma~\ref{lem2}, so the first condition for efficiency in Theorem~\ref{thm1} is met.

To check the second condition, note that, since we have $k>0$ to achieve $m^*=\hm>\barm$, $J_Q(m,w)$ is not finite on $B$ when $w<w^*$, as shown in Fig.~\ref{figexptilt1}a, so that $I_Q^B(w)=\infty$ for all $w<w^*$. Thus, $I_Q^B$ is infinitely steep below $w^*$, which is sufficient, as mentioned before, to conclude that $Q_n$ is asymptotically efficient. Above $w^*$, we see instead that $J_Q(m,w)$ is finite on $B$, so $I_Q^B(w)$ is also finite for $w>w^*$. In fact, since $J_Q(m,w)$ has a unique zero at $(m^*,w^*)$, we have $0<I_Q^B(w)<\infty$ for $w>w^*$, showing overall that $I_Q^B(w)$ has the shape shown in Fig.~\ref{figconvex1}b, associated once again with a $Q_n$ that is asymptotically efficient.

\begin{figure}
\centering
\includegraphics{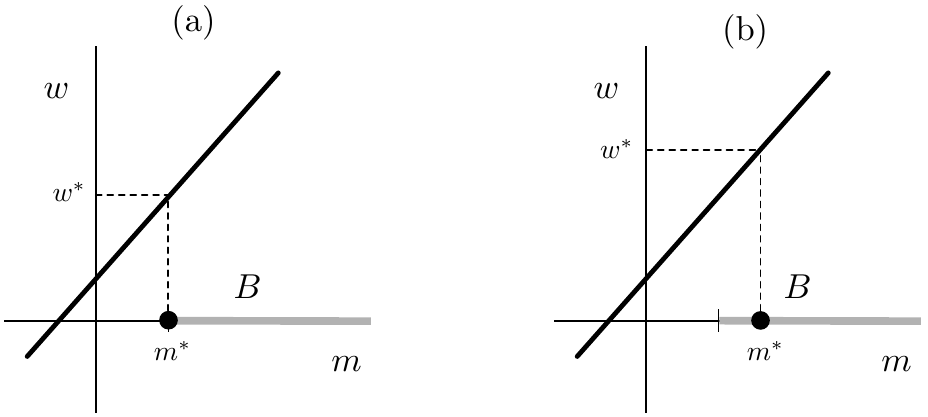}
\caption{Line $w=f(m)$ in the $(m,w)$ plane on which $J_Q(m,w)$ is defined for the exponential change of measure. (a) Asymptotically efficient $Q_n$ for which the typical value $m^*$ of $M_n$ is chosen on the boundary of $B$. (b) Non-efficient $Q_n$ associated with $m^*$ in the interior of $B$.}
\label{figexptilt1}
\end{figure}

The same argument can be used to show that $Q_n$ is not asymptotically efficient if $m^*$ is chosen inside $B$, that is, such that $m^*>\hm$. In this case, $I_Q^B(w)$ still has a zero at $w^*=f(m^*)$, but it does not diverge on the left of $w^*$ because the line $w=f(m)$ on which $J_Q(m,w)$ is finite does not ``go out'' of $B$ when $w$ goes below $w^*$; see Fig.~\ref{figexptilt1}b. Since $I_Q(m)$ is assumed to be smooth, $I_Q^B(w)$ must therefore have a smooth minimum in the vicinity of $w^*$ with zero derivative as in \eqref{eqzerotest1}, implying that $Q_n$ is not asymptotically efficient. 

Of course, the steepness condition could be satisfied in this case if $I_Q(m)$ had a steep-enough corner at $m^*$, but this would violate our assumption that $I_Q(m)$ is smooth, which is what is observed again in many applications.\footnote{A corner in $I_P(m)$ or $I_Q(m)$ signals physically a dynamical phase transition in the fluctuations of $M_n$. Here, we assume, for simplicity, that no such phase transition occurs. Note that a corner in the function $I_Q^B(w)$ is not related to a dynamical phase transition, since this function is obtained by conditioning. It can have a corner, as the example of the exponential tilting shows, regardless of whether $I_P(m)$ or $I_Q(m)$ is smooth.} With this assumption, the exponential tilting is therefore asymptotically efficient, as proved, if it ``hits'' $B$ on its boundary $\hm$ rather than in the interior of $B$. This can be generalized to $M_n\in\reals^D$ by requiring that $Q_n$ ``hit'' the dominating point of $B$, which is usually on the boundary of $B$; see \cite{sadowsky1990} for details.

All these results apply obviously if we change the rare event to $B=(-\infty,\hm]$ with $\hm<\barm$, in which case $k<0$. The efficiency of $Q_n$ is also direct if we consider the infinitesimal set $B=[\hm,\hm+dm]$. Then $k$ must be chosen such that $\lambda_P'(k)=\hm$ to achieve $m^*=\hm$, as mentioned before, which fixes $w^*=f(\hm)$ as the only action for which $I_Q^B(w)$ is finite. Thus, $I_Q^B(w)=0$ for $w=w^*$ and $\infty$ otherwise, which is obviously asymptotically efficient. This is a common case considered in physics, where the focus is usually on computing the rate function $I_P(m)$ rather than the probability $P_n(M_n\in B)$. In this case, one performs many simulations with different values of $k$ to estimate the probability of small, contiguous ``windows'' $[\hm,\hm+dm]$, which are converted with the large deviation limit \eqref{eqldp1} into points of $I_P(m)$ \cite{touchette2011}. 

Such a use of the exponential tilting in simulations is asymptotically efficient, as just shown, if $I_P(m)$ is a convex differentiable function and $B$ is a convex set. We have already seen in Sec.~\ref{secis} that the exponential tilting can be non-efficient if $B$ is nonconvex. By revisiting this example below, we will see that the problem in this case comes from the steepness condition controlling the asymptotic variance. On the other hand, the exponential tilting can also be non-efficient if $I_P(m)$ is nonconvex. The problem in this case is not the steepness of $I_Q^B$, and so the variance, but the fact that not all values of $M_n$ can be made typical by varying the tilting parameter $k$. This is related in physics to the nonequivalence of statistical ensembles; see \cite{touchette2018} for more details.

\subsection{Gaussian sample mean}

We now revisit the Gaussian sample mean studied in Sec.~\ref{secis} to show how the efficiency of $Q_n$ can be ascertained by calculating $I_Q^B$ explicitly using standard techniques from large deviation theory. This example also provides a further illustration of the exponential change of measure.

The setting is the same as in Example~\ref{exgauss1}: $P_n$ is the product measure $\cN(0,1)^{\otimes n}$ of $n$ \iid\ standard normal random variables, $M_n$ is their sample mean, and we look for the probability that $p_n=P_n(M_n\geq 1)$, so that $B=[1,\infty)$, leading to the rate exponent shown in \eqref{eqgaussex1}. 

We consider as the change of measure the product measure $Q_n=\cN(\mu,\sigma^2)^{\otimes n}$ associated with $n$ \iid\ Gaussian random variables with mean $\mu\in\reals$ and variance $\sigma^2>0$. The action for this change of measure, which is more general than the one considered in Example~\ref{exgauss1}, can be expressed as
\be
W_n =\frac{\mu}{\sigma^2}M_n +\left(\frac{\sigma^2-1}{2\sigma^2}\right)C_n-\frac{\mu^2}{2\sigma^2}-\log \sigma,
\label{eqactiongauss1}
\ee
where $M_n$ is the sample mean and 
\be
C_n=\frac{1}{n}\sum_{i=1}^n X_i^2
\ee
is the sample second moment. Since both $M_n$ and $C_n$ involve \iid\ random variables, we can use Cram\'er's theorem to find their joint rate function $K_Q(m,c)$ as the Legendre--Fenchel transform of the joint SCGF with respect to $Q_n$:
\be
\lambda_Q(k,\gamma) = \log \EX_Q[e^{kX+\gamma X^2}].
\ee
For $X\sim Q=\cN(\mu,\sigma^2)$, we find
\be
\lambda_Q(k,\gamma) =\frac{k^2\sigma^2/2+\mu  (\gamma  \mu +k)}{1-2 \gamma  \sigma ^2}-\frac{1}{2} \log \left(1-2\gamma  \sigma ^2\right)
\ee
for $1-2\gamma\sigma^2>0$. Accordingly,
\be
K_Q(m,c) = \sup_{k,\gamma}\{km+\gamma c-\lambda_Q(k,\gamma)\}= \frac{\sigma ^2 \log \left(\frac{\sigma ^2}{c-m^2}\right)+c+\mu ^2-2 \mu  m-\sigma ^2}{2 \sigma ^2},
\ee
which is defined for $m^2<c$ by the Cauchy-Schwarz inequality. Changing variables from $(M_n,C_n)$ to $(M_n,W_n)$, we then deduce
\be
J_Q(m,w) = K_Q(m,c(m,w)),
\label{eqctrans1}
\ee 
where
\be
c(m,w) =\frac{2 w\sigma^2-2m\mu+\mu^2+\sigma^2\log \sigma^2}{\sigma^2-1}.
\label{eqctrans2}
\ee
This holds if $\sigma\neq 1$. If $\sigma=1$, then $W_n$ is only a function of $M_n$,
\be
W_n =f(M_n)=\mu M_n -\frac{\mu^2}{2},
\ee
similarly to the exponential change of measure and, therefore,
\be
J_Q(m,w)=
\left\{
\begin{array}{lll}
(m-\mu)^2/2 & & w=f(m) \\
\infty & & \text{otherwise},
\end{array}
\right.
\label{eqgausssigma1}
\ee
given that $I_Q(m)=(m-\mu)^2/2$.

\begin{figure}
\includegraphics{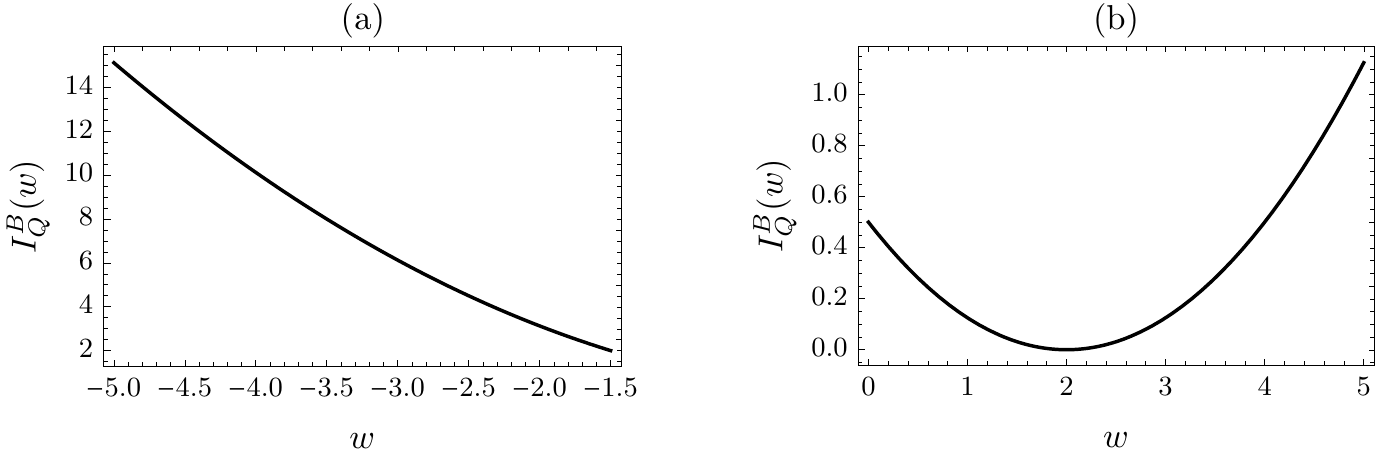}
\caption{$I_Q^B(w)$ for the Gaussian sample mean for (a) $\mu=-1$ and $\sigma=1$ and (b) $\mu=2$ and $\sigma=1$. Note that only the finite part of $I_Q^B(w)$ is shown.}
\label{figgausseff1}
\end{figure}

The efficiency of $Q_n$ is determined by computing $I_Q^B(w)$ from these explicit rate functions for various values of $\mu$ and $\sigma$. We begin with $\sigma=1$ and consider three cases for $\mu$, noting that $m^*=\mu$ and $w^*=f(m^*)=\mu^2/2$:
\begin{itemize}
\item $\mu<1$: $Q_n$ is not asymptotically efficient in this case simply because $m^*\notin \bar B$. This is confirmed by calculating $I_Q^B(w)$ from the contraction \eqref{eqdefiqb}. The result is shown for $\mu=-1$ in Fig.~\ref{figgausseff1}a: it does not have a zero, so $Q_n$ is indeed not asymptotically efficient.

\item $\mu\geq 1$: The calculation of $I_Q^B(w)$ gives in this case
\be
I_Q^B(w)=
\left\{
\begin{array}{lll}
(w/\mu-\mu/2)^2/2 & &w\geq\mu -\mu^2/2\\ 
\infty& &\text{otherwise}. 
\end{array}
\right.
\ee
For $\mu=1$, we have efficiency, since $I_Q^B(w)$ has its zero at $w^*=1/2$, implying $m^*\in \bar B$, and diverges left of $w^*$, so it is infinitely steep, as in Fig.~\ref{figconvex1}b. This is also confirmed by the fact that $Q_n$ is the exponential change of measure with $m^*=1$ at the boundary of $B$. For $\mu>1$, $I_Q^B(w)$ also has its zero at $w^*$ but $I_Q^B(w^{*-})'=0$, so it is not steep left of $w^*$, as shown in Fig.~\ref{figgausseff1}b. 
\end{itemize}
These results show overall that $Q_n$ is asymptotically efficient for $\sigma^2=1$ if and only if $\mu=1$.

For $\sigma^2\neq 1$, the contraction of $J_Q(m,w)$ leading to $I_Q^B(w)$ is more complicated to solve, since the minimization on $m\in B$ is further constrained by $m^2<c$ in the transformation \eqref{eqctrans1}. This results in a tedious constrained minimization problem, which can easily be solved numerically, however, to plot $I_Q^B(w)$ for any $\mu$ and $\sigma^2\neq 1$. Two representative solutions are shown in Fig.~\ref{figgaussnoneff1} and confirm our expectations from Theorem~\ref{thm1}. On the one hand, if $\mu<1$, then $Q_n$ is not asymptotically efficient since $m^*\notin \bar B$, as reflected by the fact that $I_Q^B(w)$ has no zero (Fig.~\ref{figgaussnoneff1}a). On the other hand, if $\mu\geq 1$, then $m^*\in \bar B$, so $I_Q^B(w)$ has a zero, but the rate function is not steep left of that zero, so $Q_n$ is still not asymptotically efficient (Fig.~\ref{figgaussnoneff1}b). This applies whether $\mu=1$ or $\mu>1$, which means in the end that $Q_n$ is not asymptotically efficient when $\sigma^2\neq 1$.

\begin{figure}
\includegraphics{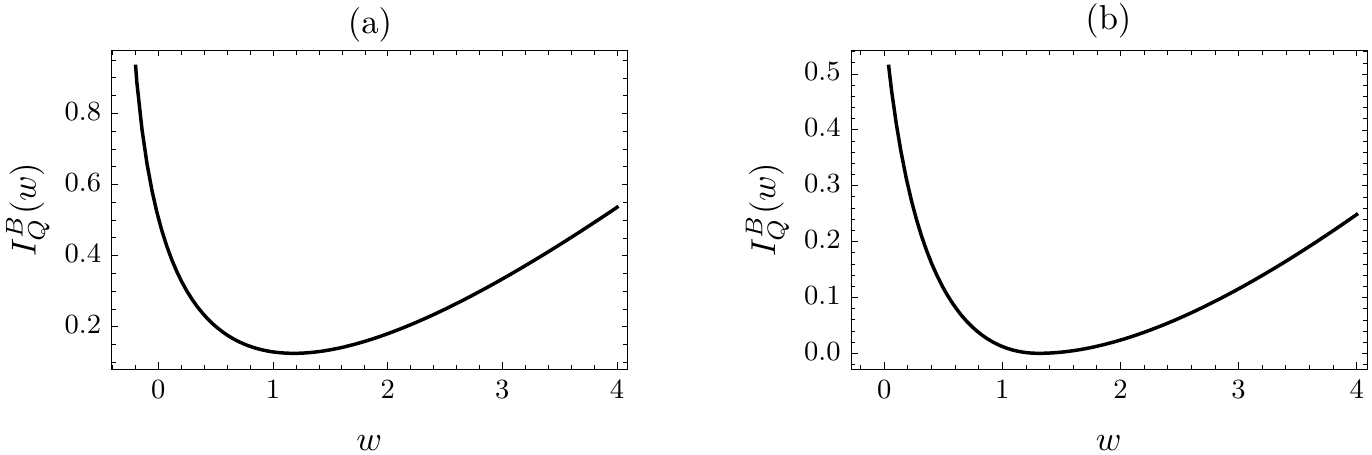}
\caption{$I_Q^B(w)$ for the Gaussian sample mean for (a) $\mu=0$ and $\sigma=2$ and (b) $\mu=1$ and $\sigma=2$. Note that only the finite part of $I_Q^B(w)$ is shown.}
\label{figgaussnoneff1}
\end{figure}

\subsection{Nonconvex $B$}

We can use the results of the previous section to understand the efficiency of $Q_n$ in Example~\ref{exnonconv1}, presented earlier to show that the exponential tilting can be non-efficient when $B$ is nonconvex. The setting is the same as in the previous section, except that $B$ is now chosen to be $B=(-\infty,-b]\cup [1,\infty)$ with $b>1$. We also consider $\mu=1$ and $\sigma^2=1$, which leads to
\be
I_Q^B(w)=
\left
\{\begin{array}{lll}
(w-1/2)^2/2&&w\geq 1/2 \text{ or } w\leq -b-1/2\\ 
\infty&&\text{otherwise}. 
\end{array}
\right.
\ee
This function is plotted in Fig.~\ref{figgauss1}. It has one zero at $w^*=1/2$, confirming that $m^*=1\in \bar B$ and a supporting line joining the two extremal points of $I_Q^B(w)$ at $w=-b-1/2$ and $w=w^*=1/2$, as shown in the figure, whose slope is $-(b+1)/2$. This is the supporting line with smallest slope, so $-2\in \p I_Q^B(w^*)$ if and only if $b\geq 3$, confirming the result of \cite{bucklew2004} announced in Example~\ref{exnonconv1}.

We can generalize this result by calculating $I_Q^B(w)$ for $\mu\neq 1$ to conclude that there is no other efficient parameters and, thus, that $Q_n$ is actually efficient if and only if $\mu=1$ and $b\geq 3$. This follows by considering three cases:
\begin{itemize}
\item $\mu\leq -b$. Then
\be
I_Q^B(w)=
\left
\{
\begin{array}{lll}
(w/\mu-\mu/2)^2/2& &w\leq\mu -\mu^2/2\ \mbox{or}\ w\geq -\mu b-\mu^2/2\\ 
\infty&&\text{otherwise}  
\end{array}
\right.
\ee
From this result, it can be checked that $I_Q^B(w^{*-})'=0$ if $\mu< -b$, so $Q_n$ is not asymptotically efficient. Then, if $\mu= -b$, one has $-2\in \p I_Q^B(w^*)$ if and only if $0<b\leq1/3$, which contradicts our assumption that $b>1$.

\item $-b<\mu<1$: Then $I_Q^B(w)$ does not have a zero, as expected from the fact that $m^*\notin \bar B$, so $Q_n$ is not asymptotically efficient.

\item $\mu>1$: Then
\be
I_Q^B(w)=
\left\{
\begin{array}{lll}
(w/\mu-\mu/2)^2/2&& w\geq\mu -\mu^2/2 \text{ or } w\leq -\mu b-\mu^2/2\\ 
\infty&&\text{otherwise}  
\end{array}
\right.
\ee
In this case $I_Q^B(w^*)=0$ at $w^*=\mu^2/2$, but $I_Q^B(w^{*-})'=0$.
\end{itemize}

\begin{figure}[t]
\centering
\includegraphics{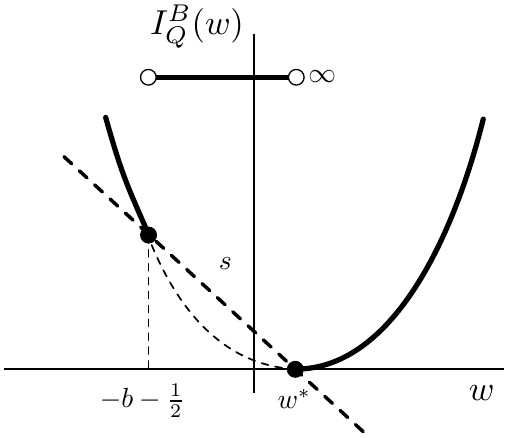}
\caption{Rate function $I_Q^B(w)$ for the Gaussian sample mean and nonconvex set $B$.}
\label{figgauss1}
\end{figure}

\subsection{Partial exponential tilting}

It is clear from the form of the exponential measure \eqref{eqexptilt1} that an \iid\ measure remains \iid\ when tilted by an additive functional $M_n$, which explains why such a change of measure is almost always considered when dealing with \iid\ sample means. In principle, other changes of measure that are not independent, not identically distributed or both could be considered and proved efficient within the formalism developed here. 

As a simple example, it can be checked for the Gaussian sample mean that changing all but one of the random variables is still asymptotically efficient, even though the resulting $Q_n$ is only a \emph{partial} exponential tilting (only $n-1$ random variables are tilted). This arises because the action of the partial exponential tilting differs from the action of the full exponential tilting by a sub-extensive term in $n$ that does not influence the large deviations of the action at the scale (or speed) $n$.

Surprisingly, this argument cannot be generalized to all \iid\ sample means and, in particular, not to the sample mean of exponential random variables considered in Example~\ref{exexpsm1}. In this case, we have seen that the product measure $Q_n=\mathcal{E}(\theta)^{\otimes n}$ of exponentials with parameter $\theta$, which changes the mean of all the random variables from $1$ to $1/\theta$, is asymptotically efficient if $\theta=1/b$. This can be checked by calculating $I_Q^B(w)$ explicitly, following the calculations of the previous sections or from the fact that $Q_n$ is the exponential tilting. 

The surprising result comes if we change the first $n-1$ random variables from $\cE(1)$ to $\cE(\theta)$, but keep the last one as $X_n\sim \cE(1)$. The action then is
\be
W_n = (1-\theta) \frac{n-1}{n} M_{n-1}+\frac{n-1}{n} \log \theta =(1-\theta)c_n M_{n-1} +c_n \log\theta,
\label{eqcontfct1}
\ee
where $M_{n-1}$ is the sample mean of the first $(n-1)$ random variables and $c_n=(n-1)/n$. Similarly, we can write
\be
M_n =\frac{n-1}{n} M_{n-1} +\frac{X_n}{n}=c_n M_{n-1}+T_n,
\label{eqcontfct2}
\ee
defining the new random variable $T_n=X_n/n$, which is independent of $M_{n-1}$. 

From these expressions, we find the joint rate function $J_Q(m,w)$ of $M_n$ and $W_n$ by noting that $M_{n-1}\sim\Gamma(n-1,(n-1)\theta)$, so this random variable satisfies the LDP with rate function
\be
I_\Gamma(y) = \theta  y-1-\log(\theta  y),
\ee
for $y\geq 0$, whereas $T_n\sim\cE(n)$, so it satisfies the LDP with rate function
\be
I_\cE(t)= t
\ee
also for $t\geq 0$. Both are good rate functions. From \eqref{eqcontfct1}, \eqref{eqcontfct2} and the fact that $c_n\ra 1$, we can then use the contraction principle presented in Appendix~\ref{appcont} to express $J_Q(m,w)$ as 
\be
J_Q(m,w) = \inf_{\substack{w=(1-\theta  )y +\log \theta  \\ m=y+t\\ y,t\geq 0}} I_\Gamma(y) +I_\mathcal{E}(t).
\label{eqcontpartialexp1}
\ee
In the latter, we have $m\leftrightarrow M_n\geq 0$, $w\leftrightarrow W_n$, $y\leftrightarrow M_{n-1}\geq 0$, and $t\leftrightarrow T_n\geq 0$. The solutions to the constraints are
\be
y = \frac{w-\log \theta  }{1-\theta  }\geq 0\hspace{1cm}\mbox{and}\hspace{1cm}
t = m-\frac{w-\log \theta  }{1-\theta  }\geq 0,
\ee
leading to $J_Q(m,w)=\infty$ if one of these constraints is not satisfied and
\be
J_Q(m,w)=m-w-\log\frac{w-\log \theta  }{1-\theta  }-1
\ee
otherwise. This rate function is good and has a single zero at $m^*=1/\theta$ and
\be
w^*=\frac{1-\theta}{\theta}+\log \theta.
\ee

At this point, we determine the asymptotic efficiency of $Q_n$, as before, by computing $I_Q^B(w)$ for different cases for $\theta$:
\begin{itemize}
\item $\theta> 1/b$: In this case, we do not even need to calculate $I_Q^B(w)$: $m^*=1/\theta\leq 1<b$, so that $m^*\notin \bar B$, implying that $Q_n$ is not asymptotically efficient.

\item $0<\theta <1/b$: A direct calculation based on the fact that the map $m\mapsto J_Q(m,w)$ is increasing gives in this case
\be
I_Q^B(w)=\frac{w-\log \theta  }{1-\theta  }-w-\log\frac{w-\log \theta  }{1-\theta  }-1
\ee
for all $w\in[(1-\theta  )b +\log \theta,w^*]$. From this result, we find $I_Q^B(w^{*-})'=0$, so that $Q_n$ is once again not asymptotically efficient.

\item $\theta=1/b$: A similar calculation as before now yields
\be
I_Q^B(w)=b-w-\log\frac{w-\log \theta  }{1-\theta  }-1=b-w-\log\frac{w+\log b  }{1-1/b  }-1,
\ee
for all $w\in(\log \theta,w^*]$. As a result, 
\be
I_Q^B(w^{*-})'=-1-\frac{1}{b-1},
\ee
which is smaller than $-2$ if and only if $b\leq 2$.
\end{itemize}

The conclusion is that $Q_n$ is asymptotically efficiency if and only if $\theta=1/b$ and $b\leq 2$, so the partial exponential tilting does not have the same efficiency as the full exponential tilting.

This result is special to the exponential distribution because $X_n/n$ in this case has large deviations at the same scale as $M_n$ and $W_n$, and so affects both random variables in the contraction \eqref{eqcontpartialexp1}. By contrast, if we choose $X_n\sim\cN(\mu,\sigma^2)$, then it can be checked that $X_n/n$ satisfies the LDP at the scale $n^2$ so adding or removing a  Gaussian random variable from a sample mean has no effect on its large deviations. The same applies to sample means of bounded random variables and, more generally, random variables whose distribution decays faster than exponentially.

\subsection{Markov chains}

We move away from \iid\ models to consider discrete-time Markov chains evolving on a set $\Omega$. We assume, for simplicity, that $\Omega$ is finite and that the transition kernel $p(x,y) = P(X_{i+1}=y|X_i=x)$ is homogeneous and defines an ergodic Markov chain. Starting with an initial distribution $\rho(x) = P(X_1=x)$, the probability model is then expressed as
\be
P_n(X_1,\ldots, X_n)=\rho(X_1)p(X_1,X_2)\cdots p(X_{n-1},X_n)
\ee 
for all $\bX_n=(X_1,\ldots,X_n) \in \Omega^n$. 

The observable $M_n$ is still a function of the configuration $\bX_n$, now interpreted as a trajectory in discrete time, from which we define the rare event probability $p_n=P_n(M_n\in B)$. We assume as before that $M_n$ satisfies the LDP with respect to $P_n$ with good rate function $I_P$ and consider a change of model $Q_n$ to  sample $p_n$ with the IS estimator \eqref{eqisest1}. The choice of $Q_n$ depends, as always, on the observable considered. For additive functionals having the general form
\be
M_n =\frac{1}{n}\sum_{i=1}^{n-1} g(X_i,X_{i+1}),
\label{eqaddfct1}
\ee
$Q_n$ is usually chosen to be another ergodic Markov chain with transition kernel $q(x,y)$, absolutely continuous with respect to $p(x,y)$, so that
\be
Q_n(X_1,\ldots,X_n) = \rho(X_1)q(X_1,X_2)\cdots q(X_{n-1},X_n),
\ee
using the same initial distribution. In this case, the action simply is
\be
W_n = \frac{1}{n}\sum_{i=1}^{n-1} \log\frac{q(X_i,X_{i+1})}{p(X_i,X_{i+1})},
\ee
so both $M_n$ and $W_n$ are additive functionals of the Markov chain.

With this property, the joint large deviations of $M_n$ and $W_n$ can be obtained by standard techniques from large deviation theory (see, e.g., \cite[Sec.~3.1]{dembo1998}). Define the joint SCGF of $M_n$ and $W_n$ with respect to $Q_n$ as
\be
\lambda_Q(k,\gamma)=\lim_{n\ra\infty} \frac{1}{n}\log \EX_Q[e^{nkM_n +n\gamma W_n}].
\ee
It is known that this function is given by the logarithm of the principal eigenvalue $\zeta_Q(k,\gamma)$ of the so-called \emph{tilted transition matrix}, defined by
\be
q_{k,\gamma}(x,y) =q(x,y) e^{k g(x,y)+\gamma h(x,y)},
\ee
where $h(x,y) = \log (q(x,y)/p(x,y))$. Thus,
\be
\lambda_Q(k,\gamma) = \log \zeta_Q(k,\gamma).
\ee
From this result, the rate $J_Q(m,w)$ is then found from the G\"artner--Ellis theorem by taking the Legendre--Fenchel transform of $\lambda_Q(k,\gamma)$, similarly to the Gaussian sample mean example. From there, we find $I_Q^B(w)$, as before, by minimising $J_Q(m,w)$ on $m\in B$ and use, finally, this function to determine the efficiency of $Q_n$. These steps can be implemented analytically for small Markov chains with a few states, while larger chains can be dealt with numerically using standard eigenvalue packages. 

In most applications, $Q_n$ is chosen to be the exponential tilting, which for a Markov chain and additive $M_n$ is known to be another Markov chain with transition kernel
\be
q(x,y) =\frac{e^{k g(x,y)}r_{k}(y)}{r_{k}(x)\zeta_P(k)} p(x,y),
\label{eqmcexptilt1}
\ee
where $\zeta_P(k)$ is the dominant eigenvalue of the tilted matrix
\be
p_{k}(x,y) =p(x,y) e^{k g(x,y)},
\label{eqtiltmat1}
\ee
and $r_{k}$ is the associated (right) eigenvector. In this case, it easy to verify that $W_n$ is given by \eqref{eqexptiltf1} modulo boundary terms involving $r_k(X_1)$ and $r_k(X_n)$, which can be neglected as they do not play a role in the large deviations of $W_n$ when $\Omega$ is finite. 

The Markov kernel \eqref{eqmcexptilt1} has been discussed in many contexts, including queuing theory \cite{asmussen2007}, simulations \cite{bucklew2004},  and statistical physics \cite{chetrite2014}, and can be seen as a generalization of Doob's $h$-transform arising in ``bridge-like'' conditionings of Brownian motion and other Markov processes; see \cite[Sec.~4.2]{chetrite2014} for details.

As a simple application, let us consider the symmetric binary Markov chain with $X_i\in \{0,1\}$ and transition matrix
\be
p=
\left(
\begin{array}{cc}
1-\alpha & \alpha\\
\alpha & 1-\alpha
\end{array}
\right),
\ee 
where $\alpha \in (0,1)$. Considering the observable to be the sample mean
\be
M_n = \frac{1}{n}\sum_{i=1}^n X_i,
\ee
which gives the fraction of $1$'s in $\bX_n$, we can formulate two different changes of process that are asymptotically efficient. The first is the exponential tilting in \eqref{eqmcexptilt1}, for which $\zeta_P(k)$ and $r_k$ can be computed analytically as the principal eigenvalue and eigenvector of
\be
p_{k}=
\left(
\begin{array}{cc}
1-\alpha & \alpha\\
\alpha e^{k} & (1-\alpha)e^k
\end{array}
\right),
\ee
obtained from \eqref{eqtiltmat1} using $g(x,y)=x$. The asymptotic efficiency of the resulting Markov chain is determined by the previous results on the exponential change of measure (see Section \ref{secexamplesA}), and follows again from the fact that $W_n$ is a function of $M_n$. Details can be found in \cite[Thm.~3]{bucklew1990b}.

Surprisingly, the exponential tilting is not the only modified Markov chain for which $W_n$ is a function of $M_n$. We can also take the transpose of the $2\times 2$ matrix $p_k$ above, which has the same principal eigenvalue as $p_k$, and normalize the rows to obtain the transition matrix
\be
q=
\left(
\begin{array}{cc}
\frac{1-\alpha}{F_0} & \frac{\alpha e^k}{F_0}\\
\frac{\alpha}{F_1} & \frac{(1-\alpha)e^k}{F_1}
\end{array}
\right),
\label{eqnotexptilt1}
\ee
where $F_0=1-\alpha+\alpha e^k$ and $F_1=\alpha+(1-\alpha)e^k$. It can be checked that the action induced by this transition matrix, which is obviously different from \eqref{eqmcexptilt1}, is
\be
W_n= k M_n -(1-M_n) \log F_0- M_n \log F_1,
\ee
modulo unimportant boundary terms, so that $W_n$ is an affine function of $M_n$. Consequently, we have found another example of process that is potentially asymptotically efficient and yet is not the exponential tilting. The difference is that the value of $k$ in \eqref{eqnotexptilt1} fixing the typical value $M_n=b$ under $Q_n$ is not specified by the relation $\lambda_P'(k)=b$, which is special to the exponential tilting. This is not important for simulations, as we only need in practice a parameter that can be varied to fix any typical value of $M_n$, whatever the relation between the two.

In principle, other efficient changes of process could be constructed using, for example, higher-order Markov chains or even non-Markovian processes whose measure $Q_n(X_1,\ldots, X_n)$ does not factorize as a product of transition probabilities. Very little, unfortunately, is known about non-Markovian processes and their large deviations \cite{harris2009}. The main reason for considering the exponential tilting is that it is known to be a Markov chain when the underlying measure $P_n$ is Markovian and the observable $M_n$ is additive in time \cite{kuchler1998,chetrite2014}. If one considers, for instance, the square of a sample mean as the observable $M_n$, then the exponential tilting is not Markovian.

\subsection{Diffusion processes}

The application of our results to Markov processes evolving in continuous time follows the examples above with minor changes of notations and techniques developed in large deviation theory to deal with this class of processes. For this reason, we do not cover this class in details, but only indicate the main changes involved, focusing as a specific example on diffusion processes $(X_t)_{t\geq 0}$ in $\reals$, described by the following stochastic differential equation (SDE):
\be
dX_t = F(X_t) dt +\sigma(X_t) dB_t.
\label{eqsde1}
\ee
Here, $B_t$ is a Brownian motion in $\reals$, while $F$ and $\sigma$ are two real functions of $X_t$, known as the drift and the noise amplitude, respectively. Many different observables can be defined in the context of SDEs, depending on the application and large deviation limit (low-noise or long-time) considered. We can consider, for example,
\be
M_T=\frac{1}{T}\int_0^T f(X_t) dt
\label{eqobssde1}
\ee
as a generalisation of the sample means studied before, which leads us to the problem of estimating the probability $P_T(M_T\in B)$ in the limit $T\ra\infty$, where $P_T$ is the probability measure of the process $X_t$ over the time interval $[0,T]$ induced by the SDE \eqref{eqsde1}.
 
Contrary to discrete-time Markov chains, we cannot write down any explicit expression for $P_T$; however, there is an explicit expression for the Radon--Nikodym derivative associated with a change of process if we consider that process to result from a change of drift. That is to say, change the drift $F$ in \eqref{eqsde1} to obtain a new SDE 
\be
d X_t = G(X_t)dt+\sigma (X_t)dB_t,
\ee
which defines a new law for $(X_t)_{t=0}^T$ that we denote by $Q_T$. Then the action of this process, as compared to the original one, is obtained from Girsanov's theorem \cite[Sec.~6.4]{stroock1979}, which states that
\be
L_T =\frac{dP_T}{dQ_T} 
=
\exp
\left(
\int_0^T c(X_t) dB_t-\frac{1}{2}c(X_t)^2 dt
\right)
\ee
where 
\be
c(x) = \frac{F(x)-G(x)}{\sigma(x)}.
\ee
Consequently, 
\be
W_T = -\frac{1}{T}\log \frac{dP_T}{dQ_T} = \frac{1}{2T}\int_0^T c(X_t)^2 dt-\frac{1}{T}\int_0^T c(X_t) dB_t.
\label{eqdiffaction1}
\ee
Both $M_T$ and $W_T$ are functions of the trajectory $(X_t)_{t=0}^T$ with law $Q_T$ over $[0,T]$. 

From this result, the joint large deviations of $M_T$ and $W_T$ with respect to $Q_T$ can be obtained, similarly to Markov chains, by solving a spectral problem in which the transition matrix is replaced by the infinitesimal generator of $X_t$ \cite{touchette2017}. As for Markov chains, the notion of exponential change of measure can also be defined for continuous-time processes and involves spectral elements related to the large deviations of $M_T$ with respect to $P_T$. This is fully explained in \cite{chetrite2014}.

As a simple illustration of the exponential change of measure, consider the Ornstein--Uhlenbeck process in $\reals$, defined by
\be
dX_t = -\gamma X_t dt+\sigma dB_t,
\ee
where $\gamma>0$ and $\sigma>0$. Moreover, let us take 
\be
M_T = \frac{1}{T}\int_0^T X_t dt
\ee
as the observable, which represents the area of $X_t$ per unit time. In this case, it can be shown (see \cite[Sec.~6]{chetrite2014} for the full calculation) that the exponential change of measure associated with $X_t$, corresponding to the process version of \eqref{eqexptilt1}, is another SDE with drift $G(x)=-\gamma(x-m)$ and noise amplitude $\sigma$. For this new process, the typical value of $M_T$ is clearly $m$, so the process is asymptotically efficient for estimating the large deviation probability of $M_T\in B$ with $B=[m,\infty)$, $B=(-\infty, m]$ or $B=[m,m+dm]$. In all cases, we  find from \eqref{eqdiffaction1} that the typical value of $W_T$ under $Q_T$ is
\be
w^*=\frac{\gamma^2 m^2}{2\sigma^2},
\ee
which is the known rate function $I_P(m)$ of $M_T$ with respect to $P_T$. 

This result is a diffusion analog of the Gaussian sample mean studied before, for which we found that the exponential tilting is another Gaussian with translated mean. Here, we see that a Gaussian process tilted with the sample mean is a Gaussian process having the same variance but a different mean. It can be checked that, as for the \iid\ Gaussian sample mean, this is the only efficient change of measure in the class of Gaussian processes with linear drift. If we change the friction coefficient $\gamma$ to another value, in addition to adding a constant to change the mean, then the process is no longer asymptotically efficient for the same reason that changing the variance in the Gaussian sample mean is not efficient. The calculations for the SDE are more complicated, but the results are similar.

Applications of IS for diffusions have been studied in statistical physics \cite{kundu2011} as well as more applied areas such as finance and queueing theory, focusing invariably on the exponential change of measure \cite{asmussen2007}. In future works, it would be interesting to apply our formalism to study other IS measures for Markov processes, such as the one proposed in \cite{klymko2018,whitelam2018c,jacobson2019}, to determine their efficiency and to see, in the end, if there is any gain from not using the exponential tilting, which is difficult to construct in practice, since it involves the solution of a spectral problem whose knowledge is equivalent to solving the large deviation problem \cite{chetrite2015}. Another important problem is to determine whether our formalism can be applied to study the efficiency of IS in the low-noise limit of SDEs, which is extensively used in physics, chemistry and engineering to study rare transition pathways \cite{cottrell1983,freidlin1984,graham1989,luchinsky1998}. The IS method itself can be applied in this limit (see, e.g., \cite{eijnden2012}), but it is not clear to what extent our assumptions hold.

\appendix
\section{Convex analysis}
\label{appconvex}

We collect in this section basic results of convex analysis used in the paper in relation to the rate function $I_Q^B(w)$, defined in \eqref{eqdefiqb}, and its Legendre--Fenchel transform $\lambda_Q^B(k)$, defined in \eqref{eqdefscgfiqb1}. Both are functions of a single real variable, so we state the necessary results only for this simple case. We assume further that all convex functions are proper closed convex functions. For more general results and proofs, we refer to \cite{rockafellar1970,rockafellar1988,borwein2006}.

\subsection{Subdifferentials}

Let $f:\reals\ra\treals$ be a real function taking values in the set of extended reals $\treals$. The \emph{subdifferential} $\p f(x)$ of $f$ at the point $x$ is the set of all values $k\in\reals$ such that
\be
f(y)\geq f(x) +k(y-x)
\ee
for all $y\in\reals$ \cite[Sec.~23]{rockafellar1970}. Put differently, and as illustrated in Fig.~\ref{figconv1}a, $\p f(x)$ is the set of slopes of all possible supporting lines of $f$ at $x$. If $f$ has not supporting line at $x$, then $\p f(x)=\emptyset$. We will see next that this may happen when $f$ is nonconvex.

For convex functions, subdifferentials exist everywhere in the domain of $f(x)$, except possibly at boundary points \cite[Thm.~23.4]{rockafellar1970}. For this class of functions, we have in fact $\p f(x) = [f'(x^-),f'(x^+)]$, where $f'(x^-)$ is the left-derivative and $f'(x^+)$ the right-derivative \cite[Thm.~24.3]{rockafellar1970}. If these are equal, $f$ is differentiable at $x$ so that $\p f(x) = \{f'(x)\}$ \cite[Thm.~25.1]{rockafellar1970}. In all cases, $\p f(x)$ is a closed convex interval \cite[p.~215]{rockafellar1970}.

\begin{figure}[t]
\centering
\includegraphics{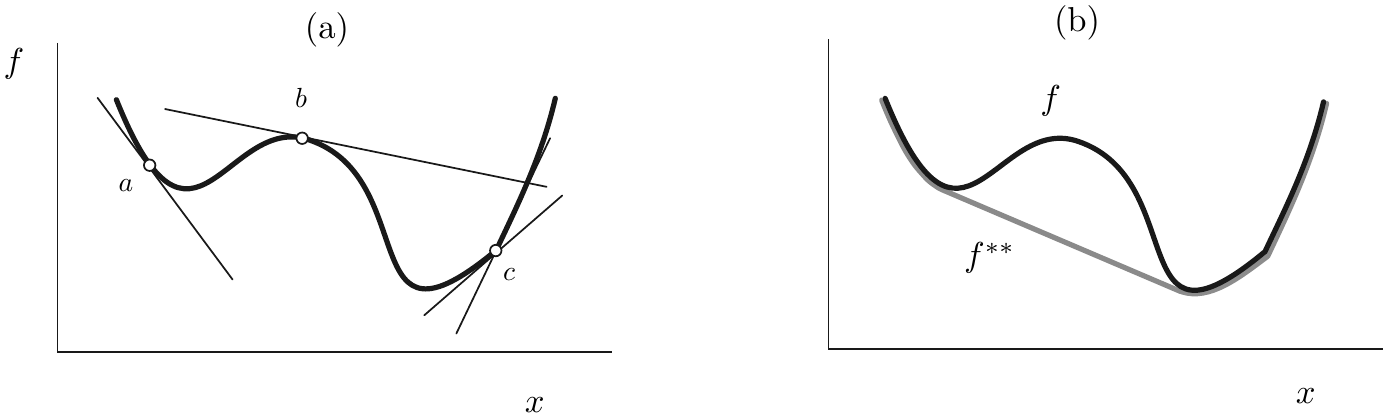}
\caption{(a) Function $f(x)$ with a unique supporting line at the point $a$, no supporting line at the point $b$, and many supporting lines at the point $c$, leading to $\p f(a)=\{f'(a)\}$, $\p f(b)=\emptyset$, and $\p f(c)=[f'(c^-),f'(c^+)]$. (b) Function $f(x)$ and its convex envelope $f^{**}(x)$.}
\label{figconv1}
\end{figure}

\subsection{Legendre--Fenchel transforms}

The \emph{Legendre--Fenchel transform} of $f$ is the real function defined by
\be
f^*(k) = \sup_{x\in\reals} \{kx-f(x)\},\qquad k\in\reals.
\ee
This function is also called the \emph{dual} or \emph{conjugate} of $f$ and has the property of being convex \cite[Thm.~12.2]{rockafellar1970}. The \emph{double dual} or \emph{biconjugate} of $f$ is the Legendre--Fenchel of $f^*$:
\be
f^{**}(x) = \sup_{k\in\reals} \{kx-f^*(k)\}.
\label{eqdoublelf1}
\ee
This is also a convex function, corresponding to the convex envelope or convex hull of $f$ \cite[Thm.~11.1]{rockafellar1988}, as illustrated in Fig.~\ref{figconv1}b. 

With this geometric interpretation of $f^{**}$, it is natural to say that $x$ is a \emph{convex point} of $f$ if $f(x)=f^{**}(x)$ and a \emph{nonconvex point} of $f$ if $f(x)\neq f^{**}(x)$. An important result proved in \cite[Lem.~4.1]{ellis2000} is that the set of convex points coincides with the set of points admitting supporting lines, except possibly at boundary points. With this proviso, we then have $f(x)=f^{**}(x)$ if and only if $\p f(x)\neq \emptyset$. This is illustrated in Fig.~\ref{figconv1}a. The same result also implies that, if $f(x)=f^{**}(x)$, then $\p f(x)=\p f^{**}(x)$.

In this paper, we deal with rate functions, which always have at least one global minimum. Denoting one such minimizer by $x^*$, we then have $0\in \p f(x^*)$. Hence, $x^*$ is a convex point such that $f(x^*)=f^{**}(x^*)$ and $\p f(x^*)=\p f^{**}(x^*)$.

\subsection{Duality}

The proof of our main result, Theorem~\ref{thm1}, is based on another important result about convex functions stating (see \cite[Cor.~23.5.1]{rockafellar1970} or \cite[Prop.~11.3]{rockafellar1988}) that
\be
k\in \p f(x)\iff x\in \p f^*(k).
\label{eqdual1}
\ee
This property expresses a form of duality or conjugacy between the slopes of $f$ and the slopes of $f^*$, illustrated in Fig.~\ref{figduality1}a. From this result, it is easy to see that convex, affine parts of $f$ correspond to cusps of $f^*$, and vice versa, as shown in Fig.~\ref{figduality1}b.

The duality in \eqref{eqdual1} also holds for $f^{**}$, since this function is convex and is the Legendre--Fenchel transform of $f^*$. Therefore,
\be
k\in \p f^{**}(x)\iff x\in \p f^*(k). 
\label{eqdual2}
\ee
This result implies that $f^*$ has a cusp also when $f$ is nonconvex, as shown in Fig.~\ref{figduality1}, since $f^{**}$ is affine where $f$ is nonconvex. Thus, $f^*$ has a cusp either if $f$ is affine or $f$ is nonconvex. 

Since subdifferentials of $f$ and $f^{**}$ match at convex points, it is also clear from \eqref{eqdual2} that the first duality \eqref{eqdual1} holds locally at these points even if $f$ is not globally convex.  We use this result in this paper when dealing with the subdifferential of $I_Q^B$ at its global minimum $w^*$, which is a convex point, as mentioned. In this case, the first duality result can be applied at that point even though $I_Q^B$ might be nonconvex at other points, as in Fig.~\ref{figconvex1}c or Fig.~\ref{figgauss1}.

\begin{figure}[t]
\centering
\includegraphics{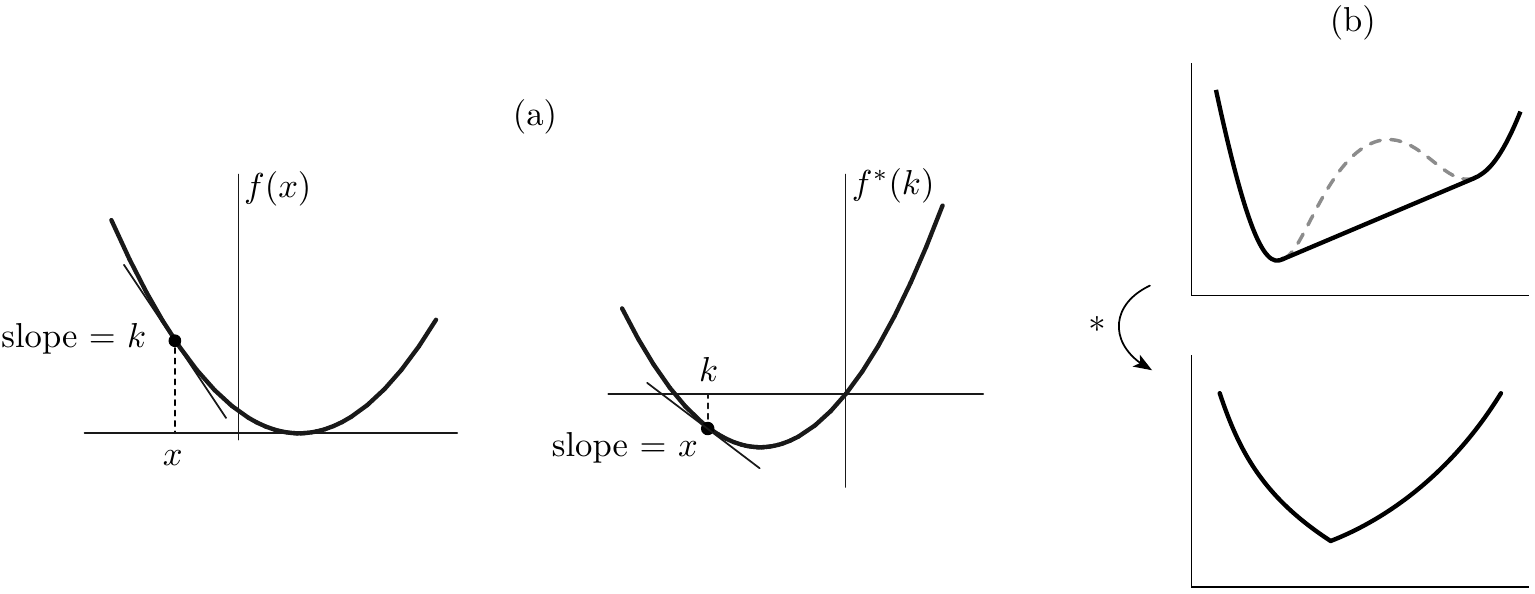}
\caption{(a) Illustration of the duality between the slopes of $f(x)$ and the slopes of its Legendre--Fenchel transform $f^*(k)$. (b) Functions with affine or nonconvex parts give rise to a Legendre--Fenchel transform having a cusp.}
\label{figduality1}
\end{figure}

\section{Contraction principle}
\label{appcont}

The contraction principle is an important result in large deviation theory relating the rate functions of random variables that can be mapped to one another. Let $(A_n)_{n>0}$ be a sequence of random variables satisfying the LDP with good rate function $I_A$ and let $(B_n)_{n>0}$ be another sequence such that $B_n=f(A_n)$ with $f$ continuous. Then $(B_n)_{n>0}$ also satisfies the LDP with good rate function
\be
I_B(b)=\inf_{a:f(a)=b} I_A(a).
\label{eqcont1}
\ee
See \cite[Thm.~4.2.1]{dembo1998} for details.

Instead of considering a single continuous function $f$ as the contraction, one can also consider a sequence $(f_n)_{n>0}$ of continuous functions. In this case, the contraction principle also applies provided that $f_n$ is ``close enough'' to $f$ with respect to $P_n$. To be more precise, let $\cA$ denote the space of $A_n$ and define
\be
\Gamma_{n,\delta}=\{a\in\cA: \|f_n(a)-f(a)\|>\delta\}
\label{eqgammaset1}
\ee
as the set of points for which $f_n$ differs from $f$ by at least $\delta>0$ with respect to any metric $\|\cdot\|$ on $\cB$, the space of $B_n$. Then, according to \cite[Cor.~4.2.21]{dembo1998}, $B_n=f_n(A_n)$ satisfies the LDP with good rate function $I_B$ given by \eqref{eqcont1} with $f$ as the contraction if, for all $\delta>0$,
\be
\lim_{n\ra\infty} \frac{1}{n}\log P_n(\Gamma_{n,\delta})=-\infty.
\label{eqcontcond1}
\ee
This condition only means that the probability that $f_n$ differs from $f$ decreases faster than exponentially with $n$ in the large deviation limit. This is met in most cases when $f_n$ is smooth and $I_A$ is a good rate function.

Two particular applications of this result are considered in the paper. 
\begin{example}
\label{exmultcont1}
Consider two real random variables $A_n$ and $B_n$ related by the simple rescaling $B_n=c_n A_n$ with $c_n\ra 1$ as $n\ra\infty$. Here, the limit function is the identity $f(a)=a$, so one expects $A_n$ and $B_n$ to have the same rate function. This is verified by noting that, for every $M>0$, there exists $n_0=n_0(M,\delta)$ such that for all $n\geq n_0$, one has $\Gamma_{n,\delta}\subseteq(-\infty,-M]\cup[M,\infty)$. Therefore, from the definition of the LDP, we obtain
\be
\limsup_{n\ra\infty} \frac{1}{n}\log P_n(\Gamma_{n,\delta})\leq -\inf_{|a|\geq M} I_A(a).
\label{eqcontcalc1}
\ee
But, since the rate function $I_A$ of $A_n$ is good, it is coercive, so that
\be
\lim_{|a|\ra\infty} I_A(a)=\infty.
\ee
Therefore, the limit on the left-hand side of \eqref{eqcontcalc1} must give $-\infty$, implying $I_B(b) = I_A(b)$ from the condition \eqref{eqcontcond1}.
\end{example}

\begin{example}
\label{exaddcont1}
Let $B_n =f(A_n)+c_n$ with $c_n\ra c$. Then the rate function of $B_n$ is obtained from \eqref{eqcont1} with the contraction $B_n=f(A_n)+c$. This follows trivially because the distance between $f(a)+c_n$ and $f(a)+c$ is constant in $a$. Since $c_n\ra c$, there must be an $n$ beyond which $|c_n-c|<\delta$, leading to $P_n(\Gamma_{n,\delta})=0$, so the condition \eqref{eqcontcond1} is also satisfied.
\end{example}

These results also hold if $\Gamma_{n,\delta}$ is defined on a subset of $\cA$, since any restriction or constraint on $A_n$ can be included in the definition of $f_n$. This arises, for example, when considering the contraction of $J_Q(m,w)$ to $I_Q^B(w)$, which involves the restriction $m\in B$. 

\begin{acknowledgments}
We are greatly indebted to Julien Reygner for valuable comments and insightful suggestions on the first version of the paper, which led to some technical modifications in Assumption~\ref{hyp2}, Assumption~\ref{hyp3}, and Eq.~\eqref{eqdefiqb} in this version. We also thank Gr\'egoire Ferr\'e and Gabriel Stoltz for carefully reading the paper. A.G. thanks Maxime Sangnier for fruitful discussions during the writing of this paper. H.T.\ is supported by Stellenbosch University (Establishment Funds) and the National Research Foundation of South Africa (Grant No.\ 96199).
\end{acknowledgments}

\bibliography{masterbibmin}

\begin{thebibliography}{78}%
\makeatletter
\providecommand \@ifxundefined [1]{%
 \@ifx{#1\undefined}
}%
\providecommand \@ifnum [1]{%
 \ifnum #1\expandafter \@firstoftwo
 \else \expandafter \@secondoftwo
 \fi
}%
\providecommand \@ifx [1]{%
 \ifx #1\expandafter \@firstoftwo
 \else \expandafter \@secondoftwo
 \fi
}%
\providecommand \natexlab [1]{#1}%
\providecommand \enquote  [1]{``#1''}%
\providecommand \bibnamefont  [1]{#1}%
\providecommand \bibfnamefont [1]{#1}%
\providecommand \citenamefont [1]{#1}%
\providecommand \href@noop [0]{\@secondoftwo}%
\providecommand \href [0]{\begingroup \@sanitize@url \@href}%
\providecommand \@href[1]{\@@startlink{#1}\@@href}%
\providecommand \@@href[1]{\endgroup#1\@@endlink}%
\providecommand \@sanitize@url [0]{\catcode `\\12\catcode `\$12\catcode
  `\&12\catcode `\#12\catcode `\^12\catcode `\_12\catcode `\%12\relax}%
\providecommand \@@startlink[1]{}%
\providecommand \@@endlink[0]{}%
\providecommand \url  [0]{\begingroup\@sanitize@url \@url }%
\providecommand \@url [1]{\endgroup\@href {#1}{\urlprefix }}%
\providecommand \urlprefix  [0]{URL }%
\providecommand \Eprint [0]{\href }%
\providecommand \doibase [0]{http://dx.doi.org/}%
\providecommand \selectlanguage [0]{\@gobble}%
\providecommand \bibinfo  [0]{\@secondoftwo}%
\providecommand \bibfield  [0]{\@secondoftwo}%
\providecommand \translation [1]{[#1]}%
\providecommand \BibitemOpen [0]{}%
\providecommand \bibitemStop [0]{}%
\providecommand \bibitemNoStop [0]{.\EOS\space}%
\providecommand \EOS [0]{\spacefactor3000\relax}%
\providecommand \BibitemShut  [1]{\csname bibitem#1\endcsname}%
\let\auto@bib@innerbib\@empty
\bibitem [{\citenamefont {Shwartz}\ and\ \citenamefont
  {Weiss}(1995)}]{shwartz1995}%
  \BibitemOpen
  \bibfield  {author} {\bibinfo {author} {\bibfnamefont {A.}~\bibnamefont
  {Shwartz}}\ and\ \bibinfo {author} {\bibfnamefont {A.}~\bibnamefont
  {Weiss}},\ }\href@noop {} {\emph {\bibinfo {title} {Large Deviations for
  Performance Analysis}}},\ Stochastic Modeling Series\ (\bibinfo  {publisher}
  {Chapman and Hall},\ \bibinfo {address} {London},\ \bibinfo {year}
  {1995})\BibitemShut {NoStop}%
\bibitem [{\citenamefont {Wales}(2004)}]{wales2004}%
  \BibitemOpen
  \bibfield  {author} {\bibinfo {author} {\bibfnamefont {D.}~\bibnamefont
  {Wales}},\ }\href@noop {} {\emph {\bibinfo {title} {Energy Landscapes:
  {A}pplications to Clusters, Biomolecules and Glasses}}}\ (\bibinfo
  {publisher} {Cambridge University Press},\ \bibinfo {address} {Cambridge},\
  \bibinfo {year} {2004})\BibitemShut {NoStop}%
\bibitem [{\citenamefont {E}\ \emph {et~al.}(2004)\citenamefont {E},
  \citenamefont {Ren},\ and\ \citenamefont {Vanden-Eijnden}}]{eijnden2004}%
  \BibitemOpen
  \bibfield  {author} {\bibinfo {author} {\bibfnamefont {W.}~\bibnamefont {E}},
  \bibinfo {author} {\bibfnamefont {W.}~\bibnamefont {Ren}}, \ and\ \bibinfo
  {author} {\bibfnamefont {E.}~\bibnamefont {Vanden-Eijnden}},\ }\bibfield
  {title} {\enquote {\bibinfo {title} {Minimum action method for the study of
  rare events},}\ }\href {\doibase 10.1002/cpa.20005} {\bibfield  {journal}
  {\bibinfo  {journal} {Comm. Pure Appl. Math.}\ }\textbf {\bibinfo {volume}
  {57}},\ \bibinfo {pages} {637--656} (\bibinfo {year} {2004})}\BibitemShut
  {NoStop}%
\bibitem [{\citenamefont {Leli\`evre}\ \emph {et~al.}(2010)\citenamefont
  {Leli\`evre}, \citenamefont {Rousset},\ and\ \citenamefont
  {Stoltz}}]{lelievre2010}%
  \BibitemOpen
  \bibfield  {author} {\bibinfo {author} {\bibfnamefont {T.}~\bibnamefont
  {Leli\`evre}}, \bibinfo {author} {\bibfnamefont {M.}~\bibnamefont {Rousset}},
  \ and\ \bibinfo {author} {\bibfnamefont {G.}~\bibnamefont {Stoltz}},\
  }\href@noop {} {\emph {\bibinfo {title} {Free Energy Computations: A
  Mathematical Perspective}}}\ (\bibinfo  {publisher} {Imperial College
  Press},\ \bibinfo {address} {London},\ \bibinfo {year} {2010})\BibitemShut
  {NoStop}%
\bibitem [{\citenamefont {Ellis}(1985)}]{ellis1985}%
  \BibitemOpen
  \bibfield  {author} {\bibinfo {author} {\bibfnamefont {R.~S.}\ \bibnamefont
  {Ellis}},\ }\href@noop {} {\emph {\bibinfo {title} {Entropy, Large
  Deviations, and Statistical Mechanics}}}\ (\bibinfo  {publisher} {Springer},\
  \bibinfo {address} {New York},\ \bibinfo {year} {1985})\BibitemShut {NoStop}%
\bibitem [{\citenamefont {Dembo}\ and\ \citenamefont
  {Zeitouni}(1998)}]{dembo1998}%
  \BibitemOpen
  \bibfield  {author} {\bibinfo {author} {\bibfnamefont {A.}~\bibnamefont
  {Dembo}}\ and\ \bibinfo {author} {\bibfnamefont {O.}~\bibnamefont
  {Zeitouni}},\ }\href@noop {} {\emph {\bibinfo {title} {Large Deviations
  Techniques and Applications}}},\ \bibinfo {edition} {2nd}\ ed.\ (\bibinfo
  {publisher} {Springer},\ \bibinfo {address} {New York},\ \bibinfo {year}
  {1998})\BibitemShut {NoStop}%
\bibitem [{\citenamefont {{den Hollander}}(2000)}]{hollander2000}%
  \BibitemOpen
  \bibfield  {author} {\bibinfo {author} {\bibfnamefont {F.}~\bibnamefont {{den
  Hollander}}},\ }\href@noop {} {\emph {\bibinfo {title} {Large Deviations}}},\
  Fields Institute Monograph\ (\bibinfo  {publisher} {AMS},\ \bibinfo {address}
  {Providence, RI},\ \bibinfo {year} {2000})\BibitemShut {NoStop}%
\bibitem [{\citenamefont {Touchette}(2009)}]{touchette2009}%
  \BibitemOpen
  \bibfield  {author} {\bibinfo {author} {\bibfnamefont {H.}~\bibnamefont
  {Touchette}},\ }\bibfield  {title} {\enquote {\bibinfo {title} {The large
  deviation approach to statistical mechanics},}\ }\href {\doibase
  10.1016/j.physrep.2009.05.002} {\bibfield  {journal} {\bibinfo  {journal}
  {Phys. Rep.}\ }\textbf {\bibinfo {volume} {478}},\ \bibinfo {pages} {1--69}
  (\bibinfo {year} {2009})}\BibitemShut {NoStop}%
\bibitem [{\citenamefont {Garrahan}\ \emph {et~al.}(2007)\citenamefont
  {Garrahan}, \citenamefont {Jack}, \citenamefont {Lecomte}, \citenamefont
  {Pitard}, \citenamefont {van Duijvendijk},\ and\ \citenamefont {van
  Wijland}}]{garrahan2007}%
  \BibitemOpen
  \bibfield  {author} {\bibinfo {author} {\bibfnamefont {J.~P.}\ \bibnamefont
  {Garrahan}}, \bibinfo {author} {\bibfnamefont {R.~L.}\ \bibnamefont {Jack}},
  \bibinfo {author} {\bibfnamefont {V.}~\bibnamefont {Lecomte}}, \bibinfo
  {author} {\bibfnamefont {E.}~\bibnamefont {Pitard}}, \bibinfo {author}
  {\bibfnamefont {K.}~\bibnamefont {van Duijvendijk}}, \ and\ \bibinfo {author}
  {\bibfnamefont {F.}~\bibnamefont {van Wijland}},\ }\bibfield  {title}
  {\enquote {\bibinfo {title} {Dynamical first-order phase transition in
  kinetically constrained models of glasses},}\ }\href {\doibase
  10.1103/PhysRevLett.98.195702} {\bibfield  {journal} {\bibinfo  {journal}
  {Phys. Rev. Lett.}\ }\textbf {\bibinfo {volume} {98}},\ \bibinfo {pages}
  {195702} (\bibinfo {year} {2007})}\BibitemShut {NoStop}%
\bibitem [{\citenamefont {Garrahan}\ and\ \citenamefont
  {Lesanovsky}(2010)}]{garrahan2010}%
  \BibitemOpen
  \bibfield  {author} {\bibinfo {author} {\bibfnamefont {J.~P.}\ \bibnamefont
  {Garrahan}}\ and\ \bibinfo {author} {\bibfnamefont {I.}~\bibnamefont
  {Lesanovsky}},\ }\bibfield  {title} {\enquote {\bibinfo {title}
  {Thermodynamics of quantum jump trajectories},}\ }\href {\doibase
  10.1103/PhysRevLett.104.160601} {\bibfield  {journal} {\bibinfo  {journal}
  {Phys. Rev. Lett.}\ }\textbf {\bibinfo {volume} {104}},\ \bibinfo {pages}
  {160601} (\bibinfo {year} {2010})}\BibitemShut {NoStop}%
\bibitem [{\citenamefont {Espigares}\ \emph {et~al.}(2013)\citenamefont
  {Espigares}, \citenamefont {Garrido},\ and\ \citenamefont
  {Hurtado}}]{espigares2013}%
  \BibitemOpen
  \bibfield  {author} {\bibinfo {author} {\bibfnamefont {C.~P.}\ \bibnamefont
  {Espigares}}, \bibinfo {author} {\bibfnamefont {P.~L.}\ \bibnamefont
  {Garrido}}, \ and\ \bibinfo {author} {\bibfnamefont {P.~I.}\ \bibnamefont
  {Hurtado}},\ }\bibfield  {title} {\enquote {\bibinfo {title} {Dynamical phase
  transition for current statistics in a simple driven diffusive system},}\
  }\href {\doibase 10.1103/PhysRevE.87.032115} {\bibfield  {journal} {\bibinfo
  {journal} {Phys. Rev. E}\ }\textbf {\bibinfo {volume} {87}},\ \bibinfo
  {pages} {032115} (\bibinfo {year} {2013})}\BibitemShut {NoStop}%
\bibitem [{\citenamefont {Bunin}\ \emph {et~al.}(2013)\citenamefont {Bunin},
  \citenamefont {Kafri},\ and\ \citenamefont {Podolsky}}]{bunin2013}%
  \BibitemOpen
  \bibfield  {author} {\bibinfo {author} {\bibfnamefont {G.}~\bibnamefont
  {Bunin}}, \bibinfo {author} {\bibfnamefont {Y.}~\bibnamefont {Kafri}}, \ and\
  \bibinfo {author} {\bibfnamefont {D.}~\bibnamefont {Podolsky}},\ }\bibfield
  {title} {\enquote {\bibinfo {title} {Cusp singularities in boundary-driven
  diffusive systems},}\ }\href {\doibase 10.1007/s10955-013-0752-6} {\bibfield
  {journal} {\bibinfo  {journal} {J. Stat. Phys.}\ }\textbf {\bibinfo {volume}
  {152}},\ \bibinfo {pages} {112--135} (\bibinfo {year} {2013})}\BibitemShut
  {NoStop}%
\bibitem [{\citenamefont {{Tsobgni Nyawo}}\ and\ \citenamefont
  {Touchette}(2016)}]{tsobgni2016b}%
  \BibitemOpen
  \bibfield  {author} {\bibinfo {author} {\bibfnamefont {P.}~\bibnamefont
  {{Tsobgni Nyawo}}}\ and\ \bibinfo {author} {\bibfnamefont {H.}~\bibnamefont
  {Touchette}},\ }\bibfield  {title} {\enquote {\bibinfo {title} {A minimal
  model of dynamical phase transition},}\ }\href {\doibase
  10.1209/0295-5075/116/50009} {\bibfield  {journal} {\bibinfo  {journal}
  {Europhys. Lett.}\ }\textbf {\bibinfo {volume} {116}},\ \bibinfo {pages}
  {50009} (\bibinfo {year} {2016})},\ \Eprint
  {http://arxiv.org/abs/arxiv:1611.07707} {arxiv:1611.07707} \BibitemShut
  {NoStop}%
\bibitem [{\citenamefont {Lazarescu}(2017)}]{lazarescu2017}%
  \BibitemOpen
  \bibfield  {author} {\bibinfo {author} {\bibfnamefont {A.}~\bibnamefont
  {Lazarescu}},\ }\bibfield  {title} {\enquote {\bibinfo {title} {Generic
  dynamical phase transition in one-dimensional bulk-driven lattice gases with
  exclusion},}\ }\href {\doibase 10.1088/1751-8121/aa7175} {\bibfield
  {journal} {\bibinfo  {journal} {J. Phys. A: Math. Theor.}\ }\textbf {\bibinfo
  {volume} {50}},\ \bibinfo {pages} {254004} (\bibinfo {year}
  {2017})}\BibitemShut {NoStop}%
\bibitem [{\citenamefont {Gallavotti}\ and\ \citenamefont
  {Cohen}(1995)}]{gallavotti1995}%
  \BibitemOpen
  \bibfield  {author} {\bibinfo {author} {\bibfnamefont {G.}~\bibnamefont
  {Gallavotti}}\ and\ \bibinfo {author} {\bibfnamefont {E.~G.~D.}\ \bibnamefont
  {Cohen}},\ }\bibfield  {title} {\enquote {\bibinfo {title} {Dynamical
  ensembles in nonequilibrium statistical mechanics},}\ }\href {\doibase
  10.1103/PhysRevLett.74.2694} {\bibfield  {journal} {\bibinfo  {journal}
  {Phys. Rev. Lett.}\ }\textbf {\bibinfo {volume} {74}},\ \bibinfo {pages}
  {2694--2697} (\bibinfo {year} {1995})}\BibitemShut {NoStop}%
\bibitem [{\citenamefont {Kurchan}(1998)}]{kurchan1998}%
  \BibitemOpen
  \bibfield  {author} {\bibinfo {author} {\bibfnamefont {J.}~\bibnamefont
  {Kurchan}},\ }\bibfield  {title} {\enquote {\bibinfo {title} {Fluctuation
  theorem for stochastic dynamics},}\ }\href@noop {} {\bibfield  {journal}
  {\bibinfo  {journal} {J. Phys. A: Math. Gen.}\ }\textbf {\bibinfo {volume}
  {31}},\ \bibinfo {pages} {3719--3729} (\bibinfo {year} {1998})}\BibitemShut
  {NoStop}%
\bibitem [{\citenamefont {Lebowitz}\ and\ \citenamefont
  {Spohn}(1999)}]{lebowitz1999}%
  \BibitemOpen
  \bibfield  {author} {\bibinfo {author} {\bibfnamefont {J.~L.}\ \bibnamefont
  {Lebowitz}}\ and\ \bibinfo {author} {\bibfnamefont {H.}~\bibnamefont
  {Spohn}},\ }\bibfield  {title} {\enquote {\bibinfo {title} {A
  {G}allavotti-{C}ohen-type symmetry in the large deviation functional for
  stochastic dynamics},}\ }\href {\doibase 10.1023/A:1004589714161} {\bibfield
  {journal} {\bibinfo  {journal} {J. Stat. Phys.}\ }\textbf {\bibinfo {volume}
  {95}},\ \bibinfo {pages} {333--365} (\bibinfo {year} {1999})}\BibitemShut
  {NoStop}%
\bibitem [{\citenamefont {Harris}\ and\ \citenamefont
  {Sch\"{u}tz}(2007)}]{harris2007}%
  \BibitemOpen
  \bibfield  {author} {\bibinfo {author} {\bibfnamefont {R.~J.}\ \bibnamefont
  {Harris}}\ and\ \bibinfo {author} {\bibfnamefont {G.~M.}\ \bibnamefont
  {Sch\"{u}tz}},\ }\bibfield  {title} {\enquote {\bibinfo {title} {Fluctuation
  theorems for stochastic dynamics},}\ }\href {\doibase
  10.1088/1742-5468/2007/07/P07020} {\bibfield  {journal} {\bibinfo  {journal}
  {J. Stat. Mech.}\ }\textbf {\bibinfo {volume} {2007}},\ \bibinfo {pages}
  {P07020} (\bibinfo {year} {2007})}\BibitemShut {NoStop}%
\bibitem [{\citenamefont {Baiesi}\ \emph {et~al.}(2009)\citenamefont {Baiesi},
  \citenamefont {Maes},\ and\ \citenamefont {Wynants}}]{baiesi2009}%
  \BibitemOpen
  \bibfield  {author} {\bibinfo {author} {\bibfnamefont {M.}~\bibnamefont
  {Baiesi}}, \bibinfo {author} {\bibfnamefont {C.}~\bibnamefont {Maes}}, \ and\
  \bibinfo {author} {\bibfnamefont {B.}~\bibnamefont {Wynants}},\ }\bibfield
  {title} {\enquote {\bibinfo {title} {Fluctuations and response of
  nonequilibrium states},}\ }\href {\doibase 10.1103/PhysRevLett.103.010602}
  {\bibfield  {journal} {\bibinfo  {journal} {Phys. Rev. Lett.}\ }\textbf
  {\bibinfo {volume} {103}},\ \bibinfo {pages} {010602} (\bibinfo {year}
  {2009})}\BibitemShut {NoStop}%
\bibitem [{\citenamefont {Derrida}(2007)}]{derrida2007}%
  \BibitemOpen
  \bibfield  {author} {\bibinfo {author} {\bibfnamefont {B.}~\bibnamefont
  {Derrida}},\ }\bibfield  {title} {\enquote {\bibinfo {title} {Non-equilibrium
  steady states: {F}luctuations and large deviations of the density and of the
  current},}\ }\href {\doibase 10.1088/1742-5468/2007/07/P07023} {\bibfield
  {journal} {\bibinfo  {journal} {J. Stat. Mech.}\ }\textbf {\bibinfo {volume}
  {2007}},\ \bibinfo {pages} {P07023} (\bibinfo {year} {2007})}\BibitemShut
  {NoStop}%
\bibitem [{\citenamefont {Bertini}\ \emph {et~al.}(2007)\citenamefont
  {Bertini}, \citenamefont {Sole}, \citenamefont {Gabrielli}, \citenamefont
  {Jona-Lasinio},\ and\ \citenamefont {Landim}}]{bertini2007}%
  \BibitemOpen
  \bibfield  {author} {\bibinfo {author} {\bibfnamefont {L.}~\bibnamefont
  {Bertini}}, \bibinfo {author} {\bibfnamefont {A.~De}\ \bibnamefont {Sole}},
  \bibinfo {author} {\bibfnamefont {D.}~\bibnamefont {Gabrielli}}, \bibinfo
  {author} {\bibfnamefont {G.}~\bibnamefont {Jona-Lasinio}}, \ and\ \bibinfo
  {author} {\bibfnamefont {C.}~\bibnamefont {Landim}},\ }\bibfield  {title}
  {\enquote {\bibinfo {title} {Stochastic interacting particle systems out of
  equilibrium},}\ }\href {\doibase 10.1088/1742-5468/2007/07/P07014} {\bibfield
   {journal} {\bibinfo  {journal} {J. Stat. Mech.}\ }\textbf {\bibinfo {volume}
  {2007}},\ \bibinfo {pages} {P07014} (\bibinfo {year} {2007})}\BibitemShut
  {NoStop}%
\bibitem [{\citenamefont {Harris}\ and\ \citenamefont
  {Touchette}(2013)}]{harris2013}%
  \BibitemOpen
  \bibfield  {author} {\bibinfo {author} {\bibfnamefont {R.~J.}\ \bibnamefont
  {Harris}}\ and\ \bibinfo {author} {\bibfnamefont {H.}~\bibnamefont
  {Touchette}},\ }\bibfield  {title} {\enquote {\bibinfo {title} {Large
  deviation approach to nonequilibrium systems},}\ }in\ \href@noop {} {\emph
  {\bibinfo {booktitle} {Nonequilibrium Statistical Physics of Small Systems:
  {F}luctuation Relations and Beyond}}},\ \bibinfo {series} {Reviews of
  Nonlinear Dynamics and Complexity}, Vol.~\bibinfo {volume} {6},\ \bibinfo
  {editor} {edited by\ \bibinfo {editor} {\bibfnamefont {R.}~\bibnamefont
  {Klages}}, \bibinfo {editor} {\bibfnamefont {W.}~\bibnamefont {Just}}, \ and\
  \bibinfo {editor} {\bibfnamefont {C.}~\bibnamefont {Jarzynski}}}\ (\bibinfo
  {publisher} {Wiley-VCH},\ \bibinfo {address} {Weinheim},\ \bibinfo {year}
  {2013})\ pp.\ \bibinfo {pages} {335--360}\BibitemShut {NoStop}%
\bibitem [{\citenamefont {Garrahan}(2018)}]{garrahan2018}%
  \BibitemOpen
  \bibfield  {author} {\bibinfo {author} {\bibfnamefont {J.~P.}\ \bibnamefont
  {Garrahan}},\ }\bibfield  {title} {\enquote {\bibinfo {title} {Aspects of
  non-equilibrium in classical and quantum systems: {S}low relaxation and
  glasses, dynamical large deviations, quantum non-ergodicity, and open quantum
  dynamics},}\ }\href {\doibase https://doi.org/10.1016/j.physa.2017.12.149}
  {\bibfield  {journal} {\bibinfo  {journal} {Physica A}\ }\textbf {\bibinfo
  {volume} {504}},\ \bibinfo {pages} {130--154} (\bibinfo {year}
  {2018})}\BibitemShut {NoStop}%
\bibitem [{\citenamefont {Sekimoto}(2010)}]{sekimoto2010}%
  \BibitemOpen
  \bibfield  {author} {\bibinfo {author} {\bibfnamefont {K.}~\bibnamefont
  {Sekimoto}},\ }\href@noop {} {\emph {\bibinfo {title} {Stochastic
  Energetics}}},\ \bibinfo {series} {Lect. Notes. Phys.}, Vol.\ \bibinfo
  {volume} {799}\ (\bibinfo  {publisher} {Springer},\ \bibinfo {address} {New
  York},\ \bibinfo {year} {2010})\BibitemShut {NoStop}%
\bibitem [{\citenamefont {Seifert}(2012)}]{seifert2012}%
  \BibitemOpen
  \bibfield  {author} {\bibinfo {author} {\bibfnamefont {U.}~\bibnamefont
  {Seifert}},\ }\bibfield  {title} {\enquote {\bibinfo {title} {Stochastic
  thermodynamics, fluctuation theorems and molecular machines},}\ }\href
  {\doibase 10.1088/0034-4885/75/12/126001} {\bibfield  {journal} {\bibinfo
  {journal} {Rep. Prog. Phys.}\ }\textbf {\bibinfo {volume} {75}},\ \bibinfo
  {pages} {126001} (\bibinfo {year} {2012})}\BibitemShut {NoStop}%
\bibitem [{\citenamefont {Seifert}(2018)}]{seifert2018}%
  \BibitemOpen
  \bibfield  {author} {\bibinfo {author} {\bibfnamefont {U.}~\bibnamefont
  {Seifert}},\ }\bibfield  {title} {\enquote {\bibinfo {title} {Stochastic
  thermodynamics: {F}rom principles to the cost of precision},}\ }\href
  {\doibase https://doi.org/10.1016/j.physa.2017.10.024} {\bibfield  {journal}
  {\bibinfo  {journal} {Physica A}\ }\textbf {\bibinfo {volume} {504}},\
  \bibinfo {pages} {176--191} (\bibinfo {year} {2018})}\BibitemShut {NoStop}%
\bibitem [{\citenamefont {Ciliberto}(2017)}]{ciliberto2017}%
  \BibitemOpen
  \bibfield  {author} {\bibinfo {author} {\bibfnamefont {S.}~\bibnamefont
  {Ciliberto}},\ }\bibfield  {title} {\enquote {\bibinfo {title} {Experiments
  in stochastic thermodynamics: Short history and perspectives},}\ }\href
  {\doibase 10.1103/PhysRevX.7.021051} {\bibfield  {journal} {\bibinfo
  {journal} {Phys. Rev. X}\ }\textbf {\bibinfo {volume} {7}},\ \bibinfo {pages}
  {021051} (\bibinfo {year} {2017})}\BibitemShut {NoStop}%
\bibitem [{\citenamefont {C\'erou}\ and\ \citenamefont
  {Guyader}(2007)}]{cerou2007}%
  \BibitemOpen
  \bibfield  {author} {\bibinfo {author} {\bibfnamefont {F.}~\bibnamefont
  {C\'erou}}\ and\ \bibinfo {author} {\bibfnamefont {A.}~\bibnamefont
  {Guyader}},\ }\bibfield  {title} {\enquote {\bibinfo {title} {Adaptive
  multilevel splitting for rare event analysis},}\ }\href {\doibase
  10.1080/07362990601139628} {\bibfield  {journal} {\bibinfo  {journal} {Stoch.
  Anal. Appl.}\ }\textbf {\bibinfo {volume} {25}},\ \bibinfo {pages} {417--443}
  (\bibinfo {year} {2007})}\BibitemShut {NoStop}%
\bibitem [{\citenamefont {Dean}\ and\ \citenamefont {Dupuis}(2009)}]{dean2009}%
  \BibitemOpen
  \bibfield  {author} {\bibinfo {author} {\bibfnamefont {T.}~\bibnamefont
  {Dean}}\ and\ \bibinfo {author} {\bibfnamefont {P.}~\bibnamefont {Dupuis}},\
  }\bibfield  {title} {\enquote {\bibinfo {title} {Splitting for rare event
  simulation: {A} large deviation approach to design and analysis},}\ }\href
  {\doibase 10.1016/j.spa.2008.02.017} {\bibfield  {journal} {\bibinfo
  {journal} {Stoch. Proc. Appl.}\ }\textbf {\bibinfo {volume} {119}},\ \bibinfo
  {pages} {562--587} (\bibinfo {year} {2009})}\BibitemShut {NoStop}%
\bibitem [{\citenamefont {C\'erou}\ \emph {et~al.}(2011)\citenamefont
  {C\'erou}, \citenamefont {Guyader}, \citenamefont {Leli\`evre},\ and\
  \citenamefont {Pommier}}]{cerou2011}%
  \BibitemOpen
  \bibfield  {author} {\bibinfo {author} {\bibfnamefont {F.}~\bibnamefont
  {C\'erou}}, \bibinfo {author} {\bibfnamefont {A.}~\bibnamefont {Guyader}},
  \bibinfo {author} {\bibfnamefont {T.}~\bibnamefont {Leli\`evre}}, \ and\
  \bibinfo {author} {\bibfnamefont {D.}~\bibnamefont {Pommier}},\ }\bibfield
  {title} {\enquote {\bibinfo {title} {A multiple replica approach to simulate
  reactive trajectories},}\ }\href {\doibase 10.1063/1.3518708} {\bibfield
  {journal} {\bibinfo  {journal} {J. Chem. Phys.}\ }\textbf {\bibinfo {volume}
  {134}},\ \bibinfo {pages} {054108} (\bibinfo {year} {2011})}\BibitemShut
  {NoStop}%
\bibitem [{\citenamefont {C\'erou}\ \emph {et~al.}(2019)\citenamefont
  {C\'erou}, \citenamefont {Delyon}, \citenamefont {Guyader},\ and\
  \citenamefont {Rousset}}]{cerou2019}%
  \BibitemOpen
  \bibfield  {author} {\bibinfo {author} {\bibfnamefont {F.}~\bibnamefont
  {C\'erou}}, \bibinfo {author} {\bibfnamefont {B.}~\bibnamefont {Delyon}},
  \bibinfo {author} {\bibfnamefont {A.}~\bibnamefont {Guyader}}, \ and\
  \bibinfo {author} {\bibfnamefont {M.}~\bibnamefont {Rousset}},\ }\bibfield
  {title} {\enquote {\bibinfo {title} {On the asymptotic normality of adaptive
  multilevel splitting},}\ }\href {\doibase 10.1137/18M1187477} {\bibfield
  {journal} {\bibinfo  {journal} {SIAM J. Uncertainty Quant.}\ }\textbf
  {\bibinfo {volume} {7}},\ \bibinfo {pages} {1--30} (\bibinfo {year}
  {2019})}\BibitemShut {NoStop}%
\bibitem [{\citenamefont {C\'{e}rou}\ \emph {et~al.}(2019)\citenamefont
  {C\'{e}rou}, \citenamefont {Guyader},\ and\ \citenamefont
  {Rousset}}]{cerou2019b}%
  \BibitemOpen
  \bibfield  {author} {\bibinfo {author} {\bibfnamefont {F.}~\bibnamefont
  {C\'{e}rou}}, \bibinfo {author} {\bibfnamefont {A.}~\bibnamefont {Guyader}},
  \ and\ \bibinfo {author} {\bibfnamefont {M.}~\bibnamefont {Rousset}},\
  }\bibfield  {title} {\enquote {\bibinfo {title} {Adaptive multilevel
  splitting: {H}istorical perspective and recent results},}\ }\href {\doibase
  10.1063/1.5082247} {\bibfield  {journal} {\bibinfo  {journal} {Chaos}\
  }\textbf {\bibinfo {volume} {29}},\ \bibinfo {pages} {043108} (\bibinfo
  {year} {2019})}\BibitemShut {NoStop}%
\bibitem [{\citenamefont {Br\'ehier}\ and\ \citenamefont
  {Leli\`evre}(2019)}]{brehier2019}%
  \BibitemOpen
  \bibfield  {author} {\bibinfo {author} {\bibfnamefont {C.-E.}\ \bibnamefont
  {Br\'ehier}}\ and\ \bibinfo {author} {\bibfnamefont {T.}~\bibnamefont
  {Leli\`evre}},\ }\bibfield  {title} {\enquote {\bibinfo {title} {On a new
  class of score functions to estimate tail probabilities of some stochastic
  processes with adaptive multilevel splitting},}\ }\href {\doibase
  10.1063/1.5081440} {\bibfield  {journal} {\bibinfo  {journal} {Chaos}\
  }\textbf {\bibinfo {volume} {29}},\ \bibinfo {pages} {033126} (\bibinfo
  {year} {2019})}\BibitemShut {NoStop}%
\bibitem [{\citenamefont {Grassberger}(2002)}]{grassberger2002}%
  \BibitemOpen
  \bibfield  {author} {\bibinfo {author} {\bibfnamefont {P.}~\bibnamefont
  {Grassberger}},\ }\bibfield  {title} {\enquote {\bibinfo {title} {Go with the
  winners: {A} general {M}onte {C}arlo strategy},}\ }\href {\doibase
  10.1016/S0010-4655(02)00205-9} {\bibfield  {journal} {\bibinfo  {journal}
  {Comp. Phys. Comm.}\ }\textbf {\bibinfo {volume} {147}},\ \bibinfo {pages}
  {64--70} (\bibinfo {year} {2002})}\BibitemShut {NoStop}%
\bibitem [{\citenamefont {Giardina}\ \emph {et~al.}(2006)\citenamefont
  {Giardina}, \citenamefont {Kurchan},\ and\ \citenamefont
  {Peliti}}]{giardina2006}%
  \BibitemOpen
  \bibfield  {author} {\bibinfo {author} {\bibfnamefont {C.}~\bibnamefont
  {Giardina}}, \bibinfo {author} {\bibfnamefont {J.}~\bibnamefont {Kurchan}}, \
  and\ \bibinfo {author} {\bibfnamefont {L.}~\bibnamefont {Peliti}},\
  }\bibfield  {title} {\enquote {\bibinfo {title} {Direct evaluation of
  large-deviation functions},}\ }\href@noop {} {\bibfield  {journal} {\bibinfo
  {journal} {Phys. Rev. Lett.}\ }\textbf {\bibinfo {volume} {96}},\ \bibinfo
  {pages} {120603} (\bibinfo {year} {2006})}\BibitemShut {NoStop}%
\bibitem [{\citenamefont {Lecomte}\ and\ \citenamefont
  {Tailleur}(2007)}]{lecomte2007a}%
  \BibitemOpen
  \bibfield  {author} {\bibinfo {author} {\bibfnamefont {V.}~\bibnamefont
  {Lecomte}}\ and\ \bibinfo {author} {\bibfnamefont {J.}~\bibnamefont
  {Tailleur}},\ }\bibfield  {title} {\enquote {\bibinfo {title} {A numerical
  approach to large deviations in continuous time},}\ }\href {\doibase
  10.1088/1742-5468/2007/03/P03004} {\bibfield  {journal} {\bibinfo  {journal}
  {J. Stat. Mech.}\ }\textbf {\bibinfo {volume} {2007}},\ \bibinfo {pages}
  {P03004} (\bibinfo {year} {2007})}\BibitemShut {NoStop}%
\bibitem [{\citenamefont {Angeli}\ \emph {et~al.}(2019)\citenamefont {Angeli},
  \citenamefont {Grosskinsky}, \citenamefont {Johansen},\ and\ \citenamefont
  {Pizzoferrato}}]{angeli2019}%
  \BibitemOpen
  \bibfield  {author} {\bibinfo {author} {\bibfnamefont {L.}~\bibnamefont
  {Angeli}}, \bibinfo {author} {\bibfnamefont {S.}~\bibnamefont {Grosskinsky}},
  \bibinfo {author} {\bibfnamefont {A.~M.}\ \bibnamefont {Johansen}}, \ and\
  \bibinfo {author} {\bibfnamefont {A.}~\bibnamefont {Pizzoferrato}},\
  }\bibfield  {title} {\enquote {\bibinfo {title} {Rare event simulation for
  stochastic dynamics in continuous time},}\ }\href {\doibase
  10.1007/s10955-019-02340-1} {\bibfield  {journal} {\bibinfo  {journal} {J.
  Stat. Phys.}\ }\textbf {\bibinfo {volume} {176}},\ \bibinfo {pages}
  {1185--1210} (\bibinfo {year} {2019})}\BibitemShut {NoStop}%
\bibitem [{\citenamefont {Torrie}\ and\ \citenamefont
  {Valleau}(1977)}]{torrie1977}%
  \BibitemOpen
  \bibfield  {author} {\bibinfo {author} {\bibfnamefont {G.~M.}\ \bibnamefont
  {Torrie}}\ and\ \bibinfo {author} {\bibfnamefont {J.~P.}\ \bibnamefont
  {Valleau}},\ }\bibfield  {title} {\enquote {\bibinfo {title} {Nonphysical
  sampling distributions in {M}onte {C}arlo free-energy estimation: Umbrella
  sampling},}\ }\href {\doibase 10.1016/0021-9991(77)90121-8} {\bibfield
  {journal} {\bibinfo  {journal} {Journal of Computational Physics}\ }\textbf
  {\bibinfo {volume} {23}},\ \bibinfo {pages} {187--199} (\bibinfo {year}
  {1977})}\BibitemShut {NoStop}%
\bibitem [{\citenamefont {Juneja}\ and\ \citenamefont
  {Shahabuddin}(2006)}]{juneja2006}%
  \BibitemOpen
  \bibfield  {author} {\bibinfo {author} {\bibfnamefont {S.}~\bibnamefont
  {Juneja}}\ and\ \bibinfo {author} {\bibfnamefont {P.}~\bibnamefont
  {Shahabuddin}},\ }\enquote {\bibinfo {title} {Rare-event simulation
  techniques: An introduction and recent advances},}\ \ (\bibinfo  {publisher}
  {Elsevier},\ \bibinfo {address} {Amsterdam},\ \bibinfo {year} {2006})\
  Chap.~\bibinfo {chapter} {11}, pp.\ \bibinfo {pages} {291--350}\BibitemShut
  {NoStop}%
\bibitem [{\citenamefont {Asmussen}\ and\ \citenamefont
  {Glynn}(2007)}]{asmussen2007}%
  \BibitemOpen
  \bibfield  {author} {\bibinfo {author} {\bibfnamefont {S.}~\bibnamefont
  {Asmussen}}\ and\ \bibinfo {author} {\bibfnamefont {P.~W.}\ \bibnamefont
  {Glynn}},\ }\href@noop {} {\emph {\bibinfo {title} {Stochastic Simulation:
  Algorithms and Analysis}}},\ Stochastic Modelling and Applied Probability\
  (\bibinfo  {publisher} {Springer},\ \bibinfo {address} {New York},\ \bibinfo
  {year} {2007})\BibitemShut {NoStop}%
\bibitem [{\citenamefont {Bucklew}(2004)}]{bucklew2004}%
  \BibitemOpen
  \bibfield  {author} {\bibinfo {author} {\bibfnamefont {J.~A.}\ \bibnamefont
  {Bucklew}},\ }\href@noop {} {\emph {\bibinfo {title} {Introduction to Rare
  Event Simulation}}}\ (\bibinfo  {publisher} {Springer},\ \bibinfo {address}
  {New York},\ \bibinfo {year} {2004})\BibitemShut {NoStop}%
\bibitem [{\citenamefont {Sadowsky}\ and\ \citenamefont
  {Bucklew}(1989)}]{sadowsky1989}%
  \BibitemOpen
  \bibfield  {author} {\bibinfo {author} {\bibfnamefont {J.~S.}\ \bibnamefont
  {Sadowsky}}\ and\ \bibinfo {author} {\bibfnamefont {J.~A.}\ \bibnamefont
  {Bucklew}},\ }\bibfield  {title} {\enquote {\bibinfo {title} {Large
  deviations theory techniques in {M}onte {C}arlo simulation},}\ }in\
  \href@noop {} {\emph {\bibinfo {booktitle} {Proceedings of the 1989 Winter
  Simulation Conference}}},\ \bibinfo {editor} {edited by\ \bibinfo {editor}
  {\bibfnamefont {E.~A.}\ \bibnamefont {Mac{N}air}}, \bibinfo {editor}
  {\bibfnamefont {K.~J.}\ \bibnamefont {Musselman}}, \ and\ \bibinfo {editor}
  {\bibfnamefont {P.}~\bibnamefont {Heidelberger}}}\ (\bibinfo  {publisher}
  {ACM},\ \bibinfo {address} {New York},\ \bibinfo {year} {1989})\ pp.\
  \bibinfo {pages} {505--513}\BibitemShut {NoStop}%
\bibitem [{\citenamefont {Sadowsky}\ and\ \citenamefont
  {Bucklew}(1990)}]{sadowsky1990}%
  \BibitemOpen
  \bibfield  {author} {\bibinfo {author} {\bibfnamefont {J.~S.}\ \bibnamefont
  {Sadowsky}}\ and\ \bibinfo {author} {\bibfnamefont {J.~A.}\ \bibnamefont
  {Bucklew}},\ }\bibfield  {title} {\enquote {\bibinfo {title} {On large
  deviations theory and asymptotically efficient {M}onte {C}arlo estimation},}\
  }\href@noop {} {\bibfield  {journal} {\bibinfo  {journal} {IEEE Trans. Info.
  Th.}\ }\textbf {\bibinfo {volume} {36}},\ \bibinfo {pages} {579--588}
  (\bibinfo {year} {1990})}\BibitemShut {NoStop}%
\bibitem [{\citenamefont {Bucklew}\ \emph {et~al.}(1990)\citenamefont
  {Bucklew}, \citenamefont {Ney},\ and\ \citenamefont
  {Sadowsky}}]{bucklew1990b}%
  \BibitemOpen
  \bibfield  {author} {\bibinfo {author} {\bibfnamefont {J.~A.}\ \bibnamefont
  {Bucklew}}, \bibinfo {author} {\bibfnamefont {P.}~\bibnamefont {Ney}}, \ and\
  \bibinfo {author} {\bibfnamefont {J.~S.}\ \bibnamefont {Sadowsky}},\
  }\bibfield  {title} {\enquote {\bibinfo {title} {Monte {C}arlo simulation and
  large deviations theory for uniformly recurrent {M}arkov chains},}\
  }\href@noop {} {\bibfield  {journal} {\bibinfo  {journal} {J. Appl. Prob.}\
  }\textbf {\bibinfo {volume} {27}},\ \bibinfo {pages} {44--59} (\bibinfo
  {year} {1990})}\BibitemShut {NoStop}%
\bibitem [{\citenamefont {Schlebusch}(1993)}]{schlebusch1993}%
  \BibitemOpen
  \bibfield  {author} {\bibinfo {author} {\bibfnamefont {H.-J.}\ \bibnamefont
  {Schlebusch}},\ }\bibfield  {title} {\enquote {\bibinfo {title} {On the
  asymptotic efficiency of importance sampling techniques},}\ }\href {\doibase
  10.1109/18.212308} {\bibfield  {journal} {\bibinfo  {journal} {IEEE Trans.
  Info. Th.}\ }\textbf {\bibinfo {volume} {39}},\ \bibinfo {pages} {710--715}
  (\bibinfo {year} {1993})}\BibitemShut {NoStop}%
\bibitem [{\citenamefont {Dieker}\ and\ \citenamefont
  {Mandjes}(2005)}]{dieker2005}%
  \BibitemOpen
  \bibfield  {author} {\bibinfo {author} {\bibfnamefont {A.~B.}\ \bibnamefont
  {Dieker}}\ and\ \bibinfo {author} {\bibfnamefont {M.}~\bibnamefont
  {Mandjes}},\ }\bibfield  {title} {\enquote {\bibinfo {title} {On
  asymptotically efficient simulation of large deviation probabilities},}\
  }\href {\doibase 10.1239/aap/1118858638} {\bibfield  {journal} {\bibinfo
  {journal} {Adv. Appl. Prob.}\ }\textbf {\bibinfo {volume} {37}},\ \bibinfo
  {pages} {539--552} (\bibinfo {year} {2005})}\BibitemShut {NoStop}%
\bibitem [{\citenamefont {Efron}\ and\ \citenamefont
  {Traux}(1968)}]{efron1968}%
  \BibitemOpen
  \bibfield  {author} {\bibinfo {author} {\bibfnamefont {B.}~\bibnamefont
  {Efron}}\ and\ \bibinfo {author} {\bibfnamefont {D.}~\bibnamefont {Traux}},\
  }\bibfield  {title} {\enquote {\bibinfo {title} {Large deviations theory in
  exponential families},}\ }\href@noop {} {\bibfield  {journal} {\bibinfo
  {journal} {Ann. Math. Stat.}\ }\textbf {\bibinfo {volume} {39}},\ \bibinfo
  {pages} {1402--1424} (\bibinfo {year} {1968})}\BibitemShut {NoStop}%
\bibitem [{\citenamefont {Touchette}(2018{\natexlab{a}})}]{touchette2018}%
  \BibitemOpen
  \bibfield  {author} {\bibinfo {author} {\bibfnamefont {H.}~\bibnamefont
  {Touchette}},\ }\bibfield  {title} {\enquote {\bibinfo {title} {Asymptotic
  equivalence of probability measures and stochastic processes},}\ }\href
  {\doibase 10.1007/s10955-018-1965-5} {\bibfield  {journal} {\bibinfo
  {journal} {J. Stat. Phys.}\ }\textbf {\bibinfo {volume} {170}},\ \bibinfo
  {pages} {962--978} (\bibinfo {year} {2018}{\natexlab{a}})}\BibitemShut
  {NoStop}%
\bibitem [{\citenamefont {Cottrell}\ \emph {et~al.}(1983)\citenamefont
  {Cottrell}, \citenamefont {Fort},\ and\ \citenamefont
  {Malgouyres}}]{cottrell1983}%
  \BibitemOpen
  \bibfield  {author} {\bibinfo {author} {\bibfnamefont {M.}~\bibnamefont
  {Cottrell}}, \bibinfo {author} {\bibfnamefont {J.-C.}\ \bibnamefont {Fort}},
  \ and\ \bibinfo {author} {\bibfnamefont {G.}~\bibnamefont {Malgouyres}},\
  }\bibfield  {title} {\enquote {\bibinfo {title} {Large deviations and rare
  events in the study of stochastic algorithms},}\ }\href {\doibase
  10.1109/TAC.1983.1103345} {\bibfield  {journal} {\bibinfo  {journal} {IEEE
  Trans. Aut. Cont.}\ }\textbf {\bibinfo {volume} {28}},\ \bibinfo {pages}
  {907-- 920} (\bibinfo {year} {1983})}\BibitemShut {NoStop}%
\bibitem [{\citenamefont {Freidlin}\ and\ \citenamefont
  {Wentzell}(1984)}]{freidlin1984}%
  \BibitemOpen
  \bibfield  {author} {\bibinfo {author} {\bibfnamefont {M.~I.}\ \bibnamefont
  {Freidlin}}\ and\ \bibinfo {author} {\bibfnamefont {A.~D.}\ \bibnamefont
  {Wentzell}},\ }\href@noop {} {\emph {\bibinfo {title} {Random Perturbations
  of Dynamical Systems}}},\ \bibinfo {series} {Grundlehren der Mathematischen
  Wissenschaften}, Vol.\ \bibinfo {volume} {260}\ (\bibinfo  {publisher}
  {Springer},\ \bibinfo {address} {New York},\ \bibinfo {year}
  {1984})\BibitemShut {NoStop}%
\bibitem [{\citenamefont {Graham}(1989)}]{graham1989}%
  \BibitemOpen
  \bibfield  {author} {\bibinfo {author} {\bibfnamefont {R.}~\bibnamefont
  {Graham}},\ }\bibfield  {title} {\enquote {\bibinfo {title} {Macroscopic
  potentials, bifurcations and noise in dissipative systems},}\ }in\ \href@noop
  {} {\emph {\bibinfo {booktitle} {Noise in Nonlinear Dynamical Systems}}},\
  Vol.~\bibinfo {volume} {1},\ \bibinfo {editor} {edited by\ \bibinfo {editor}
  {\bibfnamefont {F.}~\bibnamefont {Moss}}\ and\ \bibinfo {editor}
  {\bibfnamefont {P.~V.~E.}\ \bibnamefont {McClintock}}}\ (\bibinfo
  {publisher} {Cambridge University Press},\ \bibinfo {address} {Cambridge},\
  \bibinfo {year} {1989})\ pp.\ \bibinfo {pages} {225--278}\BibitemShut
  {NoStop}%
\bibitem [{\citenamefont {Luchinsky}\ \emph {et~al.}(1998)\citenamefont
  {Luchinsky}, \citenamefont {McClintock},\ and\ \citenamefont
  {Dykman}}]{luchinsky1998}%
  \BibitemOpen
  \bibfield  {author} {\bibinfo {author} {\bibfnamefont {D.~G.}\ \bibnamefont
  {Luchinsky}}, \bibinfo {author} {\bibfnamefont {P.~V.~E.}\ \bibnamefont
  {McClintock}}, \ and\ \bibinfo {author} {\bibfnamefont {M.~I.}\ \bibnamefont
  {Dykman}},\ }\bibfield  {title} {\enquote {\bibinfo {title} {Analogue studies
  of nonlinear systems},}\ }\href@noop {} {\bibfield  {journal} {\bibinfo
  {journal} {Rep. Prog. Phys.}\ }\textbf {\bibinfo {volume} {61}},\ \bibinfo
  {pages} {889--997} (\bibinfo {year} {1998})}\BibitemShut {NoStop}%
\bibitem [{\citenamefont {Touchette}(2018{\natexlab{b}})}]{touchette2017}%
  \BibitemOpen
  \bibfield  {author} {\bibinfo {author} {\bibfnamefont {H.}~\bibnamefont
  {Touchette}},\ }\bibfield  {title} {\enquote {\bibinfo {title} {Introduction
  to dynamical large deviations of {M}arkov processes},}\ }\href {\doibase
  10.1016/j.physa.2017.10.046} {\bibfield  {journal} {\bibinfo  {journal}
  {Physica A}\ }\textbf {\bibinfo {volume} {504}},\ \bibinfo {pages} {5--19}
  (\bibinfo {year} {2018}{\natexlab{b}})}\BibitemShut {NoStop}%
\bibitem [{\citenamefont {Bertini}\ \emph {et~al.}(2015)\citenamefont
  {Bertini}, \citenamefont {Sole}, \citenamefont {Gabrielli}, \citenamefont
  {Jona-Lasinio},\ and\ \citenamefont {Landim}}]{bertini2015b}%
  \BibitemOpen
  \bibfield  {author} {\bibinfo {author} {\bibfnamefont {L.}~\bibnamefont
  {Bertini}}, \bibinfo {author} {\bibfnamefont {A.~De}\ \bibnamefont {Sole}},
  \bibinfo {author} {\bibfnamefont {D.}~\bibnamefont {Gabrielli}}, \bibinfo
  {author} {\bibfnamefont {G.}~\bibnamefont {Jona-Lasinio}}, \ and\ \bibinfo
  {author} {\bibfnamefont {C.}~\bibnamefont {Landim}},\ }\bibfield  {title}
  {\enquote {\bibinfo {title} {Macroscopic fluctuation theory},}\ }\href
  {\doibase 10.1103/RevModPhys.87.593} {\bibfield  {journal} {\bibinfo
  {journal} {Rev. Mod. Phys.}\ }\textbf {\bibinfo {volume} {87}},\ \bibinfo
  {pages} {593--636} (\bibinfo {year} {2015})}\BibitemShut {NoStop}%
\bibitem [{\citenamefont {Touchette}(2015)}]{touchette2015}%
  \BibitemOpen
  \bibfield  {author} {\bibinfo {author} {\bibfnamefont {H.}~\bibnamefont
  {Touchette}},\ }\bibfield  {title} {\enquote {\bibinfo {title} {Equivalence
  and nonequivalence of ensembles: {T}hermodynamic, macrostate, and measure
  levels},}\ }\href {\doibase 10.1007/s10955-015-1212-2} {\bibfield  {journal}
  {\bibinfo  {journal} {J. Stat. Phys.}\ }\textbf {\bibinfo {volume} {159}},\
  \bibinfo {pages} {987--1016} (\bibinfo {year} {2015})}\BibitemShut {NoStop}%
\bibitem [{\citenamefont {Rubinstein}\ and\ \citenamefont
  {Kroese}(2004)}]{rubinstein2004}%
  \BibitemOpen
  \bibfield  {author} {\bibinfo {author} {\bibfnamefont {R.~Y.}\ \bibnamefont
  {Rubinstein}}\ and\ \bibinfo {author} {\bibfnamefont {D.~P.}\ \bibnamefont
  {Kroese}},\ }\href@noop {} {\emph {\bibinfo {title} {The Cross-Entropy
  Method}}}\ (\bibinfo  {publisher} {Springer},\ \bibinfo {address} {New
  York},\ \bibinfo {year} {2004})\BibitemShut {NoStop}%
\bibitem [{\citenamefont {Engel}\ \emph {et~al.}(2004)\citenamefont {Engel},
  \citenamefont {Monasson},\ and\ \citenamefont {Hartmann}}]{engel2004}%
  \BibitemOpen
  \bibfield  {author} {\bibinfo {author} {\bibfnamefont {A.}~\bibnamefont
  {Engel}}, \bibinfo {author} {\bibfnamefont {R.}~\bibnamefont {Monasson}}, \
  and\ \bibinfo {author} {\bibfnamefont {A.~K.}\ \bibnamefont {Hartmann}},\
  }\bibfield  {title} {\enquote {\bibinfo {title} {On large deviation
  properties of {E}rd{\"o}s--{R}{\'e}nyi random graphs},}\ }\href {\doibase
  10.1007/s10955-004-2268-6} {\bibfield  {journal} {\bibinfo  {journal} {J.
  Stat. Phys.}\ }\textbf {\bibinfo {volume} {117}},\ \bibinfo {pages}
  {387--426} (\bibinfo {year} {2004})}\BibitemShut {NoStop}%
\bibitem [{\citenamefont {Hartmann}(2011)}]{hartmann2011b}%
  \BibitemOpen
  \bibfield  {author} {\bibinfo {author} {\bibfnamefont {A.~K.}\ \bibnamefont
  {Hartmann}},\ }\bibfield  {title} {\enquote {\bibinfo {title}
  {Large-deviation properties of largest component for random graphs},}\ }\href
  {\doibase 10.1140/epjb/e2011-10836-4} {\bibfield  {journal} {\bibinfo
  {journal} {Eur. J. Phys. B}\ }\textbf {\bibinfo {volume} {84}},\ \bibinfo
  {pages} {627--634} (\bibinfo {year} {2011})}\BibitemShut {NoStop}%
\bibitem [{\citenamefont {Dewenter}\ and\ \citenamefont
  {Hartmann}(2015)}]{dewenter2015}%
  \BibitemOpen
  \bibfield  {author} {\bibinfo {author} {\bibfnamefont {T.}~\bibnamefont
  {Dewenter}}\ and\ \bibinfo {author} {\bibfnamefont {A.~K.}\ \bibnamefont
  {Hartmann}},\ }\bibfield  {title} {\enquote {\bibinfo {title}
  {Large-deviation properties of resilience of power grids},}\ }\href {\doibase
  10.1088/1367-2630/17/1/015005} {\bibfield  {journal} {\bibinfo  {journal}
  {New J. Phys.}\ }\textbf {\bibinfo {volume} {17}},\ \bibinfo {pages} {015005}
  (\bibinfo {year} {2015})}\BibitemShut {NoStop}%
\bibitem [{\citenamefont {Guasoni}\ and\ \citenamefont
  {Robertson}(2008)}]{guasoni2008}%
  \BibitemOpen
  \bibfield  {author} {\bibinfo {author} {\bibfnamefont {P.}~\bibnamefont
  {Guasoni}}\ and\ \bibinfo {author} {\bibfnamefont {S.}~\bibnamefont
  {Robertson}},\ }\bibfield  {title} {\enquote {\bibinfo {title} {Optimal
  importance sampling with explicit formulas in continuous time},}\ }\href
  {\doibase 10.1007/s00780-007-0053-5} {\bibfield  {journal} {\bibinfo
  {journal} {Finance Stoch.}\ }\textbf {\bibinfo {volume} {12}},\ \bibinfo
  {pages} {1--19} (\bibinfo {year} {2008})}\BibitemShut {NoStop}%
\bibitem [{\citenamefont {Vanden-Eijnden}\ and\ \citenamefont
  {Weare}(2012)}]{eijnden2012}%
  \BibitemOpen
  \bibfield  {author} {\bibinfo {author} {\bibfnamefont {E.}~\bibnamefont
  {Vanden-Eijnden}}\ and\ \bibinfo {author} {\bibfnamefont {J.}~\bibnamefont
  {Weare}},\ }\bibfield  {title} {\enquote {\bibinfo {title} {Rare event
  simulation of small noise diffusions},}\ }\href {\doibase 10.1002/cpa.21428}
  {\bibfield  {journal} {\bibinfo  {journal} {Comm. Pure Appl. Math.}\ }\textbf
  {\bibinfo {volume} {65}},\ \bibinfo {pages} {1770--1803} (\bibinfo {year}
  {2012})}\BibitemShut {NoStop}%
\bibitem [{\citenamefont {Kundu}\ \emph {et~al.}(2011)\citenamefont {Kundu},
  \citenamefont {Sabhapandit},\ and\ \citenamefont {Dhar}}]{kundu2011}%
  \BibitemOpen
  \bibfield  {author} {\bibinfo {author} {\bibfnamefont {A.}~\bibnamefont
  {Kundu}}, \bibinfo {author} {\bibfnamefont {S.}~\bibnamefont {Sabhapandit}},
  \ and\ \bibinfo {author} {\bibfnamefont {A.}~\bibnamefont {Dhar}},\
  }\bibfield  {title} {\enquote {\bibinfo {title} {Application of importance
  sampling to the computation of large deviations in nonequilibrium
  processes},}\ }\href {\doibase 10.1103/PhysRevE.83.031119} {\bibfield
  {journal} {\bibinfo  {journal} {Phys. Rev. E}\ }\textbf {\bibinfo {volume}
  {83}},\ \bibinfo {pages} {031119} (\bibinfo {year} {2011})}\BibitemShut
  {NoStop}%
\bibitem [{\citenamefont {Klymko}\ \emph {et~al.}(2018)\citenamefont {Klymko},
  \citenamefont {Geissler}, \citenamefont {Garrahan},\ and\ \citenamefont
  {Whitelam}}]{klymko2018}%
  \BibitemOpen
  \bibfield  {author} {\bibinfo {author} {\bibfnamefont {K.}~\bibnamefont
  {Klymko}}, \bibinfo {author} {\bibfnamefont {P.~L.}\ \bibnamefont
  {Geissler}}, \bibinfo {author} {\bibfnamefont {J.~P.}\ \bibnamefont
  {Garrahan}}, \ and\ \bibinfo {author} {\bibfnamefont {S.}~\bibnamefont
  {Whitelam}},\ }\bibfield  {title} {\enquote {\bibinfo {title} {Rare behavior
  of growth processes via umbrella sampling of trajectories},}\ }\href
  {\doibase 10.1103/PhysRevE.97.032123} {\bibfield  {journal} {\bibinfo
  {journal} {Phys. Rev. E}\ }\textbf {\bibinfo {volume} {97}},\ \bibinfo
  {pages} {032123} (\bibinfo {year} {2018})}\BibitemShut {NoStop}%
\bibitem [{\citenamefont {Whitelam}(2018)}]{whitelam2018c}%
  \BibitemOpen
  \bibfield  {author} {\bibinfo {author} {\bibfnamefont {S.}~\bibnamefont
  {Whitelam}},\ }\bibfield  {title} {\enquote {\bibinfo {title} {Sampling rare
  fluctuations of discrete-time {M}arkov chains},}\ }\href {\doibase
  10.1103/PhysRevE.97.032122} {\bibfield  {journal} {\bibinfo  {journal} {Phys.
  Rev. E}\ }\textbf {\bibinfo {volume} {97}},\ \bibinfo {pages} {032122}
  (\bibinfo {year} {2018})}\BibitemShut {NoStop}%
\bibitem [{\citenamefont {Jacobson}\ and\ \citenamefont
  {Whitelam}(2019)}]{jacobson2019}%
  \BibitemOpen
  \bibfield  {author} {\bibinfo {author} {\bibfnamefont {D.}~\bibnamefont
  {Jacobson}}\ and\ \bibinfo {author} {\bibfnamefont {S.}~\bibnamefont
  {Whitelam}},\ }\bibfield  {title} {\enquote {\bibinfo {title} {Direct
  evaluation of dynamical large-deviation rate functions using a variational
  ansatz},}\ }\href {\doibase 10.1103/PhysRevE.100.052139} {\bibfield
  {journal} {\bibinfo  {journal} {Phys. Rev. E}\ }\textbf {\bibinfo {volume}
  {100}},\ \bibinfo {pages} {052139} (\bibinfo {year} {2019})}\BibitemShut
  {NoStop}%
\bibitem [{\citenamefont {Glasserman}\ and\ \citenamefont
  {Wang}(1997)}]{glasserman1997}%
  \BibitemOpen
  \bibfield  {author} {\bibinfo {author} {\bibfnamefont {P.}~\bibnamefont
  {Glasserman}}\ and\ \bibinfo {author} {\bibfnamefont {Y.}~\bibnamefont
  {Wang}},\ }\bibfield  {title} {\enquote {\bibinfo {title} {Counterexamples in
  importance sampling for large deviations probabilities},}\ }\href@noop {}
  {\bibfield  {journal} {\bibinfo  {journal} {Ann. Appl. Prob.}\ }\textbf
  {\bibinfo {volume} {7}},\ \bibinfo {pages} {731--746} (\bibinfo {year}
  {1997})}\BibitemShut {NoStop}%
\bibitem [{\citenamefont {Puhalskii}\ and\ \citenamefont
  {Spokoiny}(1998)}]{puhalskii1998}%
  \BibitemOpen
  \bibfield  {author} {\bibinfo {author} {\bibfnamefont {A.}~\bibnamefont
  {Puhalskii}}\ and\ \bibinfo {author} {\bibfnamefont {V.}~\bibnamefont
  {Spokoiny}},\ }\bibfield  {title} {\enquote {\bibinfo {title} {On
  large-deviation efficiency in statistical inference},}\ }\href {\doibase
  10.2307/3318751} {\bibfield  {journal} {\bibinfo  {journal} {Bernoulli}\
  }\textbf {\bibinfo {volume} {4}},\ \bibinfo {pages} {203--272} (\bibinfo
  {year} {1998})}\BibitemShut {NoStop}%
\bibitem [{\citenamefont {Ellis}\ \emph {et~al.}(2000)\citenamefont {Ellis},
  \citenamefont {Haven},\ and\ \citenamefont {Turkington}}]{ellis2000}%
  \BibitemOpen
  \bibfield  {author} {\bibinfo {author} {\bibfnamefont {R.~S.}\ \bibnamefont
  {Ellis}}, \bibinfo {author} {\bibfnamefont {K.}~\bibnamefont {Haven}}, \ and\
  \bibinfo {author} {\bibfnamefont {B.}~\bibnamefont {Turkington}},\ }\bibfield
   {title} {\enquote {\bibinfo {title} {Large deviation principles and complete
  equivalence and nonequivalence results for pure and mixed ensembles},}\
  }\href {\doibase 10.1023/A:1026446225804} {\bibfield  {journal} {\bibinfo
  {journal} {J. Stat. Phys.}\ }\textbf {\bibinfo {volume} {101}},\ \bibinfo
  {pages} {999--1064} (\bibinfo {year} {2000})}\BibitemShut {NoStop}%
\bibitem [{\citenamefont {Varadhan}(1966)}]{varadhan1966}%
  \BibitemOpen
  \bibfield  {author} {\bibinfo {author} {\bibfnamefont {S.~R.~S.}\
  \bibnamefont {Varadhan}},\ }\bibfield  {title} {\enquote {\bibinfo {title}
  {Asymptotic probabilities and differential equations},}\ }\href {\doibase
  10.1002/cpa.3160190303} {\bibfield  {journal} {\bibinfo  {journal} {Comm.
  Pure Appl. Math.}\ }\textbf {\bibinfo {volume} {19}},\ \bibinfo {pages}
  {261--286} (\bibinfo {year} {1966})}\BibitemShut {NoStop}%
\bibitem [{\citenamefont {Touchette}(2011)}]{touchette2011}%
  \BibitemOpen
  \bibfield  {author} {\bibinfo {author} {\bibfnamefont {H.}~\bibnamefont
  {Touchette}},\ }\bibfield  {title} {\enquote {\bibinfo {title} {A basic
  introduction to large deviations: Theory, applications, simulations},}\ }in\
  \href@noop {} {\emph {\bibinfo {booktitle} {Modern Computational Science 11:
  Lecture Notes from the 3rd International Oldenburg Summer School}}},\
  \bibinfo {editor} {edited by\ \bibinfo {editor} {\bibfnamefont
  {R.}~\bibnamefont {Leidl}}\ and\ \bibinfo {editor} {\bibfnamefont {A.~K.}\
  \bibnamefont {Hartmann}}}\ (\bibinfo  {publisher} {BIS-Verlag der Carl von
  Ossietzky Universit\"at Oldenburg},\ \bibinfo {address} {Oldenburg},\
  \bibinfo {year} {2011})\BibitemShut {NoStop}%
\bibitem [{\citenamefont {Chetrite}\ and\ \citenamefont
  {Touchette}(2015{\natexlab{a}})}]{chetrite2014}%
  \BibitemOpen
  \bibfield  {author} {\bibinfo {author} {\bibfnamefont {R.}~\bibnamefont
  {Chetrite}}\ and\ \bibinfo {author} {\bibfnamefont {H.}~\bibnamefont
  {Touchette}},\ }\bibfield  {title} {\enquote {\bibinfo {title}
  {Nonequilibrium {M}arkov processes conditioned on large deviations},}\ }\href
  {\doibase 10.1007/s00023-014-0375-8} {\bibfield  {journal} {\bibinfo
  {journal} {Ann. Henri Poincar\'e}\ }\textbf {\bibinfo {volume} {16}},\
  \bibinfo {pages} {2005--2057} (\bibinfo {year}
  {2015}{\natexlab{a}})}\BibitemShut {NoStop}%
\bibitem [{\citenamefont {Harris}\ and\ \citenamefont
  {Touchette}(2009)}]{harris2009}%
  \BibitemOpen
  \bibfield  {author} {\bibinfo {author} {\bibfnamefont {R.~J.}\ \bibnamefont
  {Harris}}\ and\ \bibinfo {author} {\bibfnamefont {H.}~\bibnamefont
  {Touchette}},\ }\bibfield  {title} {\enquote {\bibinfo {title} {Current
  fluctuations in stochastic systems with long-range memory},}\ }\href
  {\doibase 10.1088/1751-8113/42/34/342001} {\bibfield  {journal} {\bibinfo
  {journal} {J. Phys. A: Math. Theor.}\ }\textbf {\bibinfo {volume} {42}},\
  \bibinfo {pages} {342001} (\bibinfo {year} {2009})}\BibitemShut {NoStop}%
\bibitem [{\citenamefont {K\"uchler}\ and\ \citenamefont
  {S\={o}rensen}(1998)}]{kuchler1998}%
  \BibitemOpen
  \bibfield  {author} {\bibinfo {author} {\bibfnamefont {U.}~\bibnamefont
  {K\"uchler}}\ and\ \bibinfo {author} {\bibfnamefont {M.}~\bibnamefont
  {S\={o}rensen}},\ }\bibfield  {title} {\enquote {\bibinfo {title} {On
  exponential families of {M}arkov processes},}\ }\href {\doibase
  http://dx.doi.org/10.1016/S0378-3758(97)00072-4} {\bibfield  {journal}
  {\bibinfo  {journal} {J. Stat. Planning and Inference}\ }\textbf {\bibinfo
  {volume} {66}},\ \bibinfo {pages} {3--19} (\bibinfo {year}
  {1998})}\BibitemShut {NoStop}%
\bibitem [{\citenamefont {Stroock}\ and\ \citenamefont
  {Varadhan}(1979)}]{stroock1979}%
  \BibitemOpen
  \bibfield  {author} {\bibinfo {author} {\bibfnamefont {D.~W.}\ \bibnamefont
  {Stroock}}\ and\ \bibinfo {author} {\bibfnamefont {S.~R.~S.}\ \bibnamefont
  {Varadhan}},\ }\href@noop {} {\emph {\bibinfo {title} {Multidimensional
  Diffusion Processes}}}\ (\bibinfo  {publisher} {Springer},\ \bibinfo
  {address} {New York},\ \bibinfo {year} {1979})\BibitemShut {NoStop}%
\bibitem [{\citenamefont {Chetrite}\ and\ \citenamefont
  {Touchette}(2015{\natexlab{b}})}]{chetrite2015}%
  \BibitemOpen
  \bibfield  {author} {\bibinfo {author} {\bibfnamefont {R.}~\bibnamefont
  {Chetrite}}\ and\ \bibinfo {author} {\bibfnamefont {H.}~\bibnamefont
  {Touchette}},\ }\bibfield  {title} {\enquote {\bibinfo {title} {Variational
  and optimal control representations of conditioned and driven processes},}\
  }\href {\doibase 10.1088/1742-5468/2015/12/P12001} {\bibfield  {journal}
  {\bibinfo  {journal} {J. Stat. Mech.}\ }\textbf {\bibinfo {volume} {2015}},\
  \bibinfo {pages} {P12001} (\bibinfo {year} {2015}{\natexlab{b}})}\BibitemShut
  {NoStop}%
\bibitem [{\citenamefont {Rockafellar}(1970)}]{rockafellar1970}%
  \BibitemOpen
  \bibfield  {author} {\bibinfo {author} {\bibfnamefont {R.~T.}\ \bibnamefont
  {Rockafellar}},\ }\href@noop {} {\emph {\bibinfo {title} {Convex Analysis}}}\
  (\bibinfo  {publisher} {Princeton University Press},\ \bibinfo {address}
  {Princeton},\ \bibinfo {year} {1970})\BibitemShut {NoStop}%
\bibitem [{\citenamefont {Rockafellar}\ and\ \citenamefont
  {Wets}(1988)}]{rockafellar1988}%
  \BibitemOpen
  \bibfield  {author} {\bibinfo {author} {\bibfnamefont {R.~T.}\ \bibnamefont
  {Rockafellar}}\ and\ \bibinfo {author} {\bibfnamefont {R.~J.-B.}\
  \bibnamefont {Wets}},\ }\href@noop {} {\emph {\bibinfo {title} {Variational
  Analysis}}},\ Vol.\ \bibinfo {volume} {317}\ (\bibinfo  {publisher}
  {Springer},\ \bibinfo {address} {New York},\ \bibinfo {year}
  {1988})\BibitemShut {NoStop}%
\bibitem [{\citenamefont {Borwein}\ and\ \citenamefont
  {Lewis}(2006)}]{borwein2006}%
  \BibitemOpen
  \bibfield  {author} {\bibinfo {author} {\bibfnamefont {J.}~\bibnamefont
  {Borwein}}\ and\ \bibinfo {author} {\bibfnamefont {A.}~\bibnamefont
  {Lewis}},\ }\href@noop {} {\emph {\bibinfo {title} {Convex Analysis and
  Nonlinear Optimization}}},\ \bibinfo {edition} {2nd}\ ed.\ (\bibinfo
  {publisher} {Springer},\ \bibinfo {address} {New York},\ \bibinfo {year}
  {2006})\BibitemShut {NoStop}%
\end{thebibliography}%

\end{document}